\documentclass[a4paper,USenglish,cleveref, autoref, thm-restate]{lipics-v2021}
\usepackage{mydefs}



\bibliographystyle{plainurl}

\title{Who Wins the Multi-Structural Game?}


\author{Ronald {Fagin}}{IBM Research Almaden, 650 Harry Road, San Jose CA 95120, USA \and \url{https://research.ibm.com/people/ron-fagin} }{fagin@us.ibm.com}{}{}

\author{Neil {Immerman}}{University of Massachusetts Amherst, 140 Governors Drive, Amherst MA 01002, USA \and \url{https://people.cs.umass.edu/~immerman/} }{immerman@umass.edu}{https://orcid.org/0000-0001-6609-5952}{}

\author{Phokion {Kolaitis}}{IBM Research Almaden, 650 Harry Road, San Jose CA 95120, USA \and University of California Santa Cruz, 1156 High Street, Santa Cruz CA 95064, USA \and \url{https://users.soe.ucsc.edu/~kolaitis/} }{kolaitis@ibm.com}{https://orcid.org/0000-0002-8407-8563}{}

\author{Jonathan {Lenchner}}{IBM Research Yorktown Heights, 1101 Kitchawan Road, Yorktown Heights NY 10598, USA \and \url{https://research.ibm.com/people/jonathan-lenchner} }{lenchner@us.ibm.com}{https://orcid.org/0000-0002-9427-8470}{}

\author{Rik {Sengupta}}{IBM Research Cambridge, 314 Main Street, Cambridge MA 02142, USA \and \url{https://people.cs.umass.edu/~rsengupta/} }{rik@ibm.com}{https://orcid.org/0000-0002-9238-5408}{}

\authorrunning{R. Fagin, N. Immerman, P. Kolaitis, J. Lenchner and R. Sengupta} 

\Copyright{Ronald Fagin, Neil Immerman, Phokion Kolaitis, Jonathan Lenchner and Rik Sengupta}

\ccsdesc[500]{Theory of computation~Finite Model Theory}

\begin{CCSXML}
<ccs2012>
<concept>
<concept_id>10003752.10003777.10003779</concept_id>
<concept_desc>Theory of computation~Problems, reductions and completeness</concept_desc>
<concept_significance>500</concept_significance>
</concept>
</ccs2012>
\end{CCSXML}

\ccsdesc[500]{Theory of computation~Problems, reductions and completeness}

\keywords{combinatorial games, first-order logic, quantifiers, computational complexity}

\category{}

\relatedversion{} 


\acknowledgements{We wish to thank Marco Carmosino for invaluable discussions during this work.}

\nolinenumbers 

\EventEditors{John Q. Open and Joan R. Access}
\EventNoEds{2}
\EventLongTitle{42nd Conference on Very Important Topics (CVIT 2016)}
\EventShortTitle{CVIT 2016}
\EventAcronym{CVIT}
\EventYear{2016}
\EventDate{December 24--27, 2016}
\EventLocation{Little Whinging, United Kingdom}
\EventLogo{}
\SeriesVolume{42}
\ArticleNo{23}

\begin{document}

\maketitle

\begin{abstract}
Combinatorial games  played between two players,  called Spoiler and Duplicator,  have often been used to capture syntactic properties of formal logical languages. For instance, the widely used Ehrenfeucht-Fra\"{i}ss\'{e} (EF) game captures the syntactic measure of quantifier rank of first-order formulas. For every such game, there is an associated natural decision problem: ``given an instance of the game, does Spoiler win the game on that instance?'' For EF games, this problem was shown to be PSPACE-complete by Pezzoli in 1998. In this present paper, we show that the same problem for the \emph{multi-structural} (MS) games of recent interest is PSPACE-hard, but contained in NEXPTIME. In the process, we also resolve an open problem posed by Pezzoli about the dependence of the hardness results for EF games on the arity of the schema under consideration. Our techniques combine adaptations of Pezzoli's constructions together with insights from the theory of inapproximability of optimization problems, as well as the recently developed technique of parallel play for MS games.
\end{abstract}

\section{Introduction}
\label{sec:intro}

Combinatorial games are one of the main tools for studying the expressive power of logics on classes of structures. In particular, these games are indispensable in the study of logics on classes of finite structures, since the compactness theorem and other basic results of mathematical logic fail in the finite realm. Examples of such combinatorial games include the {\EF} game \cite{fraisse1950nouvelle,ehrenfeucht1961application}, the pebble game \cite{DBLP:journals/jsyml/Barwise77,DBLP:journals/jcss/Immerman82}, and the existential pebble game \cite{DBLP:journals/jcss/KolaitisV95}. The {\EF} game captures the quantifier depth of first-order formulas,
and thus it is used to analyze the expressive power of first-order logic. The pebble game captures the number of distinct variables in finite-variable infinitary logic, and thus it is used to analyze the expressive power of these logics,
as well as the expressive power of least fixed-point logic since, on classes of finite structures,  the latter is a fragment of finite-variable infinitary logic. Finally, the existential pebble game captures the existential fragment of finite-variable infinitary logic,
and thus it can be used to analyze the expressive power of Datalog on finite structures.
Detailed expositions of these games are in the books \cite{DBLP:series/txtcs/GradelKLMSVVW07,immermanbook,DBLP:books/sp/Libkin04}.

Each of these games is not a single game, but rather a family of games parameterized by a positive integer. Concretely, {\EF} games are parameterized by the number $m$ of moves in the game, while pebble games and existential pebble games are parameterized by the number $k$ of pebbles. Furthermore, each such parameterized game is played by two players, called Spoiler and Duplicator, on a pair $(\bA, \bB)$ of structures. For each of these games, it can be proved that Duplicator  wins the game at hand on $(\bA, \bB)$ if and only if the structures $\bA$ and $\bB$ cannot  be distinguished by a sentence of a certain logic (this is the logic that the game is said to \emph{capture}).

Each of the three games mentioned above gives rise to a natural algorithmic problem; namely, given an input to the game, determine the winner of the game on the given input. In fact, for each fixed schema $\tau$, there are two versions of this algorithmic problem, depending on whether or not the parameter associated with the game is part of the input. Specifically,  the {\EF} game gives rise to the following decision problem \winef: given two finite $\tau$-structures $\bA$ and $\bB$, and a positive integer $m$, does Spoiler win the $m$-move {\EF} game on $(\bA, \bB)$? Furthermore, for each fixed $m\geq 1$, the {\EF} game gives rise to the following decision problem $\winef_m$: given two $\tau$-structures $\bA$ and $\bB$, does Spoiler win the $m$-move {\EF} game on $(\bA, \bB)$?
In a similar manner, the pebble game gives rise to the decision problems {\winpb} and $\winpb_k$, for $k \geq 1$, while the existential pebble game gives rise to the decision problems $\winepb$ and $\winepb_k$, for $k\geq 1$. 

The state of affairs concerning the computational complexity of these decision problems is as follows. For the {\EF} game,  the results in \cite{fraisse1950nouvelle} imply that $\winef_m$ is in $\LOGSPACE$, for each $m\geq 1$. In contrast, Pezzoli \cite{Pezzoli98} showed that $\winef$ is a $\PSPACE$-complete problem. For the pebble game, Grohe \cite{GroheLk} showed that $\winpb_k$ is a $\PTIME$-complete problem, for each $k\geq 2$. The problem {\winpb} is in $\EXP$, but its exact computational complexity remains open. Finally, for the existential pebble game, Kolaitis and Panttaja \cite{DBLP:conf/csl/KolaitisP03} showed that 
$\winepb_k$ is a $\PTIME$-complete problem for each $k\geq 2$, while $\winepb$ is an $\EXP$-complete problem.

In this paper, we focus on a different combinatorial game, called the \emph{multi-structural} (\ms) game. This game was introduced by Immerman \cite{MScanon0}, who showed that the {\ms} game captures the number of quantifiers in first-order formulas. There are two main differences between the {\EF} game and the {\ms} game: first, the {\ms} game is played on a pair $(\cA, \cB)$ of \emph{sets} of structures, instead of being played on a pair $(\bA, \bB)$ of structures; second, in every round of the {\ms} game, Duplicator can make an arbitrary number of copies of structures on the side that she plays in, effectively simulating all possible 
responses simultaneously, and only has to exhibit her win on a single pair of structures at the end, one on either side.
Because of these differences, {\ms} games turn out to be significantly harder to analyze than {\ef} games. In fact, no substantive progress in the analysis of {\ms} games had been made prior to \cite{DBLP:conf/lics/FaginLR021}, which provided the impetus for further investigations of {\ms} games, including \cite{DBLP:conf/mfcs/FaginLVW22,DBLP:journals/lmcs/CarmosinoFIKLS24,parallelplay}. Furthermore, several different games that capture the number of quantifiers in first-order formulas (and hence, equivalent to {\ms} games) were studied in \cite{HL24}.

Here, we investigate the algorithmic problem of determining the winner of the {\ms} game. More formally, for each fixed schema $\tau$, we consider the decision problem \winms: given two finite sets $\cA$ and $\cB$ of finite $\tau$-structure, and a positive integer $m$, does Spoiler win the $m$-move {\ms} game on $(\cA, \cB)$?  In addition, for each fixed $m\geq 1$, we consider the decision problem $\winms_m$: given two finite sets $\cA$ and $\cB$ of finite $\tau$-structure, does Spoiler win the $m$-move {\ms} game on $(\cA, \cB)$?

We first observe that, for each fixed $m\geq 1$, the problem $\winms_m$ is in $\LOGSPACE$ (Proposition \ref{prop:winmsmupper}). We next explore the computational complexity of $\winms$, and establish an upper bound and a lower bound for this problem. For the upper bound, it is relatively straightforward to show that $\winms$ is in $\NEXP$, i.e., {\winms} is solvable in non-deterministic exponential time (Proposition \ref{prop:winmsupper}).

Establishing lower bounds for this problem, on the other hand, turns out to be a much more challenging task. Our main result (Theorem \ref{thm:winmslower}) asserts that {\winms} is a $\PSPACE$-hard problem, which implies that {\winms} is at least as hard as \winef.

Our approach involves combining an adaptation of the sophisticated gadgets from \cite{Pezzoli98} (which in turn were inspired by the gadgets introduced in Cai, F\"{u}rer, and Immerman \cite{CFI} to study the expressive power of first-order logic with counting) with two critical new ingredients: (i) the recently developed technique of \emph{parallel play} from \cite{parallelplay}, which enables us to simulate several parallel {\ms} games without wasting too many moves; and (ii) the inapproximability results for $\PSPACE$-hard problems in \cite{DBLP:journals/cjtcs/CondonFLS95}.
We exhibit a natural reduction from {\winms} to an optimization variant of \textsc{QBF} known as \textsc{Max-Q3SAT}, which is known to be $\PSPACE$-hard to approximate within a constant multiplicative factor. Furthermore, we show that our reduction holds even for directed graphs. In particular, along the way, we answer a question raised by Pezzoli in \cite{Pezzoli98} in the positive, namely, whether the $\PSPACE$-hardness of the {\EF} game holds for schemas in which each relation has arity at most two; Pezzoli \cite{Pezzoli98} established the $\PSPACE$-hardness of $\winef$ only for schemas containing a ternary relation symbol.

We leave open the problem of closing the gap between $\PSPACE$-hardness and containment in $\NEXP$. We note that our new techniques establishing the $\PSPACE$ lower bound do not carry over to $\NEXP$-hardness in a straightforward way. We conjecture that $\winms$ is a $\NEXP$-hard problem (and therefore $\NEXP$-complete as well). We hope that this conjecture will stimulate research in this fundamental algorithmic problem.
\section{Preliminaries and Background}
\label{sec:prelims}

A \emph{schema} is a tuple
$\tau =(R_1,\ldots,R_t,c_1,\ldots,c_s)$ of relation symbols $R_i$   and constant symbols $c_j$, where each $R_i$ has a designated positive integer as its arity. A  \emph{$\tau$-structure} is a tuple ${\bf A}=(A,R_1^\bA,\ldots,R_t^\bA,c_1^\bA,\ldots,c_s^\bA)$, where $A$ is a set, called the \emph{universe} of $\bA$, each $R_i^\bA$ is a relation on $A$ whose arity matches that of the relation symbol $R_i$, and each $c_j^\bA$ is an element of  $A$. We say that a structure $\bA$ is \emph{finite} if its universe is a finite set.
In what follows, we will denote $\tau$-structures in boldface (e.g., $\bA$, $\bB$), their universes in capital letters (e.g., $A$, $B$), and sets of pebbled $\tau$-structures (defined next) in calligraphic typeface (e.g., $\cA$, $\cB$). We adopt the terminology and notation of recent papers on multi-structural games (e.g., \cite{parallelplay}).

\subsection{Pebbled Structures}

 We assume that we have a palette $\cC = \{x_1, x_2, \ldots\}$ of \emph{colors}, and that for each color, we have infinitely many pebbles of that color available. A $\tau$-structure $\bA$ is \emph{pebbled} if some elements  in the universe $A$ of $\bA$ have pebbles on them, provided there is at most one pebble of each color on an element. In general, an element in  $A$ may have multiple pebbles (of different colors) on it.
 
If pebbles of colors $x_1, \ldots, x_k$ are placed  on (not necessarily  distinct) elements $a_1, \ldots, a_k$ in the universe of  a  $\tau$-structure $\bA$, the resulting pebbled $\tau$-structure is called  \emph{$k$-pebbled}, and is denoted  by $\langle \bA ~|~ a_1, \ldots, a_k\rangle$. Note that $\bA$ can be viewed as the $0$-pebbled structure $\langle \bA ~|~\rangle$. If the number of pebbles is clear from the context, we 
write $\langle \bA ~|~ \vec{a}\rangle$ for a pebbled structure.

We remark at this juncture that the colors are designated $x_1, \ldots, x_k$ deliberately: colored pebbles will turn out to correspond to variables; thus, when a pebble colored $x_i$ is placed  on an element  $a_i \in A$, we will implicitly have that $x_i$ is interpreted as the element $a_i$ on $\bA$.

We say that two pebbled structures $\langle \bA ~|~ a_1, \ldots, a_k\rangle$ and $\langle \bB ~|~ b_1, \ldots, b_\ell\rangle$ are a \emph{matching pair} if $k = \ell$, and the function $\iota : A \to B$ defined by:

\centerline{$\iota(a_i) = b_i$, for  $1 \leq i \leq k$  and      $\iota(c^\bA) = c^\bB$, for every constant $c$ in $\tau$,}
\noindent is an isomorphism between the substructures of $\bA$ and $\bB$ induced by $a_1,\ldots,a_k$ and $b_1,\dots,b_k$.

\subsection{Multi-Structural Games}

An instance of the multi-structural game consists of a natural number $m$, and two finite sets $\cA$ and $\cB$ of  pebbled finite $\tau$-structures, each pebbled with the \emph{same} subset $\{x_1, \ldots, x_k\} \subseteq \cC$ of colors.
The \emph{$m$-round multi-structural (MS) game on $(\cA, \cB)$} is a game played by two players, called Spoiler ($\bS$, he/him) and  Duplicator ($\bD$, she/her), over $m$ rounds. In a run of the game, the $i$-th round, where  $1 \leq i \leq m$, consists of two steps:
\begin{itemize}
    \item First, $\bS$ chooses either $\cA$ (the ``left side'') or $\cB$ (the ``right side''), and an unused color $x_t \in \cC$; he then places (``plays'' or ``moves'') a pebble of color $x_t$ on an element of \emph{every} pebbled structure in the chosen side.
    \item Next, $\bD$ makes as many copies as she wants of each pebbled structure on the other side, and then places a pebble of color $x_t$ on an element of each such copy.
\end{itemize}
$\bD$ wins this run of the game if at the end of round $m$, there is a pebbled structure in $\cA$ and a pebbled structure in $\cB$ forming a matching pair. Otherwise, $\bS$ wins the run. We say $\bS$ has a \emph{winning strategy} on an instance $\langle m, (\cA, \cB)\rangle$ if he has a way to respond to the moves made by $\bD$ so that he wins the resulting run of the game, regardless of the moves made by $\bD$. 
We define a winning strategy for $\bD$ analogously.  It is easy to see that the game is \emph{determined}, i.e., exactly one of $\bS$ and $\bD$ has a winning strategy on a given instance. Therefore, every  instance of the game can be classified as either \emph{$\bS$-winnable} or \emph{$\bD$-winnable}.

We adopt the convention of always calling the two sides $\cA$ and $\cB$, even though they change over the course of a run of the game, since the Duplicator adds  new copies and since more pebbles are places on each pebbled structure.

For sets $\cA$ and $\cB$ described as above and with all structures in $\cA \cup \cB$ pebbled with the same subset $\{x_1, \ldots, x_k\} \subseteq \cC$ of colors, we say that a FO-formula $\varphi$ \emph{separates} $(\cA, \cB)$ if 
\begin{itemize}
    \item $\mathsf{free}(\varphi) = \{x_1, \ldots, x_k\}$   (i.e., $\{x_1,\ldots,x_k\}$ is the set of the free variables of $\varphi$);
    \item  $\bA \models \varphi[a_1/x_1, \ldots, a_k/x_k]$,   for every $\langle \bA ~|~ a_1, \ldots, a_k\rangle \in \cA$; 
    \item $\bB \models \lnot\varphi[b_1/x_1, \ldots, b_k/x_k]$, for  every $\langle\bB ~|~ b_1, \ldots, b_k\rangle \in \cB$.
\end{itemize}
We call $\varphi$ a \emph{separating sentence} for $(\cA, \cB)$ if $\cA$ and $\cB$ are sets of unpebbled structures.

The following key theorem \cite{MScanon0, DBLP:journals/lmcs/FaginLRV25} yields the precise sense in which multi-structural games capture the number of quantifiers of FO-formulas.


\begin{theorem}[Fundamental Theorem of MS Games, \cite{MScanon0,DBLP:journals/lmcs/FaginLRV25}]\label{thm:MSfundamental}
Let  $\cA$ and $\cB$ be two finite sets of  pebbled finite $\tau$-structures, each pebbled with the same subset  of colors. Then the following statements are equivalent:
\begin{enumerate}
\item
 $\bS$ has a winning strategy in the $m$-round MS game on $(\cA, \cB)$.
 \item There is a separating formula for $(\cA, \cB)$ with at most $m$ quantifiers.
 \end{enumerate}
\end{theorem}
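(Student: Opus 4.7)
The plan is to prove the two directions separately with different inductive schemes. For $(1)\Rightarrow(2)$ I would induct on the number of rounds $m$, whereas for $(2)\Rightarrow(1)$ I would first put $\varphi$ in negation normal form and then induct on the structure of $\varphi$. The common base case, $m = 0$ (equivalently, $\varphi$ is a literal), is handled by the observation that no matching pair exists in $\cA \times \cB$ iff some quantifier-free formula separates $(\cA, \cB)$: one direction builds the separator as the disjunction over $\cA$ of the atomic type of each pebbled structure in the free variables $x_1,\ldots,x_k$, and the other uses that a matching pair must agree on every literal.

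For $(1)\Rightarrow(2)$, suppose Spoiler's winning first move is on the left with a fresh color $x_t$ and a selection function $f$ assigning an element $f(\bA,\vec{a}) \in A$ to each $\langle \bA \mid \vec{a}\rangle \in \cA$. Let $\cA^f$ be the resulting left side and let $\cB^*$ be the \emph{maximal} Duplicator response, namely one copy per choice of element for each right-side structure. Because Spoiler's strategy wins against every Duplicator response, it wins against $\cB^*$ in particular, so Spoiler wins the $(m-1)$-round game on $(\cA^f, \cB^*)$. The inductive hypothesis produces a separating formula $\psi(x_1,\ldots,x_k,x_t)$ with at most $m-1$ quantifiers, and then $\exists x_t\, \psi$ separates $(\cA,\cB)$ using at most $m$ quantifiers. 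If Spoiler's first winning move is on the right, the analogous argument yields $\forall x_t\, \psi$.

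For $(2)\Rightarrow(1)$, the literal case reduces to the base case, since a literal that separates $(\cA,\cB)$ forbids matching pairs. The cases $\varphi = \exists x_t\, \psi$ and $\varphi = \forall x_t\, \psi$ are handled directly: the witnesses (respectively, the universal pebbles) across all structures in $\cA$ (respectively $\cB$) define Spoiler's first move, after which $\psi$ separates the resulting pair with one fewer quantifier, and the structural hypothesis supplies his winning continuation.

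The main obstacle is the connective case. For $\varphi = \psi_1 \land \psi_2$ (the $\lor$ case is symmetric, partitioning $\cA$ instead of $\cB$), I would partition $\cB = \cB_1 \sqcup \cB_2$ so that structures in $\cB_i$ falsify $\psi_i$. By the structural hypothesis Spoiler wins the $m_i$-round game on $(\cA, \cB_i)$, where $m_i$ is the number of quantifiers of $\psi_i$ and $m_1 + m_2 \leq m$. Spoiler runs the $(\cA,\cB_1)$-strategy for $m_1$ rounds, filling in arbitrary moves on $\cB_2$-descended structures whenever he plays on the right; this guarantees that no matching pair survives between the evolved left side and the $\cB_1$-descended right side. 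The key observation is that $\psi_2$ still separates the evolved left side from the evolved $\cB_2$-descended right side, because the truth value of $\psi_2$ depends only on the pebbles $x_1,\ldots,x_k$ inherited from the parents in $\cA$ and $\cB_2$ respectively. Applying the structural hypothesis again to $\psi_2$ on this evolved pair produces a Spoiler strategy winning the next $m_2$ rounds against the $\cB_2$-descended side, while the $\cB_1$-descended side remains matching-pair-free throughout the second phase because placing additional pebbles can never create a matching pair where none existed before. Splicing the two phases yields a winning $m$-round strategy for Spoiler on $(\cA, \cB_1 \cup \cB_2)$.
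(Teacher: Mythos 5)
Your proposal is correct. Note that the paper itself does not prove Theorem~\ref{thm:MSfundamental}; it is quoted with citations to \cite{MScanon0,DBLP:journals/lmcs/FaginLRV25}, and your argument is essentially a faithful reconstruction of the standard proof given there: quantifiers correspond to moves via the maximal (oblivious) Duplicator response in the $(1)\Rightarrow(2)$ direction, and the $(2)\Rightarrow(1)$ direction is a structural induction in which $\exists/\forall$ become left/right moves and the Boolean connectives are handled by partitioning one side and splicing the two Spoiler strategies sequentially. You correctly identified and resolved the one genuinely delicate point, the $\land/\lor$ case: the observation that the sub-formula's truth value depends only on the originally pebbled colors, so it still separates the evolved pair after phase one, together with the fact that adding pebbles never creates a matching pair, is exactly what makes the two-phase composition sound (and is why quantifier counts add across connectives). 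The remaining issues are benign technicalities you should flag for a full write-up: rectify the formula (rename bound variables apart) so each quantified variable is fresh, note that NNF conversion preserves the quantifier count, allow ``separates'' with free variables a subset of the pebbled colors in the inductive statement, and observe that once all matching pairs are destroyed Spoiler stays winning through any unused rounds.
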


The {\ms} game can also be regarded as  a single-player game, since $\bD$ has a clear optimal \emph{oblivious} strategy: in each round, $\bD$ can make enough copies of each pebbled structure on her side in order to play all possible responses (one on each copy). In $\bD$-winnable instances, it can be easily checked that the oblivious strategy is a  winning strategy.

The following simple observation helps keep our proofs clean.

\begin{lemma}\label{lem:discard}
At any point during a run of an {\ms} game, if there is a pebbled structure $\langle\bA ~|~ \vec{a}\rangle$ not forming a matching pair with any pebbled structure on the other side, then $\langle\bA ~|~ \vec{a}\rangle$ can be discarded from the game with no effect on the final outcome of the game.
\end{lemma}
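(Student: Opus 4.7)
The plan is to isolate one monotonicity fact about matching pairs and then run a short strategy-stealing argument in both directions of the game.

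The key sub-claim, which I will call \emph{persistence}, is this: if $\langle \bA ~|~ \vec{a}\rangle$ currently forms no matching pair with any structure on the opposing side, then no descendant of $\langle \bA ~|~ \vec{a}\rangle$ forms a matching pair with any descendant of any currently-opposing structure at any later round. I would prove this by contradiction. Any descendant of $\langle \bA ~|~ \vec{a}\rangle$ has the form $\langle \bA ~|~ \vec{a}, \vec{a}'\rangle$, and any future structure on the opposing side has the form $\langle \bB ~|~ \vec{b}, \vec{b}'\rangle$ for some $\langle \bB ~|~ \vec{b}\rangle$ currently on that side, since $\bD$'s copies and subsequent $\bS$-moves only extend, never replace, their progenitors. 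If $\langle \bA ~|~ \vec{a}, \vec{a}'\rangle$ and $\langle \bB ~|~ \vec{b}, \vec{b}'\rangle$ formed a matching pair, restricting the witnessing isomorphism to the pebbles $\vec{a} \mapsto \vec{b}$ would exhibit $\langle \bA ~|~ \vec{a}\rangle$ and $\langle \bB ~|~ \vec{b}\rangle$ as a matching pair, contradicting the hypothesis.

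Using persistence together with the determinacy of the {\ms} game noted just before the lemma, a case split on the winner of the original game $G$ (with $\langle \bA ~|~ \vec{a}\rangle$ present) versus the reduced game $G'$ (after discarding it) finishes the argument. If $\bD$ wins $G$, she wins $G'$ by replaying the same strategy and ignoring the deleted structure: her $G$-strategy produced a matching pair at the end, and by persistence that pair cannot involve $\langle \bA ~|~ \vec{a}\rangle$ or any of its descendants, so it survives in $G'$. Symmetrically, if $\bS$ wins $G$, he wins $G'$ by lifting any $\bD$-strategy from $G'$ back to $G$ via an arbitrary default $\bS$-pebble on $\langle \bA ~|~ \vec{a}\rangle$ each round; no matching pair exists in $G$ under $\bS$'s winning strategy, and persistence forbids pairs involving descendants of $\langle \bA ~|~ \vec{a}\rangle$, so none exists in $G'$ either.

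The only step requiring real care is persistence itself, specifically the ``moving target'' aspect of the opposing side, since $\bD$ may create unboundedly many new copies each round. The restriction-of-isomorphism argument handles this uniformly, because every copy inherits the pebble history of its progenitor, so an isomorphism on an enlarged pebble set always restricts to one on the original pebble set. Everything else reduces to the observation that $\bS$'s obligation to pebble every structure on a chosen side is the sole game-theoretic role of $\langle \bA ~|~ \vec{a}\rangle$, which any default choice vacuously discharges without interfering with the play on the surviving structures.
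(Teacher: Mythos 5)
The paper states this lemma without proof, presenting it as a ``simple observation,'' and your proposal is a correct filling-in of exactly the intended reasoning: the persistence claim (a matching pair between descendants restricts, on the older pebbles and constants, to a matching pair between the progenitors, contradicting the hypothesis) together with a strategy-transfer argument and determinacy. Two cosmetic remarks: in the $\bD$-wins direction you need the same default-placement device you invoke for $\bS$ (her strategy for the original game expects $\bS$'s pebble on the discarded structure and her own copies of it, so she simulates an imagined run and plays its restriction to the surviving structures), and in the $\bS$-wins direction persistence is not actually needed, since the absence of any matching pair in the larger game trivially implies its absence among the surviving structures. Neither point affects correctness.
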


\subsection{Parallel Play}

 The technique of \emph{parallel play}, introduced in \cite{parallelplay}, allows the Spoiler  to partition an MS game into MS sub-games corresponding to distinct isomorphism classes (on the pebbles played so far), and then play different {\ms} games on those isomorphism classes in parallel, thereby potentially saving himself a large number of redundant moves that he might have needed with a na\"{i}ve strategy. The basic idea is straightforward: suppose Spoiler is faced with a game on $(\cA, \cB)$, where $\cA = \cA_1 \sqcup \cA_2$,   $\cB = \cB_1 \sqcup \cB_2$, and the followin hold:
\begin{itemize}
    \item no pebbled structure in $\cA_1$ forms a matching pair with a pebbled structure in $\cB_2$, and
    \item no pebbled structure in $\cA_2$ forms a matching pair with a pebbled structure in $\cB_1$.
\end{itemize}
Suppose further that Spoiler can win the $r$-round {\ms} game on $(\cA_1, \cB_1)$, and he can win the $r$-round {\ms} game on $(\cA_2, \cB_2)$, \emph{following the same sequence of sides to play on for these $r$ rounds}. Then, the principle of parallel play asserts that he can win the $r$-round {\ms} game on $(\cA, \cB)$ as well, just by following his strategies on the two sub-games simultaneously. The fact that he follows the same sequence of sides to play on for the $r$ rounds means that there is never any ambiguity about which side he needs to play on in the (larger) game.

Parallel play turned out to be a powerful technique to save redundant moves for Spoiler. Using Theorem \ref{thm:MSfundamental}, this led to showing  in \cite{parallelplay} that many natural FO-properties on \emph{ordered} structures can be expressed using exponentially fewer quantifiers than the na\"{i}ve upper bound.

Since we will use parallel play in this paper, we formally state it as a lemma below.

\begin{restatable}[Parallel Play Lemma, \cite{parallelplay}]{lemma}{parallelplay}\label{lem:parallelplay}
Let $\cA$ and $\cB$ be two sets of pebbled structures, and fix $r \in \mathbb{N}$. Suppose that $\cA$ and $\cB$ can be partitioned as $\cA = \cA_1 \sqcup \cA_2$ and $\cB = \cB_1 \sqcup \cB_2$ in such a way  that $\bS$ has a winning strategy for the $r$-round MS game on $(\cA_i,\cB_i)$, where $i=1,2$, satisfying the following conditions:
\begin{enumerate}
\item Both strategies follow the same sequence $(S_1, \ldots, S_r)$ of left or right sides that Spoiler plays on, where $S_i \in \{L, R\}$ for $1 \leq i \leq r$.
\item At the end of the sub-games, both of the following are true:
\begin{itemize}
    \item There do not exist pebbled structures from $\cA_1$ and $\cB_2$ forming a matching pair.
    \item There do not exist pebbled structures from $\cA_2$ and $\cB_1$ forming a matching pair.
\end{itemize}
\end{enumerate}
Then $\bS$ wins the $r$-round {\ms} game on $(\cA,\cB)$ with the same sequence $(S_1, \ldots, S_r)$ of sides he plays on.
\end{restatable}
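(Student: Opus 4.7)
The plan is to have Spoiler execute both winning sub-strategies simultaneously, coordinating them along the partitions $\cA = \cA_1 \sqcup \cA_2$ and $\cB = \cB_1 \sqcup \cB_2$. Because both sub-strategies prescribe playing on the same side $S_i \in \{L, R\}$ in round $i$, there is no conflict about where to move: when round $i$ begins, Spoiler plays on side $S_i$, and on each pebbled structure present on that side he places his pebble according to whichever of the two sub-strategies governs the \emph{lineage} of that structure. By lineage I mean the following: every pebbled structure that appears in the combined game descends, through repeated Duplicator copying and pebble placement, from a unique round-$0$ structure in $\cA_1 \sqcup \cA_2 \sqcup \cB_1 \sqcup \cB_2$, and that round-$0$ ancestor is its lineage. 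Lineage is stable across rounds, so Spoiler's combined strategy is well-defined.

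Next I would suppose for contradiction that after round $r$ there exist pebbled structures $\langle \bA^* \mid \vec{a}^*\rangle$ in $\cA$ and $\langle \bB^* \mid \vec{b}^*\rangle$ in $\cB$ forming a matching pair, and let $\langle \bA \mid \vec{a}\rangle \in \cA_i$ and $\langle \bB \mid \vec{b}\rangle \in \cB_j$ be their respective round-$0$ ancestors. A useful auxiliary observation is a monotonicity property of matching pairs: if two pebbled structures are not a matching pair, then no extensions of them obtained by pebbling additional elements with fresh colors form a matching pair either, since the failure of the induced isomorphism on the substructure generated by the existing pebbles is inherited by any extension. The argument then splits on whether $i = j$.

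In the cross case $i \neq j$, the hypothesis of the lemma says that $\langle \bA \mid \vec{a}\rangle$ and $\langle \bB \mid \vec{b}\rangle$ are not a matching pair, so by monotonicity neither are their descendants $\langle \bA^* \mid \vec{a}^*\rangle$ and $\langle \bB^* \mid \vec{b}^*\rangle$, a contradiction. In the same-side case $i = j$, I would argue that the restriction of the combined run to lineage-$i$ structures constitutes a legal run of the $r$-round MS game on $(\cA_i, \cB_i)$ in which Spoiler follows his assumed winning strategy: he plays on the prescribed sequence $(S_1, \ldots, S_r)$ of sides, his moves on lineage-$i$ structures are exactly those dictated by the $i$-th sub-strategy by construction, and Duplicator's activity on those structures is a legal Duplicator response in that sub-game. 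The assumed winning strategy then rules out any matching pair among lineage-$i$ descendants at round $r$, contradicting the existence of $(\langle \bA^* \mid \vec{a}^*\rangle, \langle \bB^* \mid \vec{b}^*\rangle)$.

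The main obstacle I expect is justifying the projection in the same-side case carefully: one must verify that Duplicator's moves on lineage-$i$ structures in the combined game really do form a legitimate Duplicator response in the isolated sub-game, independently of what she is simultaneously doing on lineage-$(3-i)$ structures. This is so because Duplicator's legal moves consist only of copying her-side structures and placing pebbles on them, with no constraint that couples the two lineages, so the two projected games can be analyzed independently. Once this is in place, both cases close by contradiction and Spoiler wins the combined game with the shared sequence of sides.
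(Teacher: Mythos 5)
Your proposal is correct and follows essentially the same route as the paper, which only sketches this argument (run the two winning sub-strategies in parallel, with the common side sequence removing any ambiguity about where Spoiler plays) and defers details to the cited reference; your lineage bookkeeping and the projection of the combined run onto each sub-game are exactly the missing details, and the case split on whether a putative final matching pair is cross-lineage or same-lineage closes the argument. One small imprecision: in the cross case you cite the hypothesis as saying the round-$0$ ancestors in $\cA_1$ and $\cB_2$ are not a matching pair, whereas condition 2 of the lemma as stated concerns structures at the \emph{end} of the sub-games; since your projection argument identifies the final lineage-$1$ and lineage-$2$ structures in the combined run with end-of-sub-game structures, condition 2 applies to them directly and your monotonicity observation is not needed there (it is, rather, how one typically \emph{verifies} condition 2 from an initial non-matching check, as in the paper's applications of the lemma).
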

\section{The Upper Bound}
\label{sec:upper}

Suppose that the schema $\tau=(R_1,\ldots,R_t,c_1,\ldots,c_s)$ is such that the maximum arity of the relation symbols $R_1,\ldots,R_t$ is $r$,  where  $r\geq 2$.
 We start with a lemma, which establishes an upper bound on the number of FO-sentences over $\tau$ (up to logical equivalence) with $m$ quantifiers.

\begin{restatable}{lemma}{countsentences}\label{lem:countsentences}
Let  $\tau=(R_1,\ldots,R_t,c_1,\ldots,c_s)$ be a schema such that the maximum arity of the relation symbols in $\tau$ is $r$, where  $r\geq 2$. Then, up to logical equivalence, the number of \emph{FO}-sentences over $\tau$  with $m$ quantifiers is at most $2^{m + 2^{(t+1)(m + s)^r}}$, and each such sentence has a quantifier-free part with at most $(t + 1)(m + s)^r\cdot 2^{(t + 1)(m + s)^r}$ (positive or negative) atomic formulas.
\end{restatable}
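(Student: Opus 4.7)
I would establish the bound by the standard counting argument based on prenex normal form.

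First, every FO-sentence with $m$ quantifiers is logically equivalent to one in prenex normal form using variables drawn from a fixed pool $\{x_1,\ldots,x_m\}$ together with the constants $c_1,\ldots,c_s$. The total number of \emph{terms} available to build atomic formulas is therefore at most $m+s$. Counting atoms: each relation symbol $R_i$ has arity at most $r$, so it gives at most $(m+s)^r$ atomic formulas; equality contributes at most $(m+s)^2 \leq (m+s)^r$ (using $r \geq 2$). Treating equality as a $(t+1)$-st binary relation, the total number $N$ of distinct atomic formulas is bounded by $N \leq (t+1)(m+s)^r$.

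Next I bound the quantifier-free part. Up to logical equivalence, a quantifier-free formula over these $N$ atoms is a Boolean function $\{0,1\}^N \to \{0,1\}$, and there are exactly $2^{2^N}$ such functions. To obtain the bound on the number of atomic occurrences, I would put each such Boolean function in disjunctive normal form: at most $2^N$ conjunctive clauses, each containing at most $N$ literals (one per atom). This yields a quantifier-free part with at most $N \cdot 2^N \leq (t+1)(m+s)^r \cdot 2^{(t+1)(m+s)^r}$ positive or negative atomic formulas, proving the second part of the lemma.

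Finally, for the first part, I count prenex sentences. A prenex prefix of length $m$ over the fixed variable sequence $x_1,\ldots,x_m$ is determined by the choice of $\exists$ or $\forall$ at each of the $m$ positions, giving at most $2^m$ prefixes. Multiplying by the $2^{2^N}$ possible matrices yields at most
\[
2^m \cdot 2^{2^N} \;=\; 2^{m + 2^N} \;\leq\; 2^{m + 2^{(t+1)(m+s)^r}}
\]
inequivalent sentences, as claimed. The only mildly delicate point is that this counts ``at most $m$ quantifiers'' (one can always pad a shorter prenex sentence with dummy quantifiers without changing its meaning), but that suffices for an upper bound. No step presents a real obstacle; the argument is a clean cascade of elementary combinatorial estimates once the term/atom counts are set up correctly.
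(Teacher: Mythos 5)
Your proposal is correct and follows essentially the same route as the paper: bound the number of atomic formulas by $(t+1)(m+s)^r$, count quantifier-free matrices as disjunctions of complete conjunctions over the atoms (your canonical DNF clauses are exactly the paper's ``types''), and multiply by the $2^m$ quantifier prefixes. The remark about padding to exactly $m$ quantifiers is a harmless extra detail and does not change the argument.
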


\begin{proof}
Fix $m$, and let $x_1, \ldots, x_m$ be $m$ distinct
variables. For each $i$, the number of atomic formulas using the relation symbol $R_i$ (say of arity $r_i$) is $(m + s)^{r_i} \leq (m + s)^r$. Furthermore, there are $(m+s)^2$ atomic formulas involving the equality symbol $=$. Therefore, the total number of atomic formulas over $\tau$ is at most $(t+1)(m + s)^r$.

By definition, a \emph{type} over $\tau$ is a conjunction of the form $\theta_1  \land \ldots \land  \theta_d$, where $d \leq (t+1)(m + s)^r$ by the argument above, and each $\theta_i$ is either an atomic formula or a negated atomic formula, and for each atomic formula, either the formula itself or its negation appears exactly once as a conjunct. Therefore, the total number of types is at most $2^{(t+1)(m + s)^r}$ (corresponding to each choice of positive or negative appearance of each atomic formula).

Now, every FO-sentence $\varphi$ in prenex normal form with $m$ quantifiers is logically equivalent to an FO-sentence $\psi$ of the form $Q_1x_1\ldots Q_mx_m\chi$, where $\chi$ is a disjunction of types. Note that the quantifier-free part $\chi$ has at most $(t+1)(m + s)^r \cdot 2^{(t+1)(m + s)^r}$ atomic formulas (positive or negative), since each type has at most $(t+1)(m + s)^r$ atomic formulas.
Also, there are $2^m$ choices for the quantifier block $Q_1x_1\ldots Q_mx_m$, and there are $2^{2^{(t+1)(m + s)^r}}$ choices for the subset of types needed to define $\chi$. By combining these two facts, we obtain the desired result.
\end{proof}

We now establish the $\LOGSPACE$ upper bound for $\winms_m$, for each fixed $m\geq 1$.

\begin{restatable}{proposition}{winmsmupper}\label{prop:winmsmupper}
For each fixed $m \geq 1$, the problem $\winms_m$ is in $\LOGSPACE$.
\end{restatable}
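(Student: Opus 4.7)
The plan is to combine the Fundamental Theorem of MS Games (Theorem \ref{thm:MSfundamental}) with the counting bound of Lemma \ref{lem:countsentences}. By Theorem \ref{thm:MSfundamental}, Spoiler wins the $m$-round MS game on $(\cA, \cB)$ if and only if there is a separating FO-sentence for $(\cA, \cB)$ with at most $m$ quantifiers. Since both the schema $\tau$ and the number of rounds $m$ are fixed, Lemma \ref{lem:countsentences} gives that, up to logical equivalence, the set $\Phi_m$ of FO-sentences over $\tau$ with at most $m$ quantifiers has cardinality bounded by an \emph{absolute constant} depending only on $m$ and $\tau$, and that each such sentence has size bounded by another such constant. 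Consequently, $\Phi_m$ can be hard-coded into the finite control of the logspace machine.

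The algorithm then proceeds as follows: enumerate the sentences $\varphi \in \Phi_m$ one by one, and accept iff at least one of them separates $(\cA, \cB)$; equivalently, iff some $\varphi \in \Phi_m$ satisfies $\bA \models \varphi$ for every $\bA \in \cA$ and $\bB \models \lnot\varphi$ for every $\bB \in \cB$. The outer enumeration over $\Phi_m$ and the outer loops over $\cA$ and $\cB$ use only $O(\log n)$ space (for pointers into the input), where $n$ is the input size. For each fixed $\varphi \in \Phi_m$ and each fixed structure $\bA$ from the input, model-checking $\varphi$ on $\bA$ is in $\LOGSPACE$ by the standard recursive evaluator: since $\varphi$ has at most $m$ quantifiers and $m$ is a constant, it suffices to store at most $m$ indices into the universe, each of length $O(\log |A|)$, for a total working space of $O(m \log n) = O(\log n)$. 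Composing an $O(\log n)$-space enumerator with an $O(\log n)$-space model-checker yields an overall $\LOGSPACE$ procedure by the usual composition lemma.

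I do not expect a real obstacle here; the argument is essentially a combination of standard ingredients. The only subtle points to handle carefully are (i) noting that fixing $m$ and $\tau$ really does make $|\Phi_m|$ a constant, so that the outer enumeration does not smuggle in super-logarithmic space, and (ii) invoking logspace composition rather than running model-checks in parallel, since the natural way to traverse all pairs $(\varphi, \bA)$ and $(\varphi, \bB)$ involves the usual replay-based composition of logspace transducers. Once these are in place, the $\LOGSPACE$ bound follows immediately from the Fundamental Theorem together with Lemma \ref{lem:countsentences}.
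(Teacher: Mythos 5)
Your proposal is correct and follows essentially the same route as the paper: both rely on the Fundamental Theorem of MS Games together with Lemma \ref{lem:countsentences} to reduce the problem to checking a constant number of candidate separating sentences, each model-checked in logarithmic space by storing $m$ indices into the universe and evaluating the constant-size quantifier-free part. The only cosmetic differences are that the paper spells out the model-checking loop directly rather than invoking logspace composition, and it explicitly notes that the pebbled-structure case is handled identically by passing to formulas with free variables for the pre-placed pebbles.
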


\begin{proof}
Assume first that we have unpebbled structures, and so we are looking for separating \emph{sentences} rather than formulas with free variables. When $m \geq 1$ is fixed, the number  of pairwise inequivalent sentences over $\tau$ with $m$ quantifiers is constant; this follows directly from Lemma \ref{lem:countsentences}, by observing that the entire expression for the upper bound is a constant since $\tau$ and $m$ are fixed. Furthermore, each such sentence is equivalent to one of the form $\psi = Q_1x_1\ldots Q_mx_m\chi$, where $\chi$ has constant length (also by Lemma \ref{lem:countsentences}, since $t$, $s$, $r$, and $m$ are all constants).

Consider the following model-checking problem: given a finite $\tau$-structure $\bA$, and a FO-sentence $\psi$ of the form above, is it the case that $\bA\models \psi$? This problem can be solved by cycling through each of the $|A|^m$ possible instantiations of the quantified variables $Q_1x_1\ldots Q_mx_m$, and examining whether each such instantiation satisfies one of the types in the quantifier-free formula $\chi$.
This is easily seen to be a $\LOGSPACE$ protocol, since we only need: (i) space to store each instantiated tuple of $m$ variables, which takes space $O(m\log(|A|))$; (ii) space to keep a counter to cycle through all the instantiations of the $|A|^m$ tuples, which takes space $O(m\log(|A|))$; and (iii) space to check the that the instantiated tuple satisfies the constant-length quantifier-free formula $\chi$, which is well-known to be in $\LOGSPACE$.

Therefore, given an instance $(\cA, \cB)$ of $\winms_m$, we can consider each of the FO-sentences 
$\psi$ in prenex normal form with $m$ quantifiers (there are a constant number of such sentences), and test whether, for each $\bA \in \cA$, we have $\bA \models \psi$, and for each $\bB \in \cB$, we have $\bB \models \lnot\psi$. As argued above, this takes space logarithmic in the maximum size of a structure in $\cA\cup\cB$.

When the input instance has pebbled structures, the proof goes through in exactly the same way, by considering pairwise inequivalent formulas over $\tau$ with free variables corresponding to the pebbles played in the input instance.
\end{proof}

The same ideas can be adapted to yield an upper bound for $\winms$.
    
\begin{restatable}{proposition}{winmsupper}\label{prop:winmsupper}
The problem $\winms$ is in $\NEXP$.
\end{restatable}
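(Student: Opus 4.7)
The plan is to combine the Fundamental Theorem of MS Games (Theorem \ref{thm:MSfundamental}) with the sentence-counting bound of Lemma \ref{lem:countsentences}. By the Fundamental Theorem, Spoiler wins the $m$-round MS game on $(\cA, \cB)$ if and only if there is a separating formula for $(\cA, \cB)$ with at most $m$ quantifiers. By Lemma \ref{lem:countsentences}, any such formula can be assumed to be in the canonical prenex form $Q_1 x_1 \ldots Q_m x_m\, \chi$, where $\chi$ is a disjunction of types over the $m$ variables and the $s$ constants, and contains at most $(t+1)(m+s)^r \cdot 2^{(t+1)(m+s)^r}$ (positive or negative) atomic formulas. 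With the schema $\tau$ fixed (so $t, s, r$ are constants) and $m$ given in unary (matching the convention used for $\winef$ in Pezzoli's $\PSPACE$-completeness result), the size of $\chi$, and hence of the entire formula, is bounded by $2^{O(m^r)} = 2^{O(n^r)}$, where $n$ denotes the input size.

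The $\NEXP$ algorithm is then straightforward: first, nondeterministically guess a canonical formula $\psi$ of the above shape, which requires only $2^{O(n^r)}$ bits; second, verify deterministically that $\psi$ separates $(\cA, \cB)$. Verification amounts to model-checking $\psi$ on each pebbled structure in $\cA \cup \cB$: for every $\langle \bA \mid \vec a\rangle \in \cA$ we check that $\bA \models \psi[\vec a / \vec x]$, and for every $\langle \bB \mid \vec b\rangle \in \cB$ we check that $\bB \not\models \psi[\vec b / \vec x]$. A standard model-checking procedure that iterates over all $n^m$ instantiations of $x_1, \ldots, x_m$ and evaluates $\chi$ on each runs in time $n^m \cdot |\chi| = 2^{O(n^r \log n)}$ per structure, and since $|\cA| + |\cB| \leq n$, the total running time remains single-exponential in $n$. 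We accept iff the verification succeeds, placing the problem in $\NEXP$.

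The only genuine obstacle is controlling the size of the guessed formula: a priori, formulas with $m$ quantifiers may be arbitrarily long, so without the canonical-form bound from Lemma \ref{lem:countsentences} one could not hope to guess within $\NEXP$. That lemma does all the heavy lifting by reducing the search space to sentences whose quantifier-free kernels are bounded-length disjunctions of types, after which the algorithm follows essentially immediately from the game-vs-logic correspondence and standard first-order model checking.
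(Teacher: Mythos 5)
Your proposal is correct and follows essentially the same route as the paper's own proof: nondeterministically guess a canonical prenex separating formula whose size is bounded via Lemma \ref{lem:countsentences}, then verify it by exponential-time model checking on every structure in $\cA \cup \cB$, with correctness guaranteed by Theorem \ref{thm:MSfundamental}. Your explicit remark about $m$ being given in unary is a reasonable clarification of a convention the paper leaves implicit, but it does not change the argument.
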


\begin{proof}
We can follow similar ideas as in the proof of Proposition \ref{prop:winmsmupper}. Once again, assuming unpebbled structures first, we need to only consider the distinct FO-sentences $\psi$ (up to logical equivalence) with $m$ quantifiers, but this time the number of these sentences is not constant, since $m$ is part of the input. Nevertheless, as before, we know from Lemma \ref{lem:countsentences} that there are at most $2^{m + 2^{(t+1)(m + s)^r}}$ such sentences.

Now, consider our model-checking problem once again: given a finite $\tau$-structure $\bA$, and a FO-sentence $\psi$ of the form above, is it the case that $\bA \models \psi$? Again, by cycling through the instantiations of the quantified variables, and examining whether each such instantiation satisfies one of the (at most) $2^{(t+1)(m + s)^r}$ types in the quantifier-free formula $\chi$, we can conclude that the model-checking problem takes deterministic time $O(|A|^m2^{(t+1)(m + s)^r})$ for any $\tau$-structure $\bA$.

So now, consider the following nondeterministic algorithm for verifying that $\bS$ wins an instance $(\cA, \cB)$ of the $m$-round {\ms} game.

\begin{enumerate}
    \item Guess a FO-sentence $\psi$ with $m$ quantifiers. Without loss of generality, assume that $\psi$ is of the form $Q_1x_1\ldots Q_mx_m\chi$, where $\chi$ is a disjunction of conjunctions $\theta_1\land \cdots\land \theta_d$, with $d \leq (t+1)(m + s)^r$, as in the proof of Lemma \ref{lem:countsentences}.
    \item For each $\bA \in \cA$ and each $\bB \in \cB$, check that $\bA \models \psi$ and $\bB\not\models \psi$.
 \end{enumerate}
The guess of $\psi$ is a non-deterministic step that takes time $O(2^{(t+1)(m + s)^r})$. Each subsequent model-checking step takes time $O(N^m2^{(t+1)(m + s)^r})=O(2^{m\log(N)+(t+1)(m + s)^r})$, where $N$ is the maximum size of a structure in $\cA \cup \cB$. Thus, the second step of the algorithm takes time $O(|\cA|\cdot|\cB|\cdot 2^{m\log(N)+(t+1)(m + s)^r})$. Therefore, this is a nondeterministic exponential time algorithm. 

Once again, the proof goes through similarly in the case of pebbled structures (by guessing a separating formula this time with free variables corresponding to the pebbles placed in the input instance).
\end{proof}


We remark here that the two ``obvious'' approaches for showing an upper bound for $\winms$ -- either simulating the game, or guessing a separating sentence -- both seem to result in an exponential blow-up of the problem, resulting in a $\NEXP$ upper bound at best. If we try to simulate the game directly, even on the game instances starting with singleton sets $(\{\bA\}, \{\bB\})$, the number of copies on either side goes up in general to $|A|^{O(m)}$ by the end of the game.
Therefore, the number of objects to keep track of seems to grow exponentially in the length of our input. Guessing a separating sentence achieves $\NEXP$, as shown in Proposition \ref{prop:winmsupper}. In each case, this seems to happen primarily because of Duplicator's ability to play obliviously; when she does so, she increases the number of possibilities exponentially, and this seems to be an inherent attribute of the problem.

We remark here that Hella and Luosto \cite{HL24} have introduced several different games that capture the number of quantifiers, hence each of them is  equivalent to the {\ms} game. None of these games, however, seems to yield a better upper bound than the one derived here, at least in a straightforward way. In particular,
one of these games is the {\sc Monotone Prefix Game} in which, unlike the {\ms} game, the Duplicator does \emph{not} have an oblivious strategy. A na\"{i}ve analysis of this game, however, only yields $\Sigma_2^{\rm exp}$ as an upper bound, where  $\Sigma_2^{\rm exp}$ is the second level of the exponential hierarchy (and contains $\NEXP$).
\section{The Lower Bound}
\label{sec:lower}

In this section, we prove our main results, through our lower bound constructions. For ease of exposition, we shall do this in two parts; first, we will prove that the $\winms$ problem is $\NP$-hard, which shall contain all the essential ingredients of the $\PSPACE$-hardness proof, albeit on slightly simpler gadgets. In particular, this proof will illustrate the main ideas from inapproximability in complexity theory that we will use. We shall then prove $\PSPACE$-hardness through substantially more involved gadgets.

\subsection{Warm-Up: NP-Hardness of EF Games}\label{sec:EFNP}

We start by providing a self-contained proof that $\winef$ is $\NP$-hard, a result already shown in \cite{PezzoliThesis}. We first recall the rules of the {\ef} (EF) game briefly.

An instance of the EF game consists of a natural number $m$, and a pair of finite pebbled $\tau$-structures $\langle\bA ~|~ a_1, \ldots, a_k\rangle$ and $\langle\bB ~|~ b_1, \ldots, b_k\rangle$, each pebbled with the same subset $\{x_1, \ldots, x_k\}$ of colors. We shall now describe a run of the $m$-round EF game on $(\langle\bA ~|~ a_1, \ldots, a_k\rangle, \langle\bB ~|~ b_1, \ldots, b_k\rangle)$, by describing inductively what happens in round $i$, for $1 \leq i \leq m$. Assume that in rounds $1$ through $i - 1$, pebbles of colors $x'_1, \ldots, x'_{i-1}$ have been played on the elements $a'_1, \ldots, a'_{i-1} \in A$ respectively, and on elements $b'_1, \ldots, b'_{i-1} \in B$ respectively, resulting in the two pebbled structures $\langle \bA ~|~ a_1, \ldots, a_k, a'_1, \ldots, a'_{i - 1}\rangle$ and $\langle \bB ~|~ b_1, \ldots, b_k, b'_1, \ldots, b'_{i - 1}\rangle$ at the end of round $i - 1$ (and hence, at the beginning of round $i$). In round $i$, $\bS$ chooses one of these pebbled structures,
say, the first one, i.e., $\langle \bA ~|~ a_1, \ldots, a_k, a'_1, \ldots, a'_{i - 1}\rangle$.
He then plays a pebble of an \emph{unused} color $x'_i$ on an element $a'_i \in A$, creating $\langle \bA ~|~ a_1, \ldots, a_k, a'_1, \ldots, a'_i\rangle$.

In response, $\bD$ plays a pebble of color $x'_i$ on an element $b'_i \in B$, creating the pebbled structure $\langle \bB ~|~ b_1, \ldots, b_k, b'_1, \ldots, b'_i\rangle$. The run ends after $m$ rounds, at which points, $\bD$ wins this run if $\langle \bA ~|~ a_1, \ldots, a_k, a'_1, \ldots, a'_m\rangle$ and $\langle \bB ~|~ b_1, \ldots, b_k, b'_1, \ldots, b'_m\rangle$ form a matching pair, and $\bS$ wins otherwise. We define a winning strategy for $\bS$ and $\bD$ as in {\ms} games, and the notion of a separating formula also carries over analogously. As in the {\ms} game, we can adopt the convention of calling the two sides consistently \emph{left} and \emph{right} throughout. Finally, we have the fundamental theorem of EF games as well, showing that these games capture quantifier rank.

\begin{theorem}[Fundamental Theorem of EF Games, \cite{ehrenfeucht1961application, fraisse1950nouvelle}]\label{thm:EFfundamental}
    $\bS$ has a winning strategy in the $m$-round EF game on $(\langle\bA ~|~ a_1, \ldots, a_k\rangle, \langle\bB ~|~ b_1, \ldots, b_k\rangle)$ iff there is a separating formula (with $k$ free variables) for $(\langle\bA ~|~ a_1, \ldots, a_k\rangle, \langle\bB ~|~ b_1, \ldots, b_k\rangle)$ with quantifier rank at most $m$.
\end{theorem}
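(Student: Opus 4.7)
The plan is to prove this classical theorem by induction on $m$, via the standard back-and-forth argument. A key preliminary fact is that since $\bA$, $\bB$, and $\tau$ are finite, the set of atomic formulas over the pebbled elements is finite; this finiteness is what keeps the conjunction step below within its quantifier-rank budget. The base case $m=0$ is immediate: Duplicator wins vacuously iff $\langle\bA ~|~ a_1,\ldots,a_k\rangle$ and $\langle\bB ~|~ b_1,\ldots,b_k\rangle$ form a matching pair, while a separating formula of quantifier rank $0$ is quantifier-free, and such a formula exists iff some atomic formula over the pebbled elements and constants fails to agree on the two sides, which is precisely the negation of being a matching pair.

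For the inductive step, assume the equivalence for $m-1$. Direction ($\Rightarrow$): suppose $\bS$ has a winning strategy in the $m$-round game, and say his first move is on the left (the right-side case is symmetric, producing a $\forall$ rather than an $\exists$). His strategy picks some $a' \in A$ on which to place the fresh pebble. For every possible response $b' \in B$, $\bS$ inherits a winning $(m-1)$-round strategy on $(\langle\bA~|~\vec{a},a'\rangle, \langle\bB~|~\vec{b},b'\rangle)$; by the inductive hypothesis, there is a separating formula $\psi_{b'}(\vec{x},x'_1)$ of quantifier rank at most $m-1$, holding at $(\vec{a},a')$ in $\bA$ and failing at $(\vec{b},b')$ in $\bB$. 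Since $B$ is finite, $\psi := \bigwedge_{b' \in B}\psi_{b'}$ has quantifier rank at most $m-1$, and $\exists x'_1\,\psi(\vec{x},x'_1)$ is then a separating formula of quantifier rank at most $m$.

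Direction ($\Leftarrow$): suppose $\varphi(\vec{x})$ of quantifier rank at most $m$ separates $(\langle\bA~|~\vec{a}\rangle, \langle\bB~|~\vec{b}\rangle)$. Rewriting $\varphi$ in negation normal form and descending through its Boolean structure, one can locate an outermost-quantified subformula, say $\exists y\,\psi(\vec{x},y)$ with $\psi$ of quantifier rank at most $m-1$, whose truth value at the pebble assignments disagrees on the two sides (the $\forall$ case is symmetric). WLOG $\exists y\,\psi$ holds at $\vec{a}$ in $\bA$ and fails at $\vec{b}$ in $\bB$. Then $\bS$ plays a witness $a' \in A$ on the left; for any response $b' \in B$, $\psi$ separates $(\langle\bA~|~\vec{a},a'\rangle, \langle\bB~|~\vec{b},b'\rangle)$ with quantifier rank at most $m-1$, and the inductive hypothesis furnishes $\bS$ with a winning strategy for the remaining $m-1$ rounds.

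The main technical care is in the ($\Leftarrow$) direction, where one must isolate an outermost quantified subformula that actually flips truth across the two sides; this is standard via negation normal form and descent through the Boolean structure, provided one remembers that the atomic (quantifier-rank-$0$) discrepancies are already ruled out by the base case. Everything else is routine bookkeeping on quantifier ranks, with the finiteness of $\bB$ (and symmetrically $\bA$) ensuring that the conjunction in the ($\Rightarrow$) direction does not inflate the quantifier rank.
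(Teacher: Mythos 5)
Your proof is correct: it is the standard back-and-forth induction on $m$, with the existential witness plus finite conjunction over all of Duplicator's responses in the ($\Rightarrow$) direction, and the descent to a disagreeing outermost quantified (or atomic) subformula in the ($\Leftarrow$) direction. Note that the paper itself offers no proof of this statement; it is quoted as the classical Ehrenfeucht--Fra\"{i}ss\'{e} theorem with citations, so there is no in-paper argument to compare against, and your write-up matches the textbook proof those references contain. One small imprecision: what keeps the ($\Rightarrow$) conjunction a legitimate finite formula is the finiteness of the universe $B$ (one conjunct per possible response $b'$), not the finiteness of the set of atomic formulas; the latter would only become relevant if you wanted the argument to work for infinite structures, where one instead invokes the fact that, over a finite relational vocabulary, there are only finitely many pairwise inequivalent formulas of quantifier rank at most $m-1$ in the given free variables.
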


To show the $\NP$-hardness of $\winef$, we adopt a similar gadget to the one in \cite{PezzoliThesis}, with several important modifications, including a reduction from $\domset$ instead of one from $\textsc{Clique}$, which establishes the hardness even for schemas with predicates of arity at most $2$. However, first we need a few definitions.

\begin{definition}
Let $\langle\bA ~|~ \vec{a}\rangle, \langle\bB ~|~ \vec{b}\rangle$ be $\tau$-structures, and fix elements $u \in A$ and $v \in B$, and natural numbers $m, m'$. In the $m$-round EF game on $(\langle \bA ~|~ \vec{a}\rangle, \langle \bB ~|~ \vec{b}\rangle)$, we say $\bS$ \emph{forces} the pair $(u, v)$ (within $m'$ additional rounds) if $\bS$ has a strategy such that the following hold:
\begin{itemize}
\item he can play on $u \in A$, and if $\bD$ does not respond to this play by playing on $v \in B$, then she loses within $m'$ additional rounds;
\item he can play on $v \in B$, and if $\bD$ does not respond to this play by playing on $u \in A$, then she loses within $m'$ additional rounds.
\end{itemize}
Usually $m'$ is understood from the context, and not specified. Furthermore, $\bS$ can often force a pair without actually playing on that pair. We can extend this definition for the {\ms} game as well.
\end{definition}

Finally, we define the problem that we shall reduce from.

\begin{definition}
For an undirected graph $G = (V, E)$, a \emph{dominating set} is a subset $U \subseteq V$ such that every $v \in V - U$ has a neighbor in
$U$. Define the decision problem:
\begin{equation*}
    \domset := \{\langle G, k\rangle ~:~ G \mbox{ has a dominating set of size }k\}.
\end{equation*}
\end{definition}

It is well known that $\domset$ is a canonical $\NP$-complete problem \cite{DBLP:books/fm/GareyJ79}. Note also that $\domset$ is the decision problem underlying the optimization problem 
$\mindomset$: given an undirected graph $G=(V,E)$, find the minimum size of a dominating set of $G$ (and a dominating set $U$ of that size). The $\NP$-hardness of $\domset$ implies that there is no polynomial-time algorithm for computing the minimum value $\opt(G)$ of $\mindomset$ on $G$ unless $\P = \NP$.
In fact, it is known that $\mindomset$ is hard even to approximate. More precisely, suppose that $\mathscr{A}$ is an algorithm such that, given a graph $G$, the algorithm returns a value $\mathscr{A}(G)$ as an approximate value of $\opt(G)$. It is well known and not hard to see that if $\mathscr{A}$ is the greedy algorithm, then
$\mathscr{A}(G)/\opt(G)\leq 1+ \log(|V|)$.
However, the following inapproximability result has been established by Raz and Safra \cite{domset1}

\begin{theorem}[Inapproximability of Dominating Set, \cite{domset1}]\label{domsetapprox}
Unless $\P = \NP$, there is a constant $c>0$ with the following property: there is no polynomial-time algorithm $\mathscr{A}$ that, given an arbitrary undirected graph $G=(V,E)$, returns an approximate value $\mathscr{A}(G)$ for $\mindomset$ on $G$ such that $\mathscr{A}(G)/\opt(G)\leq c\log(|V|)$.
\end{theorem}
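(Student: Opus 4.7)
The plan is to derive the $c\log(|V|)$ inapproximability of $\mindomset$ by chaining together three classical reductions, all of which preserve the multiplicative gap in the optimum. Since the result is cited from Raz and Safra \cite{domset1}, what follows is a sketch of the overall strategy rather than a fresh proof: the theorem is being invoked as a black box, so in the paper itself I would simply cite \cite{domset1} and move on, but the pipeline is worth recording.

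First, I would start from the PCP theorem in the form of a hard gap version of label cover. By Raz's parallel repetition theorem, one can amplify the gap: there is a constant $\gamma>0$ such that, for the $k$-fold repeated label cover instance, it is NP-hard to distinguish the case where every constraint is satisfiable from the case where at most a $2^{-\gamma k}$ fraction is. Second, I would apply a Feige-style gadget (in the Raz--Safra variant) that turns a label cover instance into a set cover instance via a \emph{partition system}: a collection of $\ell$ partitions of a ground set of size $n$ into $R$ blocks, designed so that any cover must use blocks drawn from many different partitions. Composing this gadget with the amplified label cover instance polarizes the covering number, yielding a set cover instance whose optimum is small in the YES case and at least $\Omega(\log n)$ times larger in the NO case. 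This is the step that produces the $c\log n$ factor.

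Third, I would apply the textbook reduction from set cover to dominating set: given elements $U$ and sets $S_1,\ldots,S_m$, build a graph on $U \cup \{v_1,\ldots,v_m\}$, join each $v_i$ to the elements of $S_i$, and make $\{v_1,\ldots,v_m\}$ into a clique (so that any minimum dominating set may, without loss of generality, be moved entirely onto the $v_i$'s). A dominating set of size $k$ in the resulting graph then corresponds to a set cover of size $k$, and $|V|$ is polynomial in $n$, so the $c\log n$ hardness for set cover transfers to a $c'\log(|V|)$ hardness for $\mindomset$ (with a possibly smaller but still positive constant $c'$).

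The main obstacle, by far, is the partition-system step. Constructing partition systems with tight parameters requires a careful probabilistic (or explicit algebraic) construction, and coordinating these parameters with the amplified gap coming from parallel repetition is what makes the whole reduction yield a logarithmic and not merely constant inapproximability factor. The outer chain (PCP, label cover, set cover, dominating set) is essentially bookkeeping once the partition-system lemma is accepted as a black box, and the final reduction to $\mindomset$ is entirely elementary.
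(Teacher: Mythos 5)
The paper does not prove this statement at all: it is quoted verbatim from Raz and Safra \cite{domset1} and used as a black box, so your decision to cite it and merely record the pipeline is exactly what the paper does. There is therefore nothing to compare at the level of an actual proof, and for the purposes of this paper your treatment is adequate.

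That said, the pipeline you sketch has one genuine technical flaw if it were to be taken as a proof of the theorem \emph{as stated}, i.e.\ under the assumption $\P \neq \NP$. Raz's parallel repetition theorem amplifies the soundness of a constant-gap label cover instance to $2^{-\gamma k}$ only at the cost of blowing the instance size up to $n^{O(k)}$. To drive the soundness down to roughly $1/\log n$, which is what the Feige/Lund--Yannakakis partition-system gadget needs in order to produce a $c\log n$ covering gap, you must take $k = \Theta(\log\log n)$, and the reduction then runs in quasi-polynomial time. That route only yields the inapproximability of \textsc{Set Cover} (and hence of $\mindomset$) under the stronger assumption $\NP \not\subseteq \mathrm{DTIME}(n^{O(\log\log n)})$ --- this is precisely Feige's setting. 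What Raz and Safra actually do, and what makes the $\P \neq \NP$ version of the theorem true, is construct \emph{polynomial-size} PCPs (equivalently, label cover instances) with sub-constant error directly, via low-degree/plane-versus-plane testing, bypassing parallel repetition altogether. So the second ingredient of your chain should be the sub-constant-error PCP of \cite{domset1} (or Arora--Sudan), not parallel repetition. The remaining steps are fine: the partition-system composition is indeed where the $\log n$ factor comes from, and the textbook reduction from \textsc{Set Cover} to $\mindomset$ (element vertices plus a clique on the set vertices) preserves the gap up to the constant, since $|V|$ is polynomial in the set cover instance size and hence $\log|V| = \Theta(\log n)$.
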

As a consequence of Theorem \ref{domsetapprox},  no polynomial-time algorithm $\mathscr{A}$ and no constant $\varepsilon$ exist such that $\mathscr{A}(G)/\opt(G)\leq \varepsilon$, unless $\P=\NP$. 

In Section \ref{sec:MSNP}, we will leverage the  inapproximability results for $\mindomset$ to prove that $\winms$ is $\NP$-hard.  In what follows in this section, we will give a polynomial-time reduction of $\domset$ to $\winef$.

Given an instance $\langle G, k\rangle$ of $\domset$, where $G$ has $n$ vertices $\{v_1, \ldots, v_n\}$, we will build a $\taug$-structure $\bA$ with vertices $a, a' \in A$ such that:
\begin{equation}
    \langle G,k \rangle \in \domset \iff \langle k+1, (\langle \bA ~|~ a\rangle, \langle \bA ~|~ a'\rangle)\rangle \in \winef.
\end{equation}

Here, schema $\taug$ has a single binary relation symbol $E$, and thus a $\taug$-structure is just a directed graph. For the sake of clarity in exposition, for now we imagine $\taug$ is augmented with three unary relation symbols $R$, $G$, and $B$ (thought of as red, green, and blue vertices respectively). We shall see later on how to remove these from the schema. We now describe the building blocks for $\bA$.

\paragraph*{The Building Block Gadgets}

 As shown in Fig.~\ref{figIj}, gadget $\bI_j$ consists of a pair of red vertices $p$ and $p'$, a pair of blue vertices $q$ and $q'$, and a pair of green vertices $r$ and $r'$, with eight vertices ($c_i, d_i$, with $i \in \{1, 2, 3, 4\}$) between them, with the edge connections shown. The vertices $c_i$ and $d_i$ are distinguished by means of having a different number of additional uncolored in-neighbors; each $c_i$ has $j$ uncolored in-neighbors $\{a^i_1, \ldots, a^i_j\}$, and each $d_i$ has $j - 1$ uncolored in-neighbors $\{b^i_2, \ldots, b^i_j\}$. Note that this gadget $\bI_j$ is a generalized version of the gadgets used in \cite{CFI}. In particular, this gadget has a number of interesting automorphisms. Indeed, we can obtain automorphisms that keep any one of the three pairs $(p, p')$, $(q, q')$, and $(r, r')$ fixed, and permute the elements of both the other pairs (and re-labels the middle vertices $c_i$ and $d_i$ in the process). For instance, there is an automorphism of this gadget that fixes $p$ and $p'$, where $q \mapsto q'$, $q' \mapsto q$, $r \mapsto r'$, and $r' \mapsto r$.

\begin{figure*}[ht]
\begin{center}
\tikzset{Red/.style={circle,draw=red, inner sep=1pt, minimum size=1cc, line width=2pt}}
\tikzset{RedOuter/.style={circle,draw=red, inner sep=1pt, minimum size=1.2cc, line width=2pt}}
\tikzset{BlueOuter/.style={circle,draw=blue, inner sep=1pt, minimum size=1.25cc, line width=2pt}}
\tikzset{Blue/.style={circle,draw=blue, inner sep=1pt, minimum size=1cc, line width=2pt}}
\tikzset{Green/.style={circle,draw=dg, inner sep=1pt, minimum size=1cc, line width=2pt}}
\tikzset{Magenta/.style={circle,draw=magenta, inner sep=1pt, minimum size=1cc, line width=2pt}}
\tikzset{Indigo/.style={circle,draw=indigo, inner sep=1pt, minimum size=1cc, line width=2pt}}
\tikzset{Black/.style={circle,draw=black, inner sep=1pt, minimum size=1cc, line width=1pt}}
\tikzset{Gray/.style={circle,draw=black, inner sep=1pt, minimum size=1cc, line width=1pt, fill=gray!20}}
\tikzset{TreeNode/.style={rectangle,draw=black, inner sep=1pt, minimum size=1.3cc, line width=1pt}}
\tikzset{BigTreeNode/.style={rectangle,draw=black, inner sep=1pt, minimum size=1.8cc, line width=1pt}}
\tikzset{Tleft/.style={isosceles triangle,anchor=west,draw=black, inner sep=1pt, minimum size=1cc,
    line width=1pt,fill=orange}}
\tikzset{TBleft/.style={isosceles triangle,anchor=west,draw=blue, inner sep=1pt, minimum size=1cc,
    line width=2pt,fill=orange}}
\tikzset{Tright/.style={isosceles triangle,anchor=west,draw=black, inner sep=1pt, minimum size=1cc,
    line width=1pt,fill=indigo,rotate=180}}
\tikzset{Trap/.style={trapezium, minimum height=3.2cm,draw}}
\begin{tikzpicture}[scale=.12]
\node [Black] (h1) at (-23.75,8) {{\scriptsize $a^1_1$}};
\node [Black] (h2) at (-26,4) {{\scriptsize $a^1_2$}};
\node [Black] (h3) at (-28.25,0) {{\scriptsize $a^1_3$}};
\node              at (-29.9375,-3) {$\cdot$};
\node              at (-30.5,-4) {$\cdot$};
\node              at (-31.0625,-5) {$\cdot$};
\node [Black] (hk) at (-32.75,-8) {{\scriptsize $a^1_j$}};
\node [Red] (p1) at(-8,12.5)  {{\scriptsize $p$}};
\node [Red] (r1') at(8,12.5)  {{\scriptsize  $p'$}};
\node [Black] (i1) at(-24,0)  {{\scriptsize  $c_1$}};
\node [Black] (i2) at(-17,0)  {{\scriptsize $c_2$}};
\node  at (0,7) {{\large $\bI_j$}};
\node [Black] (m1) at(-10,0)  {{\scriptsize $d_1$}};
\node [Black] (m1') at(-3,0)  {{\scriptsize $d_2$}};
\node [Black] (i3) at(3,0)  {{\scriptsize $c_3$}};
\node [Black] (i4) at(10,0)  {{\scriptsize $c_4$}};
\node [Black] (m3) at(17,0)  {{\scriptsize  $d_3$}};
\node [Black] (m4) at(24,0)  {{\scriptsize $d_4$}};
\node [Blue] (g1) at(-28,-12.5)  {{\scriptsize $q$}};
\node [Blue] (g1') at(-12,-12.5)  {{\scriptsize $q'$}};
\node [Green] (b1) at(12,-12.5)  {{\scriptsize  $r$}};
\node [Green] (b2) at(28,-12.5)  {{\scriptsize  $r'$}};
\node [Black] (j2) at (26,4) {{\scriptsize $b^4_2$}};
\node [Black] (j3) at (28.25,0) {{\scriptsize $b^4_3$}};
\node              at (29.9375,-3) {$\cdot$};
\node              at (30.5,-4) {$\cdot$};
\node              at (31.0625,-5) {$\cdot$};
\node [Black] (jk) at (32.75,-8) {{\scriptsize $b^4_j$}};
\foreach \from/\to in {p1/i1,p1/i2,p1/m1,p1/m1',i1/g1,i1/b1,i2/g1',i2/b2,m1/g1,m1/b2,m1'/g1',m1'/b1,
  r1'/i3,r1'/i4,r1'/m3,r1'/m4,i3/g1,i3/b2,i4/g1',i4/b1,m3/g1,m3/b1,m4/g1',m4/b2,
  h1/i1,h2/i1,h3/i1,hk/i1,j2/m4,j3/m4,jk/m4}
\draw[line width=1pt,color=black,->] (\from) -- (\to);
\end{tikzpicture}
\end{center}
\caption{Gadget $\bI_j$. The  vertices $c_1, \ldots, c_4$ are each  connected to $j$ auxiliary vertices; the vertices  $d_1 \ldots d_4$  are each connected to 
$j-1$ auxiliary vertices. These auxiliary vertices are drawn only for $c_1$ and $d_4$. Such a gadget exists for any $j > 0$.}
\label{figIj}
\end{figure*}
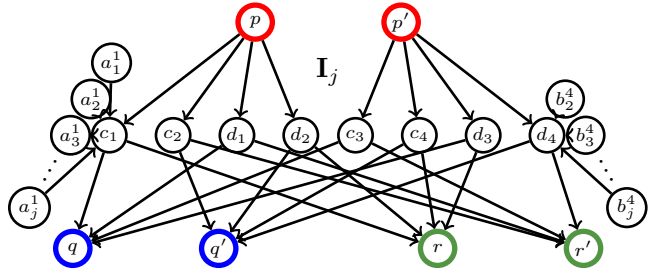

We refer to the $c_i$'s and $d_i$'s as \emph{vertices in the middle of the gadget $\bI_j$}.

The key property of $\bI_j$ that we will use is the following:

\begin{lemma}[Polarity-Preserving Lemma]\label{polarity}
For $j \geq 1$, $\bS$ can force the pair $(q,q')$ or $(r,r')$ in the $(j+1)$-round EF game on $(\langle~\bI_j ~|~ p~\rangle, \langle~\bI_j ~|~ p'~\rangle)$ (in fact, in round $2$), but not in the $j$-round game.
\end{lemma}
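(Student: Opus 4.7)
The plan is to prove both directions by carefully exploiting the edge structure and the auxiliary in-degree difference in the gadget $\bI_j$.

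For the upper bound, the plan is for Spoiler to play $c_1$ on the left in round~1. To maintain a partial isomorphism extending $p \mapsto p'$, Duplicator must answer with an out-neighbor of $p'$ on the right, namely one of $c_3, c_4, d_3, d_4$. The argument then splits into three cases. If she plays $c_3$ (whose out-neighbors are $\{q, r'\}$, versus $c_1$'s $\{q, r\}$), then the only green out-neighbor of $c_3$ is $r'$, so any later play on $r$ forces the response $r'$, and $(r, r')$ is forced by Spoiler's follow-up in round~2. Symmetrically, $c_4$ (out-neighbors $\{q', r\}$) forces $(q, q')$. If instead she plays $d_3$ or $d_4$, then $c_1$'s auxiliary in-degree $j$ exceeds the $d$-vertex's $j - 1$, so Spoiler uses the remaining $j$ rounds to play all $j$ of $c_1$'s auxiliary in-neighbors $a^1_1, \ldots, a^1_j$, forcing Duplicator to repeat among her $j - 1$ matching options and violate the isomorphism condition.

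For the lower bound, the plan is to give a polarity-preserving Duplicator strategy on $\{q, q', r, r'\}$. She responds to any play on $\{q, q', r, r'\}$ by the identity, and to a play on a $c$-type middle vertex by its $d$-type partner on the opposite side (and vice versa), using the pairing $c_1 \leftrightarrow d_3$, $c_2 \leftrightarrow d_4$, $c_3 \leftrightarrow d_1$, $c_4 \leftrightarrow d_2$. Each such pair has identical out-neighbor patterns into $\{q, q', r, r'\}$, so all edges between middle and colored vertices are preserved and the partial mapping remains a local isomorphism. The only distinction between paired middle vertices is their auxiliary in-degree, $j$ versus $j - 1$; exploiting this requires Spoiler to play $1 + j = j + 1$ pebbles (one on the middle vertex, then all $j$ of its auxiliaries), strictly more than the $j$ pebbles available in the $j$-round game.

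The main obstacle will be verifying that this polarity-preserving strategy is robust against arbitrary Spoiler play: interleavings across several middle vertices, plays beginning on auxiliary vertices, direct plays on $\{q, q', r, r'\}$, and plays on the ``wrong side'' (for instance, $c_1$ on the right, which is not an out-neighbor of the pebbled $p'$). The plan is to handle this by an inductive invariant that the partial mapping committed after each round extends to one that is the identity on the pebbled subset of $\{q, q', r, r'\}$ and respects the $c \leftrightarrow d$ pairing on the pebbled middle vertices. Combined with the pigeonhole gap of a single vertex between the two auxiliary sets, this invariant cannot be broken within $j$ rounds, completing the lower bound.
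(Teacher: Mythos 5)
Your proposal is correct and follows essentially the same route as the paper's proof: the same opening move on $c_1$, the same case split over Duplicator's responses $c_3, c_4, d_3, d_4$ with the colored-out-neighbor forcing for the $c$-cases and the in-degree pigeonhole ($j$ versus $j-1$ auxiliary in-neighbors) for the $d$-cases, and for the $j$-round lower bound the same $c\leftrightarrow d$ pairing strategy ($c_1\leftrightarrow d_3$, $c_2\leftrightarrow d_4$, $c_3\leftrightarrow d_1$, $c_4\leftrightarrow d_2$) with identity on $\{q,q',r,r'\}$. Your explicit inductive-invariant framing of Duplicator's robustness is just a more detailed rendering of what the paper dispatches by appealing to the gadget's (near-)automorphisms, so there is no substantive difference.
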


\begin{proof}
In the $(j+1)$-round game, $\bS$ can play on $c_1$ in $\langle~ \bI_j ~|~ p ~\rangle$. $\bD$ must respond with an uncolored out-neighbor of $p'$, hence on $c_3$, $c_4$, $d_3$, or $d_4$. If she responds on $d_3$ or $d_4$ (say, $d_3$), $\bS$ can play his next $j$ moves on $a^1_1$ through $a^1_j$; $\bD$ can respond for the first $j - 1$ of these moves by playing on $b^3_2$ through $b^3_k$, but cannot respond in the last round while preserving an isomorphism. Therefore, for her first response, $\bD$ must play on $c_3$ or $c_4$. If she plays on $c_3$, $\bS$ plays on $r$. $\bD$ must play on a green out-neighbor of $c_3$ in $\langle~\bI_j ~|~ p'~\rangle$, which is $r'$, so the pair $(r, r')$ is forced. If she plays instead on $c_4$, $\bS$ plays on $q$. $\bD$ must play on a blue out-neighbor of $c_4$ in $\langle~\bI_j ~|~ p'~\rangle$, which is $q'$, and so $(q, q')$ is forced.

In the $j$-round game, if $\bS$ plays on $c_1$, $\bD$ can respond on $d_3$ and match $\bS$'s second round move. $\bS$ cannot exhibit the difference between the $c$'s and $d$'s in just $j - 1$ remaining rounds, so he loses. Similarly, if $\bS$ plays on $c_2$ for his first move, $\bD$ can respond on $d_4$. If $\bS$ plays
on $d_1$, $\bD$ can respond on $c_3$. And if $\bS$ plays on $d_2$, $\bD$ can respond on
$c_4$. In all these cases, $j$ more rounds are not enough to exhibit the difference between the $c$'s
and $d$'s. If $\bS$ tries anything else, $\bD$ responds symmetrically, using the automorphisms described above before Lemma \ref{polarity}, and wins each time.
\end{proof}

\paragraph*{The Reduction Instance}

Now, given the $\domset$ instance $\langle G, k\rangle$, we construct the auxiliary structure $\tilde{\bA}$ as depicted below (Fig.\ \ref{figX}). Note that the $\domset$ instance is on an \emph{undirected} graph, but the reduction will use directed edges (since the relation symbol $E$ in $\taug$ is interpreted as directed edges).
This structure is formed by stacking $k$ gadgets, consisting of a copy of $\bI_k$, then a copy of $\bI_{k-1}$, and so on, all the way down to a copy of $\bI_1$, and then adding edges $(q_j, p_{j-1}), (q'_j, p'_{j-1}), (r_j, p_{j-1}), (r'_j, p'_{j-1})$ for all $2 \leq j \leq k$. (Here, we refer to the $p$'s, $q$'s, and $r$'s in the block $\bI_j$ with the subscript $j$ to resolve ambiguities.) We name the vertices in the middle of the block $\bI_j$, $c^j_i$ or $d^j_i$, depending on whether they are adjacent to $j$ in-neighbors or $j - 1$ in-neighbors, respectively. Note that $\tilde{\bA}$ has not used the input graph $G$ yet; we use it in the next step of the construction.

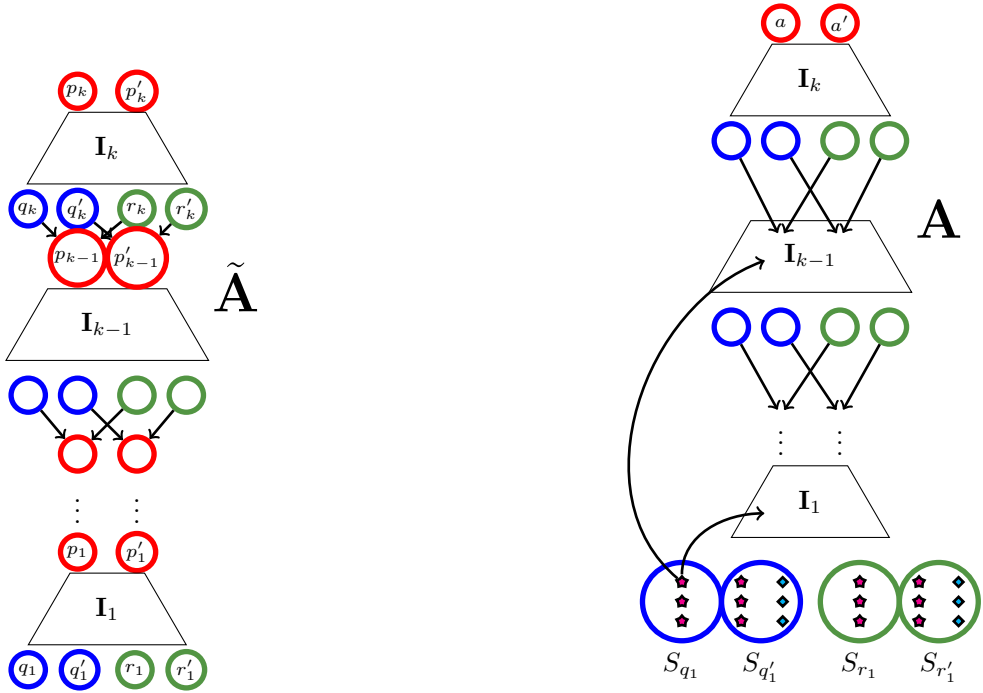
\begin{figure*}[ht]
\begin{center}
\tikzset{Red/.style={circle,draw=red, inner sep=1pt, minimum size=1cc, line width=2pt}}
\tikzset{RedOuter/.style={circle,draw=red, inner sep=1pt, minimum size=1.2cc, line width=2pt}}
\tikzset{BlueOuter/.style={circle,draw=blue, inner sep=1pt, minimum size=1.25cc, line width=2pt}}
\tikzset{Blue/.style={circle,draw=blue, inner sep=1pt, minimum size=1cc, line width=2pt}}
\tikzset{Green/.style={circle,draw=dg, inner sep=1pt, minimum size=1cc, line width=2pt}}
\tikzset{Magenta/.style={circle,draw=magenta, inner sep=1pt, minimum size=1cc, line width=2pt}}
\tikzset{Indigo/.style={circle,draw=indigo, inner sep=1pt, minimum size=1cc, line width=2pt}}
\tikzset{Black/.style={circle,draw=black, inner sep=1pt, minimum size=1cc, line width=1pt}}
\tikzset{DiamondS/.style={diamond,draw=black, fill=cyan, inner sep=1pt, minimum size=0.3cc, line width=1pt}}
\tikzset{StarS/.style={star,draw=black, fill=magenta, inner sep=1pt, minimum size=0.3cc, line width=1pt}}
\tikzset{Gray/.style={circle,draw=black, inner sep=1pt, minimum size=1cc, line width=1pt, fill=gray!20}}
\tikzset{White/.style={circle,draw=black, inner sep=1pt, minimum size=1cc, line width=1pt}}
\tikzset{TreeNode/.style={rectangle,draw=black, inner sep=1pt, minimum size=1.3cc, line width=1pt}}
\tikzset{BigTreeNode/.style={rectangle,draw=black, inner sep=1pt, minimum size=1.8cc, line width=1pt}}
\tikzset{Tleft/.style={isosceles triangle,anchor=west,draw=black, inner sep=1pt, minimum size=1cc,
    line width=1pt,fill=orange}}
\tikzset{TBleft/.style={isosceles triangle,anchor=west,draw=blue, inner sep=1pt, minimum size=1cc,
    line width=2pt,fill=orange}}
\tikzset{Tright/.style={isosceles triangle,anchor=west,draw=black, inner sep=1pt, minimum size=1cc,
    line width=1pt,fill=indigo,rotate=180}}
\tikzset{Trap/.style={trapezium, minimum height=0.95cm,draw}}

\tikzset{Red/.style={circle,draw=red, inner sep=1pt, minimum size=1cc, line width=2pt}}
\tikzset{RedOuter/.style={circle,draw=red, inner sep=1pt, minimum size=1.2cc, line width=2pt}}
\tikzset{BlueOuter/.style={circle,draw=blue, inner sep=1pt, minimum size=1.25cc, line width=2pt}}
\tikzset{Blue/.style={circle,draw=blue, inner sep=1pt, minimum size=1cc, line width=2pt}}
\tikzset{BigBlue/.style={ellipse,draw=blue, inner sep=1pt, minimum size=2.3cc, line width=2pt}}
\tikzset{Green/.style={circle,draw=dg, inner sep=1pt, minimum size=1cc, line width=2pt}}
\tikzset{BigGreen/.style={ellipse,draw=dg, inner sep=1pt, minimum size=2.3cc, line width=2pt}}
\tikzset{Magenta/.style={circle,draw=magenta, inner sep=1pt, minimum size=1cc, line width=2pt}}
\tikzset{Indigo/.style={circle,draw=indigo, inner sep=1pt, minimum size=1cc, line width=2pt}}
\tikzset{Black/.style={circle,draw=black, inner sep=1pt, minimum size=1cc, line width=1pt}}
\tikzset{Gray/.style={circle,draw=black, inner sep=1pt, minimum size=1cc, line width=1pt, fill=gray!20}}
\tikzset{White/.style={circle,draw=black, inner sep=1pt, minimum size=1cc, line width=1pt}}
\tikzset{Invis/.style={circle,draw=none, inner sep=0pt, minimum size=0.3cc, line width=1pt}}
\tikzset{TreeNode/.style={rectangle,draw=black, inner sep=1pt, minimum size=1.3cc, line width=1pt}}
\tikzset{BigTreeNode/.style={rectangle,draw=black, inner sep=1pt, minimum size=1.8cc, line width=1pt}}
\tikzset{Tleft/.style={isosceles triangle,anchor=west,draw=black, inner sep=1pt, minimum size=1cc,
    line width=1pt,fill=orange}}
\tikzset{TBleft/.style={isosceles triangle,anchor=west,draw=blue, inner sep=1pt, minimum size=1cc,
    line width=2pt,fill=orange}}
\tikzset{Tright/.style={isosceles triangle,anchor=west,draw=black, inner sep=1pt, minimum size=1cc,
    line width=1pt,fill=indigo,rotate=180}}
\tikzset{Trap/.style={trapezium, minimum height=0.95cm,draw}}

\begin{tikzpicture}[scale=.13]
\node [Red] (p1) at (-3,60) {{\scriptsize $p_k$}};
\node [Red] (p1') at (3,60) {{\scriptsize $p_k'$}};
\node [Trap] at (0,54.2) {$\bI_k$};
\node [Green] (m1) at (3,48) {{\scriptsize $r_k$}};
\node [Green] (m1') at (8,48) {{\scriptsize $r_k'$}};
\node [Blue] (g1) at (-8,48) {{\scriptsize $q_k$}};
\node [Blue] (g1') at (-3,48) {{\scriptsize $q_k'$}};
\node [Trap] at (0,36.2) {$\bI_{k-1}$};
\node [Red] (p2) at (-3,43) {{\scriptsize $p_{k-1}$}};
\node [Red] (p2') at (3,43) {{\scriptsize $p_{k-1}'$}};
\node [Green] (m2) at (3,29) {};
\node [Green] (m2') at (8,29) {};
\node [Blue] (g2) at (-8,29) {};
\node [Blue] (g2') at (-3,29) {};
\node [Red] (p3) at (-3,23) {};
\node [Red] (p3') at (3,23) {};
\node at (-3,18) {$\vdots$};
\node at (3,18) {$\vdots$};
\node [Red] (pb) at (-3,13) {{\scriptsize $p_1$}};
\node [Red] (pb') at (3,13) {{\scriptsize $p_1'$}};
\node [Trap] at (0,7.2) {$\bI_1$};
\node [Green] (mb) at (2.8,1) {{\scriptsize $r_1$}};
\node [Green] (mb') at (8,1) {{\scriptsize $r_1'$}};
\node [Blue] (gb) at (-8,1) {{\scriptsize $q_1$}};
\node [Blue] (gb') at (-2.8,1) {{\scriptsize $q_1'$}};

\node [Invis] (name) at (13,40) {{\huge $\tilde{\bA}$}};

\foreach \from/\to in {g1/p2,g1'/p2',m1/p2,m1'/p2',g2/p3,g2'/p3',m2/p3,m2'/p3'}
\draw[line width=1pt,color=black,->] (\from) -- (\to);
\end{tikzpicture}
\hspace{1.5in}
\begin{tikzpicture}[scale=.13]
\node [Red] (p1) at (-3,60) {{\scriptsize $a$}};
\node [Red] (p1') at (3,60) {{\scriptsize $a'$}};
\node [Trap] at (0,54.2) {$\bI_k$};
\node [Green] (m1) at (3,48) {};
\node [Green] (m1') at (8,48) {};
\node [Blue] (g1) at (-8,48) {};
\node [Blue] (g1') at (-3,48) {};
\node [Trap] at (0,36.2) {$\bI_{k-1}$};
\node [Invis] (p2) at (-3,38) {};
\node [Invis] (p2') at (3,38) {};
\node [Green] (m2) at (3,29) {};
\node [Green] (m2') at (8,29) {};
\node [Blue] (g2) at (-8,29) {};
\node [Blue] (g2') at (-3,29) {};
\node [Invis] (p3) at (-3,20) {};
\node [Invis] (p3') at (3,20) {};
\node at (-3,18) {$\vdots$};
\node at (3,18) {$\vdots$};
\node [Invis] (pb) at (-3,13) {};
\node [Invis] (pb') at (3,13) {};
\node [Trap] at (0,11.2) {$\bI_1$};
\node [BigGreen] (mb) at (5,1) [label=below:$S_{r_1}$]{};
\node [BigGreen] (mb') at (13,1) [label=below:$S_{r'_1}$]{};
\node [BigBlue] (gb) at (-13,1) [label=below:$S_{q_1}$]{};
\node [BigBlue] (gb') at (-5,1) [label=below:$S_{q'_1}$]{};

\node [Invis] (out1) at (-4,10) {};
\node [Invis] (out2) at (-4,36) {};
\node [Invis] (name) at (13,40) {{\huge $\bA$}};

\node [DiamondS] (v1) at (-2.8,3) {};
\node [DiamondS] (v1) at (-2.8,1) {};
\node [DiamondS] (v1) at (-2.8,-1) {};
\node [StarS] (v1) at (-7,3) {};
\node [StarS] (v1) at (-7,1) {};
\node [StarS] (v1) at (-7,-1) {};
\node [StarS] (v1) at (-13,-1) {};
\node [StarS] (v1) at (-13,1) {};
\node [StarS] (v1) at (-13,3) {};

\node [StarS] (y1) at (5,3) {};
\node [StarS] (y1) at (5,1) {};
\node [StarS] (y1) at (5,-1) {};
\node [StarS] (y1) at (11,3) {};
\node [StarS] (y1) at (11,1) {};
\node [StarS] (y1) at (11,-1) {};
\node [DiamondS] (y1) at (15,-1) {};
\node [DiamondS] (y1) at (15,1) {};
\node [DiamondS] (y1) at (15,3) {};

\foreach \from/\to in {g1/p2,g1'/p2',m1/p2,m1'/p2',g2/p3,g2'/p3',m2/p3,m2'/p3'}
\draw[line width=1pt,color=black,->] (\from) -- (\to);

\draw[line width=1pt,color=black,->] (v1) to[out=90,in=180] (out1);
\draw[line width=1pt,color=black,->] (v1) to[out=140,in=200] (out2);

\end{tikzpicture}
\end{center}
\caption{(Left) The auxiliary structure $\tilde{\bA}$, built using gadgets $\bI_1, \bI_2, \ldots, \bI_k$ stacked together as shown. (Right) The structure $\bA$, formed by modifying $\tilde{\bA}$, first by removing the red intermediate vertex pairs, and then by replacing the lowest $q_1, q'_1, r_1, r'_1$ with independent sets with two types of vertices -- graph vertices (magenta stars) and null vertices (cyan diamonds), and finally, by drawing edges from the graph vertices to their neighbors in the upper gadgets. Only two such edges are shown.}
\label{figX}
\end{figure*}

To build $\bA$ from $\tilde{\bA}$, we do the following modifications. In what follows, whenever we \emph{replace} a vertex $v$ by an independent set $S_v$, this means that:
\begin{itemize}
    \item every edge $(u,v)$ incoming into $v$ is replaced by the \emph{set} $\{(u,w) ~:~ w \in S_v\}$ of edges, and
    \item every edge $(v,u')$ outgoing from $v$ is replaced by the \emph{set} $\{(w, u') ~:~ w \in S_v\}$ of edges.
\end{itemize}
Note that the set $S_v$ itself gets no internal edge.
Recall that $G$ has $n$ vertices, $\{v_1, \ldots, v_n\}$.
\begin{enumerate}
    \item The vertices $v$ in the middle of $\bI_j$ are replaced by size-$n$ independent sets $S_v := \set{\langle{v,v_t}\rangle ~:~ v_t \in V(G)}$.
    For instance, if $v_6$ is a vertex of $G$, then we have a vertex such as $\angle{c^{17}_1, v_6}$, which corresponds to vertex $v_6$ in the size-$n$ independent set created by replacing the $c_1$ vertex in the $17$th gadget from the bottom.
    \item The blue (resp.~green) vertex $q_1$ (resp.~$r_1$) is replaced by the size-$n$ independent set $S_{q_1}$ (resp.~$S_{r_1}$) of blue (resp.~green) vertices, defined as follows:
    \begin{align*}
        S_{q_1} := \set{\langle q_1, v_t\rangle ~:~ v_t \in V(G)\rangle} && S_{r_1} := \set{\langle r_1, v_t\rangle ~:~ v_t \in V(G)\rangle}
    \end{align*}
    \item The blue (resp.~green) vertex $q'_1$ (resp.~$r'_1$) is replaced by the size-$(n + 1)$ independent set $S_{q'_1}$ (resp.~$S_{r'_1}$) of blue (resp.~green) vertices, defined as follows:
    \begin{align*}
        S_{q'_1} := \set{\langle q'_1, v_t\rangle ~:~ v_t \in V(G)\rangle} \cup \{ \mathsf{null}_{q'}\} && S_{r'_1} := \set{\langle r'_1, v_t\rangle ~:~ v_t \in V(G)\rangle} \cup \{\mathsf{null}_{r'}\}.
    \end{align*}
    \item We get rid of the red vertices $p_j, p'_j$ for $j = 1, \ldots, k - 1$ by connecting each of their in-neighbors with each of their corresponding out-neighbors. For example, for every in-neighbor $u$ of $p_j$ and for every out-neighbor $v$ of $p_j$, we add the edge $(u, v)$, and finally delete the vertex $p_j$ altogether.
    \item Finally, we add some more edges. For each vertex $v_t$ in $G$, we add all edges of the form $(\langle w, v_t\rangle, \langle w', v_t\rangle)$ for all $w \in \{q_1, q'_1, r_1, r'_1\}$, 
    and $w' \in \{c^j_1, \ldots, c^j_4, d^j_1, \ldots, d^j _4 ~:~ 1 \leq j \leq k\}$. Every vertex $v_t$ in $G$, therefore, contributes in this way to exactly $32k$ new edges.  For each edge $(v_s, v_t)$ in $G$, we add all edges of the form $(\langle w, v_s\rangle, \langle w', v_t\rangle)$ as well as $(\langle w, v_t\rangle, \langle w', v_s\rangle)$ for all $w \in \{q_1, q'_1, r_1, r'_1\}$, and $w' \in \{c^j_1, \ldots, c^j_4, d^j_1, \ldots, d^j_4\}$. Every edge $(v_s, v_t)$ in $G$, therefore, contributes in this way to exactly $64k$ new edges. Even with these additions, the special blue vertex $\mathsf{null}_{q'}$ and the special green vertex $\mathsf{null}_{r'}$ have no out-neighbors. 

\end{enumerate}

Furthermore, let $a = p_k$, and $a' = p'_k$. We can now state and prove our first main result.

\begin{restatable}{theorem}{EFNPhard}\label{thm:EFNPhard}
The problem $\winef$ is $\NP$-hard.    
\end{restatable}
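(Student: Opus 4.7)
The plan is to establish the equivalence
$$\langle G, k\rangle \in \domset \iff \langle k+1, (\langle \bA \mid a\rangle, \langle \bA \mid a'\rangle)\rangle \in \winef$$
by analyzing both implications separately, using the Polarity-Preserving Lemma (Lemma~\ref{polarity}) as the core technical tool.

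For the forward direction, assume $U = \{u_{i_1}, \ldots, u_{i_k}\} \subseteq V(G)$ is a dominating set of size $k$. I would exhibit a winning strategy for $\bS$ in $k+1$ rounds that cascades through the $k$ stacked gadgets. For $j = 1, \ldots, k$, in round $j$ the Spoiler descends one level by playing a pebble on the left-side copy indexed by $u_{i_j}$ inside the size-$n$ independent set that replaced a middle vertex of gadget $\bI_{k-j+1}$. The Polarity-Preserving Lemma, suitably lifted to the independent-set blow-ups, ensures that each such move forces $\bD$ into a choice of the corresponding ``polarity'' (either $(q_j, q'_j)$-style or $(r_j, r'_j)$-style), and the linking edges $(q_j, p_{j-1})$, $(r_j, p_{j-1})$, etc.\ (which survive the removal of intermediate red vertices as direct edges into the middle vertices below) propagate the constraint to the next gadget. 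In the final round $k+1$, $\bS$ plays on the extra vertex $\mathsf{null}_{q'}$ or $\mathsf{null}_{r'}$ on the right side. Since this vertex has no out-neighbors and no copy on the left side's $S_{q_1}$ or $S_{r_1}$, $\bD$ must respond with some $\langle q_1, v_t\rangle$ or $\langle r_1, v_t\rangle$ on the left; the edges added in step~(5) of the construction then create a mismatch (with the null vertex's lack of out-neighbors) precisely when $v_t$ is dominated by one of the $u_{i_j}$'s pinned down earlier. Since $U$ dominates $V(G)$, every choice of $v_t$ triggers such a mismatch, so $\bD$ loses.

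For the reverse direction I would argue the contrapositive: if $G$ has no dominating set of size $\leq k$, then $\bD$ wins the $(k+1)$-round game. Duplicator's strategy is to exploit the automorphisms of each gadget $\bI_j$ noted just before Lemma~\ref{polarity} to mirror $\bS$'s moves whenever possible, and to reserve $\mathsf{null}_{q'}$ and $\mathsf{null}_{r'}$ as an ``emergency response'' to $\bS$'s last move. Because $\bS$ has only $k+1$ moves and each move can pin down at most one vertex of $G$ through its choice inside an independent set, his moves effectively commit to a set $U^\ast \subseteq V(G)$ of size at most $k$. Since $U^\ast$ is not a dominating set, some vertex $v_t$ exists that is neither in $U^\ast$ nor adjacent to any vertex in $U^\ast$; $\bD$ can safely route all her responses through the $v_t$-indexed copies in the independent sets (or through the null vertex on the side opposite to $\bS$), and the absence of edges from $v_t$ to any encoded vertex means no atomic mismatch ever arises.

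The main obstacle will be verifying that the Polarity-Preserving Lemma transfers cleanly to the modified gadget inside $\bA$, where middle vertices have been replaced by size-$n$ independent sets and new edges from the $v_t$-indexed vertices in $S_{q_1}, S_{r_1}$ run back up into the middle vertices of all upper gadgets. These extra edges could, in principle, allow $\bS$ to exhibit a counting discrepancy between a $c^j_i$-copy and a $d^j_i$-copy using fewer than $j$ moves, or conversely give $\bD$ a new isomorphism that short-circuits the intended chain. I would address this by a careful case analysis showing that (i) the cross-level edges are symmetric under the automorphisms used by $\bD$ in the reverse direction, and (ii) the auxiliary in-neighbors of each $c^j_i$ and $d^j_i$ still differ by exactly $1$ in count, so the $j$-round lower bound of Lemma~\ref{polarity} carries over verbatim. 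Combined with the fact that the colors $R$, $G$, $B$ can be encoded away using only the binary relation $E$ (via small rigid subgraph signatures attached to each color class), this yields the $\NP$-hardness of $\winef$ even over the schema $\taug$ of directed graphs.
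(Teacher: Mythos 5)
Your proposal follows essentially the same route as the paper's proof: the same reduction from $\domset$ via the stacked $\bI_j$ gadgets, Spoiler encoding the dominating set through his choice of $G$-indexed copies while the Polarity-Preserving Lemma forces the descent, the $\mathsf{null}$-vertex endgame exploiting the step-(5) edges, and for the converse the mirroring-by-automorphism strategy together with an undominated vertex as Duplicator's final response (the paper removes the colors with self-loops on green vertices, matching your encoding-away remark). The only caveat is that your accounting ``$U^\ast$ has size at most $k$'' needs the explicit observation, made via the paper's induction, that Spoiler must spend one of his $k+1$ moves on a $\mathsf{null}$ vertex for any mismatch to arise, so only $k$ moves can designate vertices of $G$.
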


\begin{proof}
As stated, we shall reduce from $\domset$. Given an instance $\langle G, k\rangle$ of $\domset$, we construct the instance $\langle k+1, (\langle \bA ~|~ a\rangle, \langle \bA ~|~ a'\rangle)\rangle$ as described above. Of course, it is easy to check that this instance can be constructed from $\langle G,k\rangle$ in polynomial time. In fact, this is a first-order mapping. Thus, we are presenting a first-order reduction from $\domset$ to $\winms$.

We shall now show that:
\begin{equation*}
    \langle G, k\rangle \in \domset \iff \langle k+1, (\langle \bA ~|~ a\rangle, \langle \bA ~|~ a'\rangle)\rangle \in \winef.
\end{equation*}
\begin{itemize}
    \item \textbf{Dominating Set Implies Spoiler Wins:} Suppose that $G$ has a dominating set of size $k$, say $U = \{u_1, \ldots, u_k\}$. We wish to show that $\bS$ has a winning strategy in the $(k+1)$-round $\ef$ game on $(\langle \bA ~|~ a\rangle, \langle \bA ~|~ a'\rangle)$. $\bS$ plays his first round move on $\langle c^k_1, u_1\rangle$. $\bD$ must respond on an uncolored out-neighbor of $a'$ on the right side, hence on one of the vertices in $S_w$ for $w \in \{c^k_3, c^k_4, d^k_3, d^k_4\}$. By the proof of Lemma \ref{polarity}, we know that any response by $\bD$ in $S_{d^k_3}$ or $S_{d^k_4}$ results in her losing the game in the remaining $k$ rounds. Therefore, $\bD$ must respond on a vertex in $S_{c^k_3}$ or $S_{c^k_4}$, forcing the pair $(r_k, r'_k)$ or $(q_k, q'_k)$ respectively.
    $\bS$ can now play his second round move on $\langle c^{k-1}_1, u_2\rangle$. Again, $\bD$ must respond on a vertex in $S_w$ for some $w \in \{c^{k-1}_3, c^{k-1}_4, d^{k-1}_3, d^{k-1}_4\}$, since any other response would result her losing in the next round by $\bS$ playing on $q_k$ or $r_k$ (depending on whether $(q_k, q'_k)$ or $(r_k, r'_k)$ was forced). Again, if $\bD$ responds in $S_{d^{k-1}_3}$ or $S_{d^{k-1}_4}$, then $\bS$ wins in the remaining $k - 1$ rounds. It can be shown inductively that $\bS$ can keep going with this strategy, playing in round $i$ on $\langle c^{k+1-i}_1, u_i\rangle$, and $\bD$ must respond in one of the vertices in $S_w$ for $w \in \{c^{k+1-i}_3, c^{k+1-i}_4\}$, forcing the pair $(r_{k+1-i}, r'_{k+1-i})$ or the pair $(q_{k+1-i}, q'_{k+1-i})$ respectively. Hence, at the end of the $k$-th round, we know that one of the pairs $(q_1, q'_1)$ or $(r_1, r'_1)$ is forced in the block $\bI_1$, say WLOG $(q_1, q'_1)$. Finally, for his $(k+1)$-th round move, $\bS$ plays on the right side, on the vertex $\mathsf{null}_{q'}$. $\bD$ must respond on a vertex in $S_{q_1}$ on the left side, say on the vertex $\langle q_1, v_\ell\rangle$.
    
    As we noted before, the special vertex $\mathsf{null}_{q'}$ has no out-neighbors. We claim that $\langle q_1, v_\ell\rangle$ has an out-neighbor with a pebble on it, and hence the two structures cannot form a matching pair. The element $v_\ell$ corresponds to a vertex in $G$. Either it is in the dominating set $U$, or it has a neighbor in $U$. In the first case, say, $v_\ell = u_t$ for some $1 \leq t \leq k$. In round $t$, $\bS$ played on $\langle c^{k+1-t}_1, u_t\rangle$ on the left, and there is an edge from $\langle q_1, v_\ell\rangle$ to $\langle c^{k+1-t}_1, u_t\rangle$ by construction. In the second case, a similar argument holds, where $u_t$ is the neighbor of $v_\ell$ in the dominating set $U$.
    
    \item \textbf{No Dominating Set Implies Duplicator Wins:} Suppose that $G$ has no dominating set of size $k$. We wish to show that $\bD$ wins the $(k+1)$-round EF game on $(\langle \bA ~|~ a\rangle, \langle \bA ~|~ a'\rangle)$.
    
    We can show by induction on $i$ that if $\bS$'s first $i$ moves are not successively on vertices in the independent sets arising from the middle of blocks $\bI_k, \ldots, \bI_{k + 1 - i}$, then $\bD$ can maintain a matching pair in the remaining $k + 1 - i$ rounds.
    
    To see this for $i = 1$, note that any move by $\bS$ on a blue or green vertex can be met by $\bD$ on the same blue or green vertex on the other side (i.e., $q_r$ met with $q_r$, $q'_r$ met with $q'_r$, etc). Breaking isomorphisms now is not possible without going ``up'' the structure $\bA$, and there are not enough rounds remaining for this (since the $c$'s and $d$'s in block $\bI_k$ are indistinguishable with $k - 1$ rounds remaining). Similarly, any play by $\bS$ on an in-neighbor of a $c$ or $d$ vertex can be met in the same way. For $i > 1$, the induction goes through in the same way, using the fact that the $c$'s and $d$'s in block $\bI_{k + 1 - i}$ and higher are indistinguishable in the remaining rounds.
    
    Assume now that $\bD$ always ``mirrors'' $\bS$'s moves; if in round $i$, $\bS$ plays in $S_{c^{k+1-i}_j}$ on the vertex $\langle c^{k+1-i}_j, v_t\rangle$, then $\bD$ responds on $\langle c^{k+1-i}_{5-j}, v_t\rangle$.
    
    Therefore, the only possibility is for $\bS$ to force the game into the block $\bI_1$ using the first $k$ rounds, as stated. He must, therefore, force the pair $(q_1, q'_1)$ or $(r_1, r'_1)$ with one round remaining. Say WLOG the forced pair is $(r_1, r'_1)$. What can $\bS$ do in the final round? It is clear that any move except in $S_w$ for $w \in \{q_1, q'_1, r_1, r'_1\}$ can be met trivially. Of course, a move in $S_{q_1}$ or $S_{q'_1}$ can be met with the same move on the other side, whereas a move on $\langle r_1, v_t\rangle$ or $\langle r'_1, v_t\rangle$ can be met trivially with $\langle r'_1, v_t\rangle$ and $\langle r_1, v_t\rangle$ respectively. It therefore only suffices to consider the possibility when $\bS$ makes his last play on $\mathsf{null}_{r'}$. $\bD$ must now respond on a green vertex in $S_{r_1}$, i.e., on a vertex of the form $\langle r_1, v_\ell\rangle$ for some $v_\ell \in V(G)$.
    
    Since by assumption, $G$ does not have a dominating set of size $k$, consider the vertices of $G$ designated by $\bS$ in the first $k$ rounds. His moves in these rounds have specified some subset of vertices in $G$ of size at most $k$. This cannot be a dominating set, and so there must be some vertex $v_\ell$ in $G$ outside this subset, which does not have a neighbor in the subset. $\bD$ can play her $(k+1)$-th round move on $\langle r_1, v_\ell\rangle$. By construction (and by $\bD$'s ``mirroring'' strategy), this does not have an out-neighbor among the vertices with pebbles placed on it, and hence, it maintains the matching pair.
\end{itemize}
\end{proof}

We remark here that we can easily remove the colors of the vertices, thereby establishing the hardness result for even directed graphs. In particular, we can get rid of the red vertices easily (since we start with a pebble on $a$ and $a'$ anyway). In order to distinguish between the blue and green vertices, we simply use a self-loop on each of the green vertices, and then consider the vertices as uncolored. This converts the entire reduction structure into a directed graph. In particular, this shows that the ternary relation in the schema used in the proof of the main result in \cite{PezzoliThesis} is not necessary.

\begin{corollary}
The problem $\winef$ is $\NP$-hard even on directed graphs.
\end{corollary}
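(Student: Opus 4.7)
The plan is to modify the reduction in the proof of Theorem \ref{thm:EFNPhard} so that the output is a plain directed graph, using only the binary edge relation $E$ and no unary predicates. First, I would drop the predicate $R$ marking the red vertices. By step 4 of the construction of $\bA$, the only surviving red vertices are $a = p_k$ and $a' = p'_k$, and these already carry pebbles from the start of the game; the pebbles themselves play the role of $R$, so dropping $R$ changes nothing. Second, to distinguish blue from green using only edges, I would add a self-loop $(v,v)$ at every green vertex and at no other vertex. After this modification, the green vertices are exactly those incident to a self-loop, so the predicates $B$ and $G$ can both be discarded, yielding a directed graph. The construction remains polynomial time (in fact, first-order).

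What remains is to check that the equivalence $\langle G,k\rangle \in \domset \iff \langle k+1, (\langle \bA ~|~ a\rangle, \langle \bA ~|~ a'\rangle)\rangle \in \winef$ survives. For the forward direction, Spoiler's strategy already refers only to graph-theoretic features: the proof of Lemma \ref{polarity} distinguishes $c$-vertices from $d$-vertices purely by in-degree, and the finishing move uses only that $\mathsf{null}_{r'}$ and $\mathsf{null}_{q'}$ are sinks in the original construction. The self-loops are placed symmetrically on the two sides of the game, so every matching pair in the original analysis remains a matching pair after adding the loops, and every non-matching pair remains non-matching. For the backward direction, Duplicator's mirroring strategy always swaps vertices of the same type (same original colour, same position in the gadget), hence preserves self-loop status and continues to maintain a matching pair throughout.

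The one technical point I expect to verify carefully is the very last move on $\mathsf{null}_{r'}$ (or $\mathsf{null}_{q'}$), since this vertex is no longer a true sink. In the directed-graph version, $\mathsf{null}_{r'}$ now has exactly one out-neighbour, namely itself via the added self-loop. When Duplicator responds on some $\langle r_1, v_\ell\rangle \in S_{r_1}$, that vertex also has a self-loop, so the loops match under the induced isomorphism on the pebbled substructures; but any previously placed pebble on a non-self-loop out-neighbour of $\langle r_1, v_\ell\rangle$ — which exists precisely when $v_\ell$ is dominated by Spoiler's first $k$ moves — still has no counterpart among the out-neighbours of $\mathsf{null}_{r'}$, breaking the matching pair exactly as in Theorem \ref{thm:EFNPhard}. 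This confirms that the equivalence carries over verbatim, establishing $\NP$-hardness of $\winef$ on directed graphs.
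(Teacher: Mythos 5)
Your proposal matches the paper's own argument: the paper likewise discards the red predicate because only the initially pebbled vertices $a$ and $a'$ remain red after the intermediate $p_j, p'_j$ are removed, and distinguishes green from blue by placing self-loops on the green vertices. Your additional verification that $\mathsf{null}_{r'}$ (and $\mathsf{null}_{q'}$) ceases to be a sink but still breaks matching pairs via the unmatched out-edges to previously pebbled vertices is correct and fills in detail the paper leaves implicit.
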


\subsection{NP-Hardness of MS Games}\label{sec:MSNP}

In this section, we adapt the ideas from Section \ref{sec:EFNP} to show that $\winms$ is $\NP$-hard. This will illustrate the main ingredients of our $\PSPACE$-hardness proof, including the use of approximability and parallel play.

\begin{restatable}{theorem}{MSNPhard}\label{MS-NPhard}
The problem $\winms$ is $\NP$-hard. 
\end{restatable}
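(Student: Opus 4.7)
My plan is to adapt the reduction of Theorem \ref{thm:EFNPhard}, but starting from the gap version of $\domset$ furnished by Theorem \ref{domsetapprox} rather than from $\domset$ itself; the role of the inapproximability gap will be to absorb the extra quantifiers Spoiler must spend in order to cope with Duplicator's stronger (oblivious) strategy in MS games. Given a $\domset$ instance $\langle G,k\rangle$ satisfying the Raz--Safra promise---either $\opt(G)\le k$ or $\opt(G)>ck\log|V|$---I will construct the MS instance $\langle m,(\{\langle \bA\mid a\rangle\},\{\langle \bA\mid a'\rangle\})\rangle$, where $\bA$, $a$, and $a'$ are exactly the structures from the proof of Theorem \ref{thm:EFNPhard} and $m$ is a value of order $k$ chosen below.

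The ``NO'' direction follows almost immediately from the EF case. By Theorem \ref{thm:MSfundamental}, a Spoiler win in the $m$-round MS game produces a separating formula with at most $m$ quantifiers, which in particular has quantifier rank at most $m$, so by Theorem \ref{thm:EFfundamental} Spoiler also wins the $m$-round EF game on $(\langle \bA\mid a\rangle,\langle \bA\mid a'\rangle)$. Contrapositively, whenever Duplicator wins the EF game she also wins the MS game. If $\opt(G)>ck\log|V|$, Theorem \ref{thm:EFNPhard} guarantees that Duplicator wins the $m$-round EF game for every $m$ with $m-1<ck\log|V|$, and hence she wins the MS game for all such $m$.

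For the ``YES'' direction, suppose $U=\{u_1,\dots,u_k\}$ is a dominating set of $G$. Spoiler's EF strategy was to play $\langle c^{k+1-i}_1,u_i\rangle$ on the left in round $i$ for $1\le i\le k$ and then $\mathsf{null}_{q'}$ on the right in round $k+1$. In the MS game Duplicator may branch obliviously, so I invoke Lemma \ref{lem:parallelplay} recursively: after each of Spoiler's moves, her surviving copies partition into a ``$c$-type'' sub-game (the continuing descent to the next gadget $\bI_{k-i}$) and a ``$d$-type'' sub-game (where Spoiler defeats the count mismatch by playing all $k{+}1{-}i$ auxiliary in-neighbors of $c^{k+1-i}_1$ on the left), with Lemma \ref{lem:discard} dismissing any other response. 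Because every exhibit sub-game is played entirely on the left, all branches of Spoiler's strategy can be made compatible with a single side sequence of the form $L^{m-1}R$, which is precisely the alignment that Lemma \ref{lem:parallelplay} requires.

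The principal technical obstacle is to align these parallel side sequences and to account for $m$ carefully. The deepest exhibit branch, triggered already in round $1$, demands $k$ consecutive left moves after round $1$, whereas the continuing branch demands exactly one right move on $\mathsf{null}_{q'}$ at the very end in order to exploit the dominating-set property; reconciling the two forces Spoiler to leave a small additive amount of slack, giving $m=k+O(1)$. Since $k+O(1)<ck\log|V|$ on all sufficiently large instances, this $m$ sits strictly inside the Raz--Safra gap, so the reduction sends gap-YES instances to Spoiler-wins and gap-NO instances to Duplicator-wins, establishing $\NP$-hardness of $\winms$. Finally, the vertex colors can be eliminated exactly as in the corollary to Theorem \ref{thm:EFNPhard}, so the hardness extends to plain directed graphs.
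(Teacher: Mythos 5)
There is a genuine gap, and it sits exactly at the point your proposal identifies as its ``principal technical obstacle'': the claim that $\bS$ can win the MS game in $m=k+O(1)$ rounds with a side sequence of the form $L^{m-1}R$. In the MS game, Duplicator only gets to copy the side \emph{opposite} to the one Spoiler plays on; so if Spoiler plays left for the first $m-1$ rounds, the left side remains a \emph{single} pebbled structure throughout, and in each round Spoiler makes exactly one move on it. Lemma \ref{lem:parallelplay} therefore cannot be used the way you intend: it requires partitioning \emph{both} sides, $\cA=\cA_1\sqcup\cA_2$ and $\cB=\cB_1\sqcup\cB_2$, with no cross matching pairs at the end, but with a singleton left side one of $\cA_1,\cA_2$ is empty, and the descent moves ($\langle c^{k-i}_1,u_{i+1}\rangle$) and the counting moves (the $k+1-i$ distinguishing in-neighbors) are \emph{different} elements that cannot both be pebbled in the same round on the same single structure. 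Concretely, your strategy loses: Duplicator can answer the round-$1$ move on $\langle c^k_1,u_1\rangle$ with a copy pebbled on a $d$-type vertex (say in $S_{d^k_3}$), which has the same out-edges to $q_k,r_k$ as $c^k_1$, and then simply mirror every subsequent descent move of Spoiler on that copy; when Spoiler finally plays $\mathsf{null}_{q'}$ on the right in the last round, she answers on a fresh left copy by playing $\mathsf{null}_{q'}$ there as well, and this pair is matching (the null vertex has no edges to any pebbled element on either side). Breaking the $d$-type copies requires the $k$ counting moves on the left, which conflict with the descent, so $m=k+O(1)$ is not achievable by this route; your ``small additive slack'' does not exist.

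The paper's proof resolves precisely this difficulty differently, and at a price you did not account for: Spoiler plays on the \emph{right} in every other round (pebbling $q'_j$ or $r'_j$ to pin the polarity explicitly, and distinguishing in-neighbors on the $d$-type right copies), which forces Duplicator to branch the \emph{left} side, creating the disjoint left classes that Lemma \ref{lem:parallelplay} needs to run the descent and the counting sub-games in parallel. This costs $2k+1$ rounds in the YES direction, against $k+1$ in the NO direction, so the reduction only yields a factor-$2$ gap; that is exactly why the paper invokes the constant-factor inapproximability of $\mindomset$ (Theorem \ref{domsetapprox}) via a $2$-approximation algorithm built from a $\winms$ oracle, rather than the full $c\log|V|$ gap you try to exploit. (A secondary, smaller issue: your NO direction cites Theorem \ref{thm:EFNPhard} for Duplicator winning all $m$-round games with $m-1<ck\log|V|$ on the $k$-floor structure, but that theorem only covers the $(k+1)$-round game; extra rounds let Spoiler designate extra vertices of $G$, so any claim for larger $m$ needs its own argument. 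The paper avoids this by keeping the round budgets at $k+1$ and $2k+1$.)
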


\begin{proof}
The idea once again is to use the same construction, $\bA$,
as in Section \ref{sec:EFNP} and use $\domset$. In particular, given an instance $\langle G, k\rangle$ of $\domset$, we again construct the same $\taug$-structure $\bA$ and show that:
\begin{align}
    \langle G, k\rangle \in \domset &\implies \langle{2k + 1}, (\{\langle \bA ~|~ a\rangle\}, \{\langle \bA ~|~ a'\rangle\})\rangle \in \winms. \label{eq1}\\
    \langle G, k\rangle \notin \domset &\implies \langle{k+1}, (\{\langle \bA ~|~ a\rangle\}, \{\langle \bA ~|~ a'\rangle\})\rangle \notin \winms. \label{eq2}
\end{align}
We will then use the known inapproximability of $\domset$ (Lemma \ref{domsetapprox}) to obtain our result.

\begin{itemize}
    \item \textbf{Dominating Set Implies Spoiler Wins the $(2k + 1)$-Round Game:} Suppose again that $G$ has a dominating set of size $k$, say $U = \{u_1, \ldots, u_k\}$. We wish to show that $\bS$ wins the $(2k+1)$-round $\ms$ game on $(\{\langle \bA ~|~ a\rangle\}, \{\langle \bA ~|~ a'\rangle\})$. We will use the technique of parallel play, described in Lemma \ref{lem:parallelplay}. The main idea is to partition a game into multiple parallel instances of sub-games, each corresponding to a different isomorphism class, and then playing them in parallel.
    
    We start with the usual instance of this game, where we focus first on the top-most builder gadget, $\bI_k$. Here, the $c$-vertices (with $k$ in-neighbors) are drawn as squares, and the $d$-vertices (with $k - 1$ in-neighbors) are drawn as circles.

    $\bS$ now makes his first round move (with pebble $x_1$) on the left, playing on $\langle c^k_1, u_1\rangle \in S_{c^k_1}$. $\bD$ has the ability to make copies for different responses. The only responses that preserve a partial isomorphism are when she plays on $S_w$, for $w \in \{c^k_3, c^k_4, d^k_3, d^k_4\}$, so we focus on those copies. Note that all other copies she creates can be discarded by Lemma \ref{lem:discard}. Partition the remaining pebbled structures on the right as:
    \begin{align*}
        \cB_1 &:= \{\langle \bA ~|~ a', v\rangle ~:~ v \in S_{c^k_3}\} \\
        \cB_2 &:= \{\langle \bA ~|~ a', v\rangle ~:~ v \in S_{c^k_4}\} \\
        \cB_3 &:= \{\langle \bA ~|~ a', v\rangle ~:~ v \in S_{d^k_3}\cup S_{d^k_4}\}.
    \end{align*}

    \begin{figure*}[h!]
    \begin{center}
    \tikzset{Red/.style={circle,draw=red, inner sep=1pt, minimum size=1cc, line width=2pt}}
    \tikzset{RedOuter/.style={circle,draw=red, inner sep=1pt, minimum size=1.2cc, line width=2pt}}
    \tikzset{BlueOuter/.style={circle,draw=blue, inner sep=1pt, minimum size=1.25cc, line width=2pt}}
    \tikzset{Blue/.style={circle,draw=blue, inner sep=1pt, minimum size=1cc, line width=2pt}}
    \tikzset{Green/.style={circle,draw=dg, inner sep=1pt, minimum size=1cc, line width=2pt}}
    \tikzset{Magenta/.style={circle,draw=magenta, inner sep=1pt, minimum size=1cc, line width=2pt}}
    \tikzset{Indigo/.style={circle,draw=indigo, inner sep=1pt, minimum size=1cc, line width=2pt}}
    \tikzset{Black/.style={circle,draw=black, inner sep=1pt, minimum size=1cc, line width=1pt}}
    \tikzset{BlackS/.style={rectangle,draw=black, inner sep=0.5pt, minimum size=0.5cc, line width=0.5pt}}
    \tikzset{BlackT/.style={ellipse,draw=black, inner sep=0.3pt, minimum size=0.5cc, line width=0.5pt}}
    \tikzset{Gray/.style={circle,draw=black, inner sep=1pt, minimum size=1cc, line width=1pt, fill=gray!20}}
    \tikzset{Invis/.style={circle,draw=none, inner sep=1pt, minimum size=1cc, line width=1pt, fill=white}}
    \tikzset{TreeNode/.style={rectangle,draw=black, inner sep=1pt, minimum size=1.3cc, line width=1pt}}
    \tikzset{BigTreeNode/.style={rectangle,draw=black, inner sep=1pt, minimum size=1.8cc, line width=1pt}}
    \tikzset{Tleft/.style={isosceles triangle,anchor=west,draw=black, inner sep=1pt, minimum size=1cc,
        line width=1pt,fill=orange}}
    \tikzset{TBleft/.style={isosceles triangle,anchor=west,draw=blue, inner sep=1pt, minimum size=1cc,
        line width=2pt,fill=orange}}
    \tikzset{Tright/.style={isosceles triangle,anchor=west,draw=black, inner sep=1pt, minimum size=1cc,
        line width=1pt,fill=indigo,rotate=180}}
    
        \begin{tikzpicture}[scale=.07]

\node [Red] (p1) at(-8,12.5+50)  {$a$};
\node [Red] (r1') at(8,12.5+50)  {};
\node [BlackS] (i1) at(-24,0+50)  {\small\bf $x_1$};
\node [BlackS] (i2) at(-17,0+50)  {};
\node [BlackT] (m1) at(-10,0+50)  {};
\node [BlackT] (m1') at(-3,0+50)  {};
\node [BlackS] (i3) at(3,0+50)  {};
\node [BlackS] (i4) at(10,0+50)  {};
\node [BlackT] (m3) at(17,0+50)  {};
\node [BlackT] (m4) at(24,0+50)  {};
\node [Blue] (g1) at(-28,-12.5+50)  {};
\node [Blue] (g1') at(-12,-12.5+50)  {};
\node [Green] (b1) at(12,-12.5+50)  {};
\node [Green] (b2) at(28,-12.5+50)  {};

\foreach \from/\to in {p1/i1,p1/i2,p1/m1,p1/m1',i1/g1,i1/b1,i2/g1',i2/b2,m1/g1,m1/b2,m1'/g1',m1'/b1,
  r1'/i3,r1'/i4,r1'/m3,r1'/m4,i3/g1,i3/b2,i4/g1',i4/b1,m3/g1,m3/b1,m4/g1',m4/b2}
\draw[line width=1pt,color=black,->] (\from) -- (\to);

\node [Invis] (z1) at(0,0)  {};
\node [Invis] (z2) at(0, 100)  {};
\end{tikzpicture}
\hspace{1in}
\begin{tikzpicture}[scale=.07]

\node [Red] (p1) at(-8,12.5)  {};
\node [Red] (r1') at(8,12.5)  {$a'$};
\node [BlackS] (i1) at(-24,0)  {};
\node [BlackS] (i2) at(-17,0)  {};
\node [BlackT] (m1) at(-10,0)  {};
\node [BlackT] (m1') at(-3,0)  {};
\node [BlackS] (i3) at(3,0)  {};
\node [BlackS] (i4) at(10,0)  {};
\node [BlackT] (m3) at(17,0)  {};
\node [BlackT] (m4) at(24,0)  {\small\bf $x_1$};
\node [Blue] (g1) at(-28,-12.5)  {};
\node [Blue] (g1') at(-12,-12.5)  {};
\node [Green] (b1) at(12,-12.5)  {};
\node [Green] (b2) at(28,-12.5)  {};
\foreach \from/\to in {p1/i1,p1/i2,p1/m1,p1/m1',i1/g1,i1/b1,i2/g1',i2/b2,m1/g1,m1/b2,m1'/g1',m1'/b1,
  r1'/i3,r1'/i4,r1'/m3,r1'/m4,i3/g1,i3/b2,i4/g1',i4/b1,m3/g1,m3/b1,m4/g1',m4/b2}
\draw[line width=1pt,color=black,->] (\from) -- (\to);

\node [Red] (p1) at(-8,12.5+33)  {};
\node [Red] (r1') at(8,12.5+33)  {$a'$};
\node [BlackS] (i1) at(-24,0+33)  {};
\node [BlackS] (i2) at(-17,0+33)  {};
\node [BlackT] (m1) at(-10,0+33)  {};
\node [BlackT] (m1') at(-3,0+33)  {};
\node [BlackS] (i3) at(3,0+33)  {};
\node [BlackS] (i4) at(10,0+33)  {};
\node [BlackT] (m3) at(17,0+33)  {\small\bf $x_1$};
\node [BlackT] (m4) at(24,0+33)  {};
\node [Blue] (g1) at(-28,-12.5+33)  {};
\node [Blue] (g1') at(-12,-12.5+33)  {};
\node [Green] (b1) at(12,-12.5+33)  {};
\node [Green] (b2) at(28,-12.5+33)  {};
\foreach \from/\to in {p1/i1,p1/i2,p1/m1,p1/m1',i1/g1,i1/b1,i2/g1',i2/b2,m1/g1,m1/b2,m1'/g1',m1'/b1,
  r1'/i3,r1'/i4,r1'/m3,r1'/m4,i3/g1,i3/b2,i4/g1',i4/b1,m3/g1,m3/b1,m4/g1',m4/b2}
\draw[line width=1pt,color=black,->] (\from) -- (\to);

\node [Red] (p1) at(-8,12.5+67)  {};
\node [Red] (r1') at(8,12.5+67)  {$a'$};
\node [BlackS] (i1) at(-24,0+67)  {};
\node [BlackS] (i2) at(-17,0+67)  {};
\node [BlackT] (m1) at(-10,0+67)  {};
\node [BlackT] (m1') at(-3,0+67)  {};
\node [BlackS] (i3) at(3,0+67)  {};
\node [BlackS] (i4) at(10,0+67)  {\small\bf $x_1$};
\node [BlackT] (m3) at(17,0+67)  {};
\node [BlackT] (m4) at(24,0+67)  {};
\node [Blue] (g1) at(-28,-12.5+67)  {};
\node [Blue] (g1') at(-12,-12.5+67)  {};
\node [Green] (b1) at(12,-12.5+67)  {};
\node [Green] (b2) at(28,-12.5+67)  {};
\foreach \from/\to in {p1/i1,p1/i2,p1/m1,p1/m1',i1/g1,i1/b1,i2/g1',i2/b2,m1/g1,m1/b2,m1'/g1',m1'/b1,
  r1'/i3,r1'/i4,r1'/m3,r1'/m4,i3/g1,i3/b2,i4/g1',i4/b1,m3/g1,m3/b1,m4/g1',m4/b2}
\draw[line width=1pt,color=black,->] (\from) -- (\to);

\node [Red] (p1) at(-8,12.5+100)  {};
\node [Red] (r1') at(8,12.5+100)  {$a'$};
\node [BlackS] (i1) at(-24,0+100)  {};
\node [BlackS] (i2) at(-17,0+100)  {};
\node [BlackT] (m1) at(-10,0+100)  {};
\node [BlackT] (m1') at(-3,0+100)  {};
\node [BlackS] (i3) at(3,0+100)  {\small\bf $x_1$};
\node [BlackS] (i4) at(10,0+100)  {};
\node [BlackT] (m3) at(17,0+100)  {};
\node [BlackT] (m4) at(24,0+100)  {};
\node [Blue] (g1) at(-28,-12.5+100)  {};
\node [Blue] (g1') at(-12,-12.5+100)  {};
\node [Green] (b1) at(12,-12.5+100)  {};
\node [Green] (b2) at(28,-12.5+100)  {};
\foreach \from/\to in {p1/i1,p1/i2,p1/m1,p1/m1',i1/g1,i1/b1,i2/g1',i2/b2,m1/g1,m1/b2,m1'/g1',m1'/b1,
  r1'/i3,r1'/i4,r1'/m3,r1'/m4,i3/g1,i3/b2,i4/g1',i4/b1,m3/g1,m3/b1,m4/g1',m4/b2}
\draw[line width=1pt,color=black,->] (\from) -- (\to);
\end{tikzpicture}
\end{center}
\caption{State of the game after $\bD$'s round $1$ response, and after applying Lemma \ref{lem:discard}. $\bS$ has played pebble $x_1$ on the left, and the four relevant responses by $\bD$ are shown on the right. The top structure goes into $\cB_1$, the second goes into $\cB_2$, and the bottom two go into $\cB_3$.}
\label{figRound1}
\end{figure*}

    $\bS$ now plays on the right (with pebble $x_2$), on these copies. On every pebbled structure in $\cB_1$, he plays on $r'_k$. On every pebbled structure in $\cB_2$, he plays on $q'_k$. And on every pebbled structure in $\cB_3$, he plays on one of the uncolored in-neighbors of the element with pebble $x_1$ on it. Redefine $\cB_1, \cB_2, \cB_3$ to these new sets (with the round-$2$ pebbles).
    
    Now, $\bD$ responds on the left in all possible ways, but the only responses to consider are when she plays on $q_k$, $r_k$, or an uncolored in-neighbor of the element with pebble $x_1$ (all other copies can be discarded by Lemma \ref{lem:discard}). Crucially, her responses on the left on $q'_k$ do not maintain matching pairs with any structures on the right (including those with pebble $x_2$ on $q'_k$), and her responses on the left on $r'_k$ also do not maintain matching pairs with any structures on the right (including those with pebble $x_2$ on $r'_k$). Partition the remaining pebbled structures on the left as:
    \begin{align*}
        \cA_1 &:= \{\langle \bA ~|~ a, \langle c^k_1, u_1\rangle, r_k\rangle\} \\
        \cA_2 &:= \{\langle \bA ~|~ a, \langle c^k_1, u_1\rangle, q_k\rangle\} \\
        \cA_3 &:= \{\langle \bA ~|~ a, \langle c^k_1, u_1\rangle, v\rangle ~:~ v \text{ is an in-neighbor of the element with pebble }x_1\}.
    \end{align*}

    See Fig.\ \ref{figRound2} for the remaining boards.

    \begin{figure*}[h!]
    \begin{center}
    \tikzset{Red/.style={circle,draw=red, inner sep=1pt, minimum size=1cc, line width=2pt}}
    \tikzset{RedOuter/.style={circle,draw=red, inner sep=1pt, minimum size=1.2cc, line width=2pt}}
    \tikzset{BlueOuter/.style={circle,draw=blue, inner sep=1pt, minimum size=1.25cc, line width=2pt}}
    \tikzset{Blue/.style={circle,draw=blue, inner sep=1pt, minimum size=1cc, line width=2pt}}
    \tikzset{Green/.style={circle,draw=dg, inner sep=1pt, minimum size=1cc, line width=2pt}}
    \tikzset{Magenta/.style={circle,draw=magenta, inner sep=1pt, minimum size=1cc, line width=2pt}}
    \tikzset{Indigo/.style={circle,draw=indigo, inner sep=1pt, minimum size=1cc, line width=2pt}}
    \tikzset{Black/.style={circle,draw=black, inner sep=1pt, minimum size=1cc, line width=1pt}}
    \tikzset{BlackS/.style={rectangle,draw=black, inner sep=0.5pt, minimum size=0.5cc, line width=0.5pt}}
    \tikzset{BlackT/.style={ellipse,draw=black, inner sep=0.3pt, minimum size=0.5cc, line width=0.5pt}}
    \tikzset{Gray/.style={circle,draw=black, inner sep=1pt, minimum size=1cc, line width=1pt, fill=gray!20}}
    \tikzset{Invis/.style={circle,draw=none, inner sep=1pt, minimum size=1cc, line width=1pt, fill=white}}
    \tikzset{TreeNode/.style={rectangle,draw=black, inner sep=1pt, minimum size=1.3cc, line width=1pt}}
    \tikzset{BigTreeNode/.style={rectangle,draw=black, inner sep=1pt, minimum size=1.8cc, line width=1pt}}
    \tikzset{Tleft/.style={isosceles triangle,anchor=west,draw=black, inner sep=1pt, minimum size=1cc,
        line width=1pt,fill=orange}}
    \tikzset{TBleft/.style={isosceles triangle,anchor=west,draw=blue, inner sep=1pt, minimum size=1cc,
        line width=2pt,fill=orange}}
    \tikzset{Tright/.style={isosceles triangle,anchor=west,draw=black, inner sep=1pt, minimum size=1cc,
        line width=1pt,fill=indigo,rotate=180}}
    
    \begin{tikzpicture}[scale=.07]

\node [Red] (p1) at(-8,12.5+25)  {};
\node [Red] (r1') at(8,12.5+25)  {$a'$};
\node [BlackS] (i1) at(-24,0+25)  {\small\bf $x_1$};
\node [BlackS] (i2) at(-17,0+25)  {};
\node [BlackT] (m1) at(-10,0+25)  {};
\node [BlackT] (m1') at(-3,0+25)  {};
\node [BlackS] (i3) at(3,0+25)  {};
\node [BlackS] (i4) at(10,0+25)  {};
\node [BlackT] (m3) at(17,0+25)  {};
\node [BlackT] (m4) at(24,0+25)  {};
\node [Blue] (g1) at(-28,-12.5+25)  {};
\node [Blue] (g1') at(-12,-12.5+25)  {};
\node [Green] (b1) at(12,-12.5+25)  {};
\node [Green] (b2) at(28,-12.5+25)  {};
\node [Invis] (z1) at(-24-7,0+25+7)  {\small\bf $x_2$};
\foreach \from/\to in {p1/i1,p1/i2,p1/m1,p1/m1',i1/g1,i1/b1,i2/g1',i2/b2,m1/g1,m1/b2,m1'/g1',m1'/b1,
  r1'/i3,r1'/i4,r1'/m3,r1'/m4,i3/g1,i3/b2,i4/g1',i4/b1,m3/g1,m3/b1,m4/g1',m4/b2}
\draw[line width=1pt,color=black,->] (\from) -- (\to);
\draw[line width=1pt,color=black,->] (z1) -- (i1);

\node [Red] (p1) at(-8,12.5+60)  {};
\node [Red] (r1') at(8,12.5+60)  {$a'$};
\node [BlackS] (i1) at(-24,0+60)  {\small\bf $x_1$};
\node [BlackS] (i2) at(-17,0+60)  {};
\node [BlackT] (m1) at(-10,0+60)  {};
\node [BlackT] (m1') at(-3,0+60)  {};
\node [BlackS] (i3) at(3,0+60)  {};
\node [BlackS] (i4) at(10,0+60)  {};
\node [BlackT] (m3) at(17,0+60)  {};
\node [BlackT] (m4) at(24,0+60)  {};
\node [Blue] (g1) at(-28,-12.5+60)  {\small\bf $x_2$};
\node [Blue] (g1') at(-12,-12.5+60)  {};
\node [Green] (b1) at(12,-12.5+60)  {};
\node [Green] (b2) at(28,-12.5+60)  {};
\foreach \from/\to in {p1/i1,p1/i2,p1/m1,p1/m1',i1/g1,i1/b1,i2/g1',i2/b2,m1/g1,m1/b2,m1'/g1',m1'/b1,
  r1'/i3,r1'/i4,r1'/m3,r1'/m4,i3/g1,i3/b2,i4/g1',i4/b1,m3/g1,m3/b1,m4/g1',m4/b2}
\draw[line width=1pt,color=black,->] (\from) -- (\to);

\node [Red] (p1) at(-8,12.5+95)  {};
\node [Red] (r1') at(8,12.5+95)  {$a'$};
\node [BlackS] (i1) at(-24,0+95)  {\small\bf $x_1$};
\node [BlackS] (i2) at(-17,0+95)  {};
\node [BlackT] (m1) at(-10,0+95)  {};
\node [BlackT] (m1') at(-3,0+95)  {};
\node [BlackS] (i3) at(3,0+95)  {};
\node [BlackS] (i4) at(10,0+95)  {};
\node [BlackT] (m3) at(17,0+95)  {};
\node [BlackT] (m4) at(24,0+95)  {};
\node [Blue] (g1) at(-28,-12.5+95)  {};
\node [Blue] (g1') at(-12,-12.5+95)  {};
\node [Green] (b1) at(12,-12.5+95)  {\small\bf $x_2$};
\node [Green] (b2) at(28,-12.5+95)  {};
\foreach \from/\to in {p1/i1,p1/i2,p1/m1,p1/m1',i1/g1,i1/b1,i2/g1',i2/b2,m1/g1,m1/b2,m1'/g1',m1'/b1,
  r1'/i3,r1'/i4,r1'/m3,r1'/m4,i3/g1,i3/b2,i4/g1',i4/b1,m3/g1,m3/b1,m4/g1',m4/b2}
\draw[line width=1pt,color=black,->] (\from) -- (\to);

\node [Invis] (z1) at(0,0)  {};
\node [Invis] (z2) at(0,100)  {};
\end{tikzpicture}
\hspace{1in}
\begin{tikzpicture}[scale=.07]

\node [Red] (p1) at(-8,12.5)  {};
\node [Red] (r1') at(8,12.5)  {$a'$};
\node [BlackS] (i1) at(-24,0)  {};
\node [BlackS] (i2) at(-17,0)  {};
\node [BlackT] (m1) at(-10,0)  {};
\node [BlackT] (m1') at(-3,0)  {};
\node [BlackS] (i3) at(3,0)  {};
\node [BlackS] (i4) at(10,0)  {};
\node [BlackT] (m3) at(17,0)  {};
\node [BlackT] (m4) at(24,0)  {\small\bf $x_1$};
\node [Blue] (g1) at(-28,-12.5)  {};
\node [Blue] (g1') at(-12,-12.5)  {};
\node [Green] (b1) at(12,-12.5)  {};
\node [Green] (b2) at(28,-12.5)  {};
\node [Invis] (z1) at(24+7,0+7)  {\small\bf $x_2$};

\foreach \from/\to in {p1/i1,p1/i2,p1/m1,p1/m1',i1/g1,i1/b1,i2/g1',i2/b2,m1/g1,m1/b2,m1'/g1',m1'/b1, z1/m4,
  r1'/i3,r1'/i4,r1'/m3,r1'/m4,i3/g1,i3/b2,i4/g1',i4/b1,m3/g1,m3/b1,m4/g1',m4/b2}
\draw[line width=1pt,color=black,->] (\from) -- (\to);

\node [Red] (p1) at(-8,12.5+33)  {};
\node [Red] (r1') at(8,12.5+33)  {$a'$};
\node [BlackS] (i1) at(-24,0+33)  {};
\node [BlackS] (i2) at(-17,0+33)  {};
\node [BlackT] (m1) at(-10,0+33)  {};
\node [BlackT] (m1') at(-3,0+33)  {};
\node [BlackS] (i3) at(3,0+33)  {};
\node [BlackS] (i4) at(10,0+33)  {};
\node [BlackT] (m3) at(17,0+33)  {\small\bf $x_1$};
\node [BlackT] (m4) at(24,0+33)  {};
\node [Blue] (g1) at(-28,-12.5+33)  {};
\node [Blue] (g1') at(-12,-12.5+33)  {};
\node [Green] (b1) at(12,-12.5+33)  {};
\node [Green] (b2) at(28,-12.5+33)  {};
\node [Invis] (z1) at(17+7,0+33+7)  {\small\bf $x_2$};
\foreach \from/\to in {p1/i1,p1/i2,p1/m1,p1/m1',i1/g1,i1/b1,i2/g1',i2/b2,m1/g1,m1/b2,m1'/g1',m1'/b1, z1/m3,
  r1'/i3,r1'/i4,r1'/m3,r1'/m4,i3/g1,i3/b2,i4/g1',i4/b1,m3/g1,m3/b1,m4/g1',m4/b2}
\draw[line width=1pt,color=black,->] (\from) -- (\to);

\node [Red] (p1) at(-8,12.5+67)  {};
\node [Red] (r1') at(8,12.5+67)  {$a'$};
\node [BlackS] (i1) at(-24,0+67)  {};
\node [BlackS] (i2) at(-17,0+67)  {};
\node [BlackT] (m1) at(-10,0+67)  {};
\node [BlackT] (m1') at(-3,0+67)  {};
\node [BlackS] (i3) at(3,0+67)  {};
\node [BlackS] (i4) at(10,0+67)  {\small\bf $x_1$};
\node [BlackT] (m3) at(17,0+67)  {};
\node [BlackT] (m4) at(24,0+67)  {};
\node [Blue] (g1) at(-28,-12.5+67)  {};
\node [Blue] (g1') at(-12,-12.5+67)  {\small\bf $x_2$};
\node [Green] (b1) at(12,-12.5+67)  {};
\node [Green] (b2) at(28,-12.5+67)  {};
\foreach \from/\to in {p1/i1,p1/i2,p1/m1,p1/m1',i1/g1,i1/b1,i2/g1',i2/b2,m1/g1,m1/b2,m1'/g1',m1'/b1,
  r1'/i3,r1'/i4,r1'/m3,r1'/m4,i3/g1,i3/b2,i4/g1',i4/b1,m3/g1,m3/b1,m4/g1',m4/b2}
\draw[line width=1pt,color=black,->] (\from) -- (\to);

\node [Red] (p1) at(-8,12.5+100)  {};
\node [Red] (r1') at(8,12.5+100)  {$a'$};
\node [BlackS] (i1) at(-24,0+100)  {};
\node [BlackS] (i2) at(-17,0+100)  {};
\node [BlackT] (m1) at(-10,0+100)  {};
\node [BlackT] (m1') at(-3,0+100)  {};
\node [BlackS] (i3) at(3,0+100)  {\small\bf $x_1$};
\node [BlackS] (i4) at(10,0+100)  {};
\node [BlackT] (m3) at(17,0+100)  {};
\node [BlackT] (m4) at(24,0+100)  {};
\node [Blue] (g1) at(-28,-12.5+100)  {};
\node [Blue] (g1') at(-12,-12.5+100)  {};
\node [Green] (b1) at(12,-12.5+100)  {};
\node [Green] (b2) at(28,-12.5+100)  {\small\bf $x_2$};
\foreach \from/\to in {p1/i1,p1/i2,p1/m1,p1/m1',i1/g1,i1/b1,i2/g1',i2/b2,m1/g1,m1/b2,m1'/g1',m1'/b1,
  r1'/i3,r1'/i4,r1'/m3,r1'/m4,i3/g1,i3/b2,i4/g1',i4/b1,m3/g1,m3/b1,m4/g1',m4/b2}
\draw[line width=1pt,color=black,->] (\from) -- (\to);
\end{tikzpicture}
\end{center}
\caption{State of the game after $\bD$'s round $2$ response, and after applying Lemma \ref{lem:discard}. $\bS$ has played pebble $x_2$ on the right, and the three relevant responses by $\bD$ are shown on the left. The top structure goes into $\cA_1$, the second goes into $\cA_2$, and the third into $\cA_3$.}
\label{figRound2}
\end{figure*}
    
    Consider the sub-games $(\cA_i, \cB_i)$ for $i \in \{1, 2, 3\}$, and note that no pebbled structures from $\cA_i$ and $\cB_j$ form a matching pair, for $i \neq j$. Furthermore, $\bS$ can follow the same strategy on each of these sub-games $(\cA_1, \cB_1)$ and $(\cA_2, \cB_2)$ for the next $2k - 3$ rounds as well, playing in round $2r - 1$ on the left, on vertex $\langle c^{k + 2 - 2r}_1, u_r\rangle$, and using round $2r$ to partition the right into three similar equivalence classes, for $1 \leq r \leq k - 1$. In the sub-game $(\cA_3, \cB_3)$, $\bS$ can choose any side to play on on any round, and just plays on more and more in-neighbors of the vertex with pebble $x_1$.
    By round $k + 1$, he will have won the sub-game $(\cA_3, \cB_3)$ just by running $\bD$ out of moves. Indeed, by round $2k$, he will have won all the sub-games where he wins by this ``counting'' technique.
    
    Finally, after round $2k$, $\bS$ is down to a large number of sub-games ($2^k$ sub-games, to be exact), but in each of them, either $(r_1, r'_1)$ or $(q_1, q'_1)$ is forced.

    In round $2k + 1$, $\bS$ plays on the right side with the same strategy as in the proof of Theorem \ref{thm:EFNPhard}. In the sub-games with $(q_1, q'_1)$ forced, he plays on $\mathsf{null}_{q'}$, and in the sub-games with $(r_1, r'_1)$ forced, he plays on $\mathsf{null}_{r'}$. Again, none of these vertices have any out-neighbors. $\bD$'s responses on the left (forced to be in $S_{q_1}$ or $S_{r_2}$) fail for the same reason as in the proof of Theorem \ref{thm:EFNPhard}: every response she can give has an out-neighbor on a previously played pebble (by $\bS$) on a previous round. Therefore, all matching pairs are gone.
    
    \item \textbf{No Dominating Set Implies Duplicator Wins the $(k+1)$-Round Game:} Suppose that $G$ has no dominating set of size $k$. We wish to show that $\bD$ wins the $(k+1)$-round $\ms$ game on $(\{\langle \bA ~|~ a\rangle\}, \{\langle \bA ~|~ a'\rangle\})$. But this is trivial from Theorem \ref{thm:EFNPhard}, since if $\bD$ wins an $\ef$ game on an instance, she certainly wins an $\ms$ game on the same instance. 
\end{itemize}

\paragraph*{Hardness Using Inapproximability}

Now, consider the following algorithm (Algorithm\ \ref{protocol}) that computes a $2$-approximation to $\domset$ given a graph $G$, using calls to $\winms$.

\begin{lstlisting}[caption={Computing a $2$-approximation to $\mindomset$.},label=protocol,captionpos=t,float, escapeinside={(*}{*)}]
On input (*$G$*):
    For (*$k = 1, \ldots, n$*):
        Fix the instance (*$\langle G, k\rangle$*) of (*$\domset$*);
        Run (*$\winms$*) on (*$\langle k, (\{\langle \bA ~|~ a\rangle\}, \{\langle \bA ~|~ a'\rangle\})\rangle$*);
        if (*$\winms$*) outputs YES:
            output (*$k-1$*);
\end{lstlisting}

This algorithm outputs the minimum $k$ such that $\bS$ wins the $(k+1)$-round game on $(\{\langle \bA ~|~ a\rangle\}, \{\langle \bA ~|~ a'\rangle\})$. To see that this works as claimed, note that, if $k'$ is the minimum size of a dominating set in $G$, then by equations \eqref{eq1} and \eqref{eq2}, $\bD$ wins the game instance with $k'$ rounds or fewer, while $\bS$ wins the game instances with $2k'+1$ rounds or more. Algorithm \ref{protocol} therefore outputs a number between $k'$ and $2k'$, which must, therefore, be a $2$-approximation to the optimum.

However, by Theorem \ref{domsetapprox} and the discussion following it, the problem $\domset$ is $\NP$-hard to approximate to any constant multiplicative factor, and so it follows that $\winms$ is $\NP$-hard as well.
\end{proof}

\subsection{The PSPACE Construction}

In this section, we present our main result, that of $\PSPACE$-hardness for the problem $\winms$. We shall use several ideas from Section \ref{sec:MSNP}, but the gadgets and the reduction problem will be different.

A \emph{quantified Boolean formula} (QBF) is an expression of the form $Q_1x_1\cdots Q_kx_k\psi$, where each $Q_i$ is either the existential quantifier $\exists$ or the universal quantifier $\forall$, and  $\psi$ is a propositional formula in CNF whose variables are among $x_1, \ldots, x_k$. Each such formula gives rise to a game, called the \emph{QBF game associated with $Q_1x_1\cdots Q_kx_k\psi$}, played between two players, $\mathsf{Player}_\exists$ and $\mathsf{Player}_\forall$, as follows: during a run of the game, in the $i$-th move, and depending on whether $Q_i$ is $\exists$ or $\forall$, $\mathsf{Player}_\exists$ or $\mathsf{Player}_\forall$ plays and assigns a truth value to the variable $x_i$. $\mathsf{Player}_\exists$ wins this run if the truth assignment produced by the two players at the end of the $k$-th move satisfies $\psi$; otherwise, $\mathsf{Player}_\forall$ wins the run. We say that the formula $Q_1x_1\cdots Q_kx_k\psi$ is \emph{true} if $\mathsf{Player}_\exists$  has a winning strategy that allows him to win every run of the QBF game in which she plays according to this strategy.
Stockmeyer \cite{DBLP:journals/tcs/Stockmeyer76} showed that the following problem, $\qbfsat$, is $\PSPACE$-complete: given a QBF formula $Q_1x_1\cdots Q_kx_k\psi$, is it true? In fact, this problem is $\PSPACE$-complete even when restricted to QBF formulas
$Q_1x_1\cdots Q_kx_k\psi$ in which $\psi$ is a 3CNF-formula.

Here, we are interested in the following variant of $\qbfsat$, which we call $\qsat$:
given a QBF formula
$Q_1x_1\cdots Q_kx_k\psi$, where $\psi$ is a 3CNF-formula,  and a positive integer $t$, does $\mathsf{Player}_\exists$  have a strategy in the QBF game associated with $Q_1x_1\cdots Q_kx_k\psi$ such that at least $t$ clauses of $\psi$ are simultaneously satisfied at the end over every run? In other words:
\begin{align*}
    \qsat &:= \{\langle Q_1x_1\cdots Q_kx_k\psi, t\rangle ~:~ \psi \text{ has } t \text{ simultaneously satisfiable clauses}\}.
    \end{align*}

Clearly, $\qsat$ is a $\PSPACE$-complete problem. Furthermore, $\qsat$ is the decision problem underlying the optimization problem $\maxqsat$: given a quantified $3$-CNF formula $Q_1x_1\cdots Q_kx_k\psi$, find the maximum number of clauses that are simultaneously satisfiable under optimal play by both players.
The optimization problem $\maxqsat$ is $\PSPACE$-hard to approximate up to a fixed constant, as the following result (stated without proof) of Condon, Feigenbaum, Lund, and Shor \cite{DBLP:journals/cjtcs/CondonFLS95} asserts. The main ideas behind the proof can be found in Condon's excellent exposition \cite{condon1995approximate}.

\begin{theorem}[Inapproximability of Quantified SAT, \cite{DBLP:journals/cjtcs/CondonFLS95}]\label{qsatapprox}
Unless $\P = \PSPACE$, there is a constant $0 < c < 1$ with the following property: there is no polynomial-time algorithm $\mathscr{A}$ that, given an arbitrary quantified $3$-CNF formula $\varphi$, returns an approximate value $\mathscr{A}(\varphi)$ for $\maxqsat$ on $\varphi$ such that $\mathscr{A}(\varphi)/\opt(\varphi) \geq c$.
\end{theorem}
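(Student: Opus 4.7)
The approach I would take is to establish a $\PSPACE$-analog of the PCP theorem for debate systems, and then give a gap-preserving reduction to $\maxqsat$. The starting point is the debate characterization of $\PSPACE$: a language $L$ lies in $\PSPACE$ iff there is a polynomial-length debate between $\mathsf{Player}_\exists$ (claiming $x\in L$) and $\mathsf{Player}_\forall$ (claiming $x\notin L$) such that $x\in L$ iff $\mathsf{Player}_\exists$ has a winning strategy. Since $\qbfsat$ is $\PSPACE$-complete, this debate is already realized by the QBF game, so the task reduces to amplifying the $0/1$ gap between true and false QBFs into a constant multiplicative gap in the number of simultaneously satisfiable clauses of an associated quantified 3-CNF formula.

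The central technical step would be to build a \emph{probabilistically checkable debate system} (PCDS) for $\PSPACE$. Concretely, I would transform an arbitrary instance $\varphi = Q_1x_1\cdots Q_kx_k\psi$ into an augmented debate in which, after the two players finish playing and possibly writing auxiliary ``proof'' strings, a randomized polynomial-time verifier inspects only $O(1)$ bits of the transcript and auxiliary data and outputs accept/reject. The target properties are: (completeness) if $\varphi$ is true, $\mathsf{Player}_\exists$ has a strategy forcing acceptance with probability $1$ against every $\mathsf{Player}_\forall$ strategy; (soundness) if $\varphi$ is false, against every $\mathsf{Player}_\exists$ strategy, $\mathsf{Player}_\forall$ has a strategy forcing rejection with probability at least $1-c_0$ for some absolute constant $c_0 < 1$. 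This would be obtained by adapting the classical PCP machinery---arithmetization of $\psi$, low-degree extensions, sum-check, and low-degree testing---to the alternating two-player setting, so that local $O(1)$-bit consistency tests suffice to catch cheating by either player.

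Given such a PCDS, the final step is an essentially standard encoding. For each outcome of the verifier's random string $r$, the $O(1)$-bit local predicate can be expressed as a Boolean formula in the debate and proof variables and then converted, via the usual Tseitin-style trick with a constant number of fresh auxiliary variables, into a constant-size 3-CNF $\psi_r$. Taking the conjunction $\psi' = \bigwedge_r \psi_r$ and prepending the quantifier prefix induced by the debate structure (with the auxiliary proof variables quantified by $\mathsf{Player}_\exists$) yields a quantified 3-CNF $\varphi'$ of polynomial size. Under optimal play, the fraction of simultaneously satisfied clauses of $\psi'$ tracks, up to a constant, the verifier's acceptance probability: true instances of $\varphi$ give $\opt(\varphi') = |\psi'|$, while false instances give $\opt(\varphi') \leq c |\psi'|$ for a constant $c<1$ depending on $c_0$. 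Any polynomial-time $c$-approximation to $\maxqsat$ would then decide $\qbfsat$, contradicting $\P \neq \PSPACE$.

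The main obstacle, by a wide margin, is establishing the PCDS theorem itself. In the one-prover PCP setting the verifier has to detect only a cheating prover; in the debate setting the verifier must simultaneously constrain two adversaries with opposing goals, while reading only $O(1)$ bits and respecting the alternation of the debate. Engineering the low-degree tests and sum-check rounds so that they interleave correctly with the $\exists/\forall$ prefix---so that neither player can exploit the randomness choice or the partial transcript to cheat undetectably---is the delicate core of the argument, and is exactly where the techniques of Condon, Feigenbaum, Lund, and Shor are required.
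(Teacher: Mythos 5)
The paper does not prove this statement at all: Theorem \ref{qsatapprox} is imported from Condon, Feigenbaum, Lund, and Shor \cite{DBLP:journals/cjtcs/CondonFLS95} and is explicitly ``stated without proof,'' with Condon's exposition \cite{condon1995approximate} cited for the underlying ideas. So there is no internal argument to compare yours against; the relevant comparison is with the cited work, and your outline does follow the CFLS route: the debate characterization of $\PSPACE$, a probabilistically checkable debate system (PCDS) with logarithmic randomness and $O(1)$ queries, and a gap-preserving encoding of the verifier's constant-size local tests into a quantified 3-CNF, one small clause block per random string.

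As a standalone proof, however, the proposal has a genuine gap exactly where you flag it: the PCDS theorem is essentially the entire content of the result, and ``adapt arithmetization, sum-check, and low-degree testing to the two-player setting'' does not explain how the real obstruction is overcome. The obstruction is asymmetric between the players: completeness must hold with probability $1$ against \emph{arbitrary} behavior of $\mathsf{Player}_\forall$, yet the verifier reads only $O(1)$ bits of the transcript, so it can never certify that $\mathsf{Player}_\forall$'s contribution is a well-formed codeword, and (unlike the single-prover PCP setting) it cannot simply reject malformed encodings, since that would let $\mathsf{Player}_\forall$ destroy completeness on true instances. The CFLS construction deals with this by normalizing the debate so that the $\forall$-player's moves are kept to single bits, of which only a constant number are ever consulted, while all long error-corrected strings---and hence all low-degree tests and sum-check-style consistency checks---are supplied by $\mathsf{Player}_\exists$ and checked against those bits; your sketch treats both players' transcripts symmetrically as proof strings to be locally tested, which would not work as described. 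The final Tseitin-style conversion to 3-CNF with existentially quantified auxiliary variables and the concluding gap argument are fine, but within this paper the appropriate ``proof'' of Theorem \ref{qsatapprox} is simply the citation, and reconstructing the PCDS machinery is well beyond what the statement's use here requires.
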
 

We will show how to reduce $\maxqsat$ to $\winms$. Given a quantified $3$-CNF formula $Q_1x_1 \ldots Q_{2k}x_{2k}\psi$, we will build a directed graph $\bA$ with vertices $a, a' \in A$ such that an algorithm for $\winms$ used on $\langle t, (\{\langle \bA ~|~ a\rangle\}, \{\langle \bA ~|~ a'\rangle\})\rangle$ for a few (polynomially many) values of $t$ will give us a polynomial-time protocol for $(1 - \varepsilon)$-approximating $\maxqsat$ for arbitrarily small $\varepsilon$, which is impossible by Theorem \ref{qsatapprox} unless $\P = \PSPACE$.

\paragraph*{The Building Block Gadgets}

This time, we first define our building blocks $\bI_j$ using the even smaller gadgets $\bJ_j$ and $\bJ'_j$, shown in Fig.\ \ref{figJj}.

\begin{figure*}[ht]
    \begin{center}
    \tikzset{Red/.style={circle,draw=red, inner sep=1pt, minimum size=1cc, line width=2pt}}
    \tikzset{RedOuter/.style={circle,draw=red, inner sep=1pt, minimum size=1.2cc, line width=2pt}}
    \tikzset{BlueOuter/.style={circle,draw=blue, inner sep=1pt, minimum size=1.25cc, line width=2pt}}
    \tikzset{Blue/.style={circle,draw=blue, inner sep=1pt, minimum size=1cc, line width=2pt}}
    \tikzset{Green/.style={circle,draw=dg, inner sep=1pt, minimum size=1cc, line width=2pt}}
    \tikzset{Magenta/.style={circle,draw=magenta, inner sep=1pt, minimum size=1cc, line width=2pt}}
    \tikzset{Indigo/.style={circle,draw=indigo, inner sep=1pt, minimum size=1cc, line width=2pt}}
    \tikzset{Black/.style={circle,draw=black, inner sep=1pt, minimum size=1cc, line width=1pt}}
    \tikzset{BlackS/.style={rectangle,draw=black, inner sep=0.5pt, minimum size=0.5cc, line width=0.5pt}}
    \tikzset{BlackT/.style={ellipse,draw=black, inner sep=0.3pt, minimum size=0.5cc, line width=0.5pt}}
    \tikzset{Gray/.style={circle,draw=black, inner sep=1pt, minimum size=1cc, line width=1pt, fill=gray!20}}
    \tikzset{Invis/.style={circle,draw=none, inner sep=1pt, minimum size=1cc, line width=1pt, fill=white}}
    \tikzset{TreeNode/.style={rectangle,draw=black, inner sep=1pt, minimum size=1.3cc, line width=1pt}}
    \tikzset{BigTreeNode/.style={rectangle,draw=black, inner sep=1pt, minimum size=1.8cc, line width=1pt}}
    \tikzset{Tleft/.style={isosceles triangle,anchor=west,draw=black, inner sep=1pt, minimum size=1cc,
        line width=1pt,fill=orange}}
    \tikzset{TBleft/.style={isosceles triangle,anchor=west,draw=blue, inner sep=1pt, minimum size=1cc,
        line width=2pt,fill=orange}}
    \tikzset{Tright/.style={isosceles triangle,anchor=west,draw=black, inner sep=1pt, minimum size=1cc,
        line width=1pt,fill=indigo,rotate=180}}

\begin{tikzpicture}[scale=.08]
\node [Invis]             at (-29.9375-2,-3) {\tiny $\vdots$};
\node [Black] (a11) at (-23.75-2,8+4) {{\scriptsize $a^1_1$}};
\node [Black] (a12) at (-26-4,4+2) {{\scriptsize $a^1_2$}};
\node [Black] (a13) at (-28.25-5,0) {{\scriptsize $a^1_3$}};
\node [Black] (a1j) at (-32.75+2,-10) {{\scriptsize $a^1_j$}};
\node [Black] (z) at(0,12.5)  {{\scriptsize $z$}};
\node [Black] (c1) at(-24,0)  {{\scriptsize  $c_1$}};
\node [Black] (c2) at(-17,0)  {{\scriptsize $c_2$}};
\node  at (0,-12) {{\large $\bJ_j$}};
\node [Black] (c3) at(-10,0)  {{\scriptsize $c_3$}};
\node [Black] (c4) at(-3,0)  {{\scriptsize $c_4$}};
\node [Black] (d1) at(3,0)  {{\scriptsize $d_1$}};
\node [Black] (d2) at(10,0)  {{\scriptsize $d_2$}};
\node [Black] (d3) at(17,0)  {{\scriptsize  $d_3$}};
\node [Black] (d4) at(24,0)  {{\scriptsize $d_4$}};
\node [Black] (q) at(-12,-12.5)  {{\scriptsize $q$}};
\node [Black] (q') at(12,-12.5)  {{\scriptsize $q'$}};
\node [Black] (b42) at (26+4,4+2) {{\scriptsize $b^4_2$}};
\node [Black] (b43) at (28.25+5,0) {{\scriptsize $b^4_3$}};
\node              at (31.0625+2,-5) {\tiny $\vdots$};
\node [Black] (b4j) at (32.75-2,-10) {{\scriptsize $b^4_j$}};
\foreach \from/\to in {z/c1, z/c2, z/c3, z/c4, z/d1, z/d2, z/d3, z/d4,
c1/q, c2/q, c3/q, c4/q,
d1/q', d2/q', d3/q', d4/q',
a11/c1, a12/c1, a13/c1, a1j/c1,
b42/d4, b43/d4, b4j/d4}
\draw[line width=1pt,color=black,->] (\from) -- (\to);
\end{tikzpicture}
\hspace{0.2in}
\begin{tikzpicture}[scale=.08]
\node [Invis]             at (-29.9375-2,-3) {\tiny $\vdots$};
\node [Black] (a11) at (-23.75-2,8+4) {{\scriptsize $a^1_1$}};
\node [Black] (a12) at (-26-4,4+2) {{\scriptsize $a^1_2$}};
\node [Black] (a13) at (-28.25-5,0) {{\scriptsize $a^1_3$}};
\node [Black] (a1j) at (-32.75+2,-10) {{\scriptsize $a^1_j$}};
\node [Black] (z) at(0,12.5)  {{\scriptsize $z$}};
\node [Black] (c1) at(-24,0)  {{\scriptsize  $c_1$}};
\node [Black] (c2) at(-17,0)  {{\scriptsize $c_2$}};
\node  at (0,-12) {{\large $\bJ'_j$}};
\node [Black] (c3) at(-10,0)  {{\scriptsize $c_3$}};
\node [Black] (c4) at(-3,0)  {{\scriptsize $c_4$}};
\node [Black] (d1) at(3,0)  {{\scriptsize $d_1$}};
\node [Black] (d2) at(10,0)  {{\scriptsize $d_2$}};
\node [Black] (d3) at(17,0)  {{\scriptsize  $d_3$}};
\node [Black] (d4) at(24,0)  {{\scriptsize $d_4$}};
\node [Black] (q) at(-12,-12.5)  {{\scriptsize $q$}};
\node [Black] (q') at(12,-12.5)  {{\scriptsize $q'$}};
\node [Black] (b42) at (26+4,4+2) {{\scriptsize $b^4_2$}};
\node [Black] (b43) at (28.25+5,0) {{\scriptsize $b^4_3$}};
\node              at (31.0625+2,-5) {\tiny $\vdots$};
\node [Black] (b4j) at (32.75-2,-10) {{\scriptsize $b^4_j$}};
\foreach \from/\to in {z/c1, z/c2, z/c3, z/c4, z/d1, z/d2, z/d3, z/d4,
c1/q, c3/q, d1/q, d3/q,
c2/q', c4/q', d2/q', d4/q',
a11/c1, a12/c1, a13/c1, a1j/c1,
b42/d4, b43/d4, b4j/d4}
\draw[line width=1pt,color=black,->] (\from) -- (\to);
\end{tikzpicture}
\end{center}
\caption{The building blocks $\bJ_j$ and $\bJ'_j$. Copies of these will be connected to form the main building block $\bI_j$ for the $\PSPACE$ construction. To avoid clunky diagrams, we shall designate $\bJ_j$ blocks in \textcolor{red}{red}, and $\bJ'_j$ blocks in \textcolor{blue}{blue}.
}
\label{figJj}
\end{figure*}
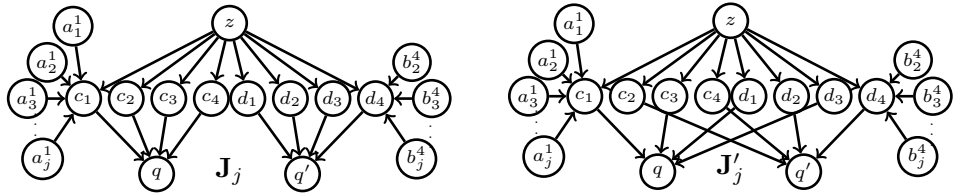

These gadgets $\bJ_j$ and $\bJ'_j$ are once again Cai-F\"{u}rer-Immerman type gadgets that have lots of automorphisms, which $\bD$ can use to her advantage. We can think of these gadgets as ``connecting'' the node $z$ to the pair $(q, q')$, as shown in  Fig.\ \ref{figJj}. In our subsequent constructions, we shall often have nodes $u$, $v$, and $w$, and make a gadget $\bJ_j$ or $\bJ'_j$ connect $u$ to $(v, w)$; this will always mean taking the relevant gadget, and identifying the vertices $z \mapsto u$, $q \mapsto v$, and $q' \mapsto w$. We can now form our main building block $\bI_j$, shown in Fig.\ \ref{figIjPSPACE}.

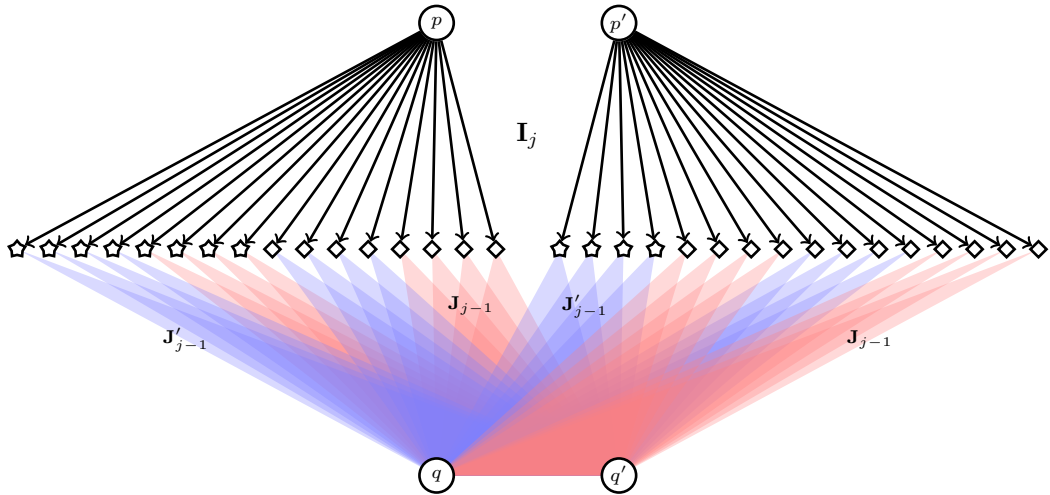
\begin{figure*}[ht]
\begin{center}
\tikzset{Blank/.style={circle,fill=none,draw=none, inner sep=1pt, minimum size=1cc, line width=1pt}}
\tikzset{Black/.style={circle,fill=white,draw=black, inner sep=1pt, minimum size=1cc, line width=1pt}}
\tikzset{BlackS/.style={star,fill=white,draw=black, inner sep=1pt, minimum size=0.5cc, line width=1pt}}
\tikzset{BlackD/.style={diamond,fill=white,draw=black, inner sep=1pt, minimum size=0.5cc, line width=1pt}}

\begin{tikzpicture}[scale=.15]

\foreach \x in {1, ..., 16}
\node [Blank] (i\x) at (-2.8*\x, 0) {};
\foreach \x in {1, ..., 16}
\node [Blank] (j\x) at (2.8*\x, 0) {};
\node [Blank] (l) at(-8,-20)  {{\scriptsize $q$}};
\node [Blank] (r) at(8,-20)  {{\scriptsize  $q'$}};

\foreach \x in {1,2,3,4,9,10,11,12}
\filldraw[draw=none, fill=red!50, opacity=0.3] (i\x.center) -- (l.center) -- (r.center) -- (i\x.center) -- cycle;
\foreach \x in {5,6,7,8,13,14,15,16}
\filldraw[draw=none, fill=blue!50, opacity=0.3] (i\x.center) -- (l.center) -- (r.center) -- (i\x.center) -- cycle;
\foreach \x in {1,2,3,4,9,10,11,12}
\filldraw[draw=none, fill=blue!50, opacity=0.3] (j\x.center) -- (l.center) -- (r.center) -- (j\x.center) -- cycle;
\foreach \x in {5,6,7,8,13,14,15,16}
\filldraw[draw=none, fill=red!50, opacity=0.3] (j\x.center) -- (l.center) -- (r.center) -- (j\x.center) -- cycle;

\node [Black] (p) at(-8,20)  {{\scriptsize $p$}};
\node [Black] (p') at(8,20)  {{\scriptsize  $p'$}};
\node [Black] (q) at(-8,-20)  {{\scriptsize $q$}};
\node [Black] (q') at(8,-20)  {{\scriptsize  $q'$}};
\foreach \x in {1, ..., 8}
\node [BlackD] (i\x) at (-2.8*\x, 0) {};
\foreach \x in {9, ..., 16}
\node [BlackS] (i\x) at (-2.8*\x, 0) {};
\foreach \x in {1, ..., 4}
\node [BlackS] (j\x) at (2.8*\x, 0) {};
\foreach \x in {5, ..., 16}
\node [BlackD] (j\x) at (2.8*\x, 0) {};

\foreach \x in {1, ..., 16}
\draw[line width=1pt,color=black,->] (p) -- (i\x);
\foreach \x in {1, ..., 16}
\draw[line width=1pt,color=black,->] (p') -- (j\x);

\node [Blank] (label) at (0, 10) {$\bI_j$};
\node [Blank] (label) at (-30, -8) {\scriptsize $\bJ'_{j-1}$};
\node [Blank] (label) at (30, -8) {\scriptsize $\bJ_{j-1}$};
\node [Blank] (label) at (-5, -5) {\scriptsize $\bJ_{j-1}$};
\node [Blank] (label) at (5, -5) {\scriptsize $\bJ'_{j-1}$};

\end{tikzpicture}
\end{center}
\caption{Gadget $\bI_j$ used in the $\PSPACE$ construction. The vertices $c_i$ for $1 \leq i \leq 12$ are shown as stars; the vertices $d_i$ for $1 \leq i \leq 20$ are shown as diamonds. Each $c_i$ is connected to $j$ independent in-neighbors, and each $d_i$ to $j - 1$ independent in-neighbors. These sets of neighbors are disjoint and not pictured. Each of these $c_i$'s and $d_i$'s is connected to $(q, q')$ by means of $\bJ_{j-1}$ or $\bJ'_{j-1}$ gadgets, depicted as \textcolor{red}{red} and \textcolor{blue}{blue} triangles respectively (with the $c_i$ or $d_i$ acting as the vertex $z$ from Fig.\ \ref{figJj}, as noted before). Note that this gadget $\bI_j$ exists for any $j > 1$.}
\label{figIjPSPACE}
\end{figure*}

As depicted, $\bI_j$ consists of the following:
\begin{itemize}
    \item There are four vertices $\{p, p', q, q'\}$. Of these, $p$ and $p'$ have no in-neighbors, and $q$ and $q'$ have no out-neighbors.
    \item There are $16$ out-neighbors of $p$ in total. Of these, there are $8$ vertices $\{c_1, \ldots, c_8\}$ designated as stars in Fig.\ \ref{figIjPSPACE}, each of which is attached to $j$ independent in-neighbors (not shown), similar to the construction in Section \ref{sec:EFNP}. The other $8$ out-neighbors $\{d_1, \ldots, d_8\}$ of $p$ are designated as diamonds in Fig.\ \ref{figIjPSPACE}. Each of them is attached to $j - 1$ independent in-neighbors (not shown). These disjoint independent sets marking the difference between a $c$-vertex (i.e., a star) and a $d$-vertex (i.e., a diamond) are not shown in order to avoid clutter. The basic idea is that $\bS$ needs $j$ rounds of the {\ms} game to distinguish between a star and a diamond.
    \item There are $16$ out-neighbors of $p'$ in total as well. Of these, $4$ vertices $\{c_9, \ldots, c_{12}\}$ are stars, and the other $12$ vertices $\{d_9, \ldots, d_{20}\}$ are diamonds, as shown in Fig.\ \ref{figIjPSPACE}. Again, the (disjoint) independent sets of in-neighbors for each of these vertices are not shown.
    \item Each of the vertices in $\{c_1, c_2, c_3, c_4, d_1, d_2, d_3, d_4, c_9, c_{10}, c_{11}, c_{12}, d_{13}, d_{14}, d_{15}, d_{16}\}$ is connected to the pair $(q, q')$ by means of a \textcolor{blue}{$\bJ'_{j-1}$} gadget, shown in blue. The vertex $z$ of \textcolor{blue}{$\bJ'_{j-1}$} (from Fig.\ \ref{figJj}) is identified in each case with the $c_i$ or $d_i$ in question.
    \item Each of the vertices in $\{c_5, c_6, c_7, c_8, d_5, d_6, d_7, d_8, d_9, d_{10}, d_{11}, d_{12}, d_{17}, d_{18}, d_{19}, d_{20}\}$ is connected to the pair $(q, q')$ by means of a \textcolor{red}{$\bJ_{j-1}$} gadget, shown in red. The vertex $z$ of \textcolor{red}{$\bJ_{j-1}$} (from Fig.\ \ref{figJj}) is identified in each case with the $c_i$ or $d_i$ in question.
\end{itemize}
This gadget is not ``too'' large; apart from the vertices $p, p', q, q'$, there are $12$ vertices of the form $c_i$ (with $j$ in-neighbors for each of them), $20$ vertices of the form $d_i$ (with $j - 1$ in-neighbors for each of them), and the middle vertices of the $32$ different \textcolor{red}{$\bJ_{j-1}$} and \textcolor{blue}{$\bJ'_{j-1}$} gadgets. Each of those smaller gadgets has $8$ middle vertices, with four of them having $j - 1$ in-neighbors, and the other four having $j - 2$ in-neighbors. This gives the total number of vertices in $\bI_j$ as:
\begin{equation*}
    4 + 32 + 12j + 20(j-1) + 32(8 + 4(j - 1) + 4(j-2)) = 288j - 112.
\end{equation*}

Adapting the terminology we established earlier in this work, we refer to the $c_i$'s and $d_i$'s that are the out-neighbors of $p$ or $p'$ as \emph{upper middle vertices in $\bI_j$}, and we refer to the $c_i$'s and $d_i$'s \emph{within} the building blocks \textcolor{red}{$\bJ_j$}'s and \textcolor{blue}{$\bJ'_{j}$}'s as \emph{lower middle vertices in $\bI_j$}. We also refer to the other in-neighbors (i.e., those of the form $a_i^j$ or $b_i^j$) of any of the $c_i$ or $d_i$ vertices as \emph{distinguishing neighbors} for the corresponding $c_i$ or $d_i$.

As in the construction in Section \ref{sec:EFNP}, this gadget $\bI_j$ also comes with its own notion of polarity preservation, in the following sense (see also Lemma 1 in \cite{Pezzoli98}).

\begin{lemma}[Polarity-Preserving Lemma]\label{polarity2}
For $j \geq 2$, $\bS$ can force the pair $(q,q')$ in the $(j+1)$-round {\ef} game on $(\langle~\bI_j ~|~ p~\rangle, \langle~\bI_j ~|~ p'~\rangle)$ (in fact, in round $3$), but not in the $j$-round game. (As in Lemma \ref{polarity}, the forced pebbles need not be explicitly played, just threatened.)
\end{lemma}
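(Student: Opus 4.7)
The proof has two parts, following the template of Lemma~\ref{polarity} above and Lemma~1 in Pezzoli~\cite{Pezzoli98}.

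For the upper bound, namely that Spoiler forces $(q,q')$ by Round~3 of the $(j+1)$-round game, the strategy exploits the structural asymmetry of $\bI_j$: under $p$ on the left there are red-connected stars (the four $c_5,\ldots,c_8$, each attached to a $\bJ_{j-1}$ gadget), whereas under $p'$ on the right every star is blue-connected (the four $c_9,\ldots,c_{12}$, each attached to a $\bJ'_{j-1}$ gadget). In Round~1, Spoiler plays on $c_5$ on the left. Any response by Duplicator on a diamond on the right would lose within the remaining $j$ rounds by the standard CFI counting argument distinguishing $j$ from $j-1$ in-neighbors, so Duplicator must play on a blue star on the right, say $c_9$. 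In Round~2, Spoiler plays on one of the inner $c$-vertices of the $\bJ'_{j-1}$ attached to $c_9$ whose unique lower out-neighbor among $\{q,q'\}$ is $q'$ (for instance, the inner $c_2$). Now any Duplicator response on an inner $d$-vertex of the $\bJ_{j-1}$ attached to $c_5$ would lose via counting within the remaining $j-1$ rounds at level $j-1$, so she must respond on one of the inner $c$-vertices of $\bJ_{j-1}$, each of which has $q$ (rather than $q'$) as its unique out-neighbor among $\{q,q'\}$. Finally, in Round~3, Spoiler plays $q$ on the left, which is an out-neighbor of Duplicator's Round-2 pebble; since the only out-neighbor of Spoiler's Round-2 pebble on the right is $q'$, Duplicator is forced to respond on $q'$, yielding the forced pair $(q,q')$.

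For the lower bound, that Duplicator wins the $j$-round game, I would exhibit a strategy maintaining a virtual isomorphism in which stars and diamonds are treated as interchangeable. The guiding observation is that $j$ rounds are just insufficient to expose the star/diamond counting distinction at the upper level (which requires $j+1$ rounds from Round~1) and similarly insufficient to expose the corresponding distinction at the lower level inside the attached gadgets. Consequently, whenever Spoiler plays on an upper middle vertex, Duplicator responds with an upper middle of the same gadget color (red or blue) on the other side---possibly pairing a star with a diamond---exploiting the fact that both copies of $\bI_j$ have eight red and eight blue upper middles, so such a matching-color response always exists. Inside the attached gadgets, the Cai--F\"urer--Immerman symmetries of $\bJ_{j-1}$ and $\bJ'_{j-1}$ allow her to consistently pair $q$ with $q$ and $q'$ with $q'$, rather than being pulled into the swap that the upper bound requires.

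The principal obstacle is verifying the lower bound in full generality: a case analysis must confirm that no Spoiler move---whether deep inside a gadget, on a distinguishing in-neighbor, or directly on $q$ or $q'$---can trigger either a counting exposure (needing $j+1$ rounds) or a forced polarity swap within only $j$ rounds. I would organize the cases by first establishing an auxiliary polarity-type lemma comparing $(\bJ_{j-1}, z)$ against itself and against $(\bJ'_{j-1}, z)$ at the appropriate round budget, and then combining it with a standard back-and-forth argument at the upper level to obtain the bound for $\bI_j$.
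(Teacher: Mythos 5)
Your argument is correct and follows the paper's own proof essentially step for step: the same round-1 play on a left red star such as $c_5$, the counting threat that forces Duplicator onto a blue star such as $c_9$, the round-2 move on an inner $c$-vertex of the attached \textcolor{blue}{$\bJ'_{j-1}$} that points to $q'$ (the paper's choice of $c_2$ or $c_4$), and the resulting forcing of $(q,q')$ by round 3. Your lower-bound sketch---color-matching responses exploiting that each side has eight red and eight blue upper middle vertices, with the star/diamond exposure requiring one round more than remains---is the same symmetry-plus-counting argument the paper itself gives, at a comparable level of detail.
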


\begin{proof}
In the $(j+1)$-round game, $\bS$ can play on the left side, on $c_5$ (or $c_6$, $c_7$, or $c_8$) of $\langle~ \bI_j ~|~ p ~\rangle$ in round $1$. $\bD$ must respond on an out-neighbor of $p'$ in $\langle~ \bI_j ~|~ p ~\rangle$. $\bD$ must respond with an uncolored out-neighbor of $p'$ in $\langle~ \bI_j ~|~ p' ~\rangle$. If she responds on any out-neighbor other than $c_9$, $c_{10}$, $c_{11}$, or $c_{12}$, she loses after $j$ more rounds, just by $\bS$ playing on the distinguishing neighbors of the elements just played (by showing the difference between $c_i$' and $d_i$). So assume without loss of generality that $\bD$ plays on $c_9$. Now, $\bS$ can play on the right side in round $2$, on an out-neighbor of $c_9$ (i.e., on a middle vertex of a \textcolor{blue}{$\bJ'_{j-1}$}). Note that $\bD$ is forced to respond on an out-neighbor of $c_5$ (i.e., on a middle vertex of a \textcolor{red}{$\bJ_{j-1}$}). $\bS$ now has essentially two\footnote{ This argument holds up to an ordering of the moves; the same forcing argument goes through for the only other options for $\bS$, which is to play first on the distinguishing neighbors for the $c_i$'s or $d_i$'s. We omit these details for the sake of brevity.} good options: he can play on $c_2$ or $c_4$ (and $\bD$ would have to respond on one of the $c$'s); or he can play on $d_1$ or $d_3$ (and $\bD$ would have to respond on one of the $d$'s).
Both of these possibilities leads to $(q, q')$ being forced. Any other response by $\bD$ to either of these two moves would lead to her loss in $j - 1$ remaining rounds (by $\bS$ exhibiting the difference between the $c$'s and $d$'s in the \textcolor{red}{$\bJ_{j-1}$} and \textcolor{blue}{$\bJ'_{j-1}$} gadgets), or no longer force the pair $(q, q')$.

Again, this strategy is essentially forced. Any other round $1$ move by $\bS$ can be met by $\bD$ following an automorphism. For instance, if $\bS$ plays on $c_1$, then $\bD$ has a response on $c_9$; if $\bS$ plays on $d_1$, $\bD$ has a response on $d_{13}$; if $\bS$ plays on $d_5$, $\bD$ has a response on $d_{17}$; and if $\bS$ plays in any of the gadgets attached to these vertices, $\bD$ plays on a similar gadget. After a single round, the $c$'s and $d$'s that are the neighbors of $p$ and $p'$ are indistinguishable, and after a further round, the other middle vertices are indistinguishable, and then $\bD$ can exploit the resulting automorphisms easily. It is easy to see that this translates inductively to a winning strategy for $\bD$ in the $j$-round {\ef} game.
\end{proof}

When we eventually play the {\ms} game, we shall use both directions of Lemma \ref{polarity2}, to borrow the critical elements from the {\ef} game analysis and apply them to our setting.

\paragraph*{The Reduction Instance}

Now, consider an arbitrary $\qsat$ instance $\langle \exists x_1\forall x_2\ldots \exists x_{2k - 1}\forall x_{2k}\psi, t\rangle$, where $\psi = C_1 \land \ldots \land C_m$ is the conjunction of $m$ $3$-CNF clauses. (We can assume WLOG that the quantifier signature is alternating, since we can always have ``dummy'' quantified variables without changing the quantifier-free part; this does not affect the set of solutions for the given instance.) We construct the directed graph $\bA$ depicted in Fig.\ \ref{figAPSPACE}, called the \emph{skyscraper}. This structure is formed as follows:
\begin{itemize}
    \item We start by taking $k$ gadgets, consisting of a copy of $\bI_{2k + m - t}$, then a copy of $\bI_{2k-2 + m - t}$, and so on, all the way down to a copy of $\bI_{2 + m - t}$, and stacking them as shown in Fig.\ \ref{figAPSPACE}, with $\bI_{2k + m - t}$ on top, and $\bI_{2 + m - t}$ at the bottom. We identify the $q$ and $q'$ from each of these gadgets with the $p$ and $p'$ respectively from the gadget below it. In the resulting construction, we call the building block $\bI_{2j + m - t}$ the \emph{$j$-th floor of the skyscraper}.
    \item We index the $p, p', q, q'$ vertices in the $j$-th floor by the number of the floor, i.e., $p_j, p'_j, q_j, q'_j$. Apart from that, the $j$-th floor consists of the the upper middle vertices that are the out-neighbors of $p_j$ or $p'_j$. We index those vertices by the superscript $j, U$ (to designate that they are the upper middle vertices in the $j$-th floor). We are especially interested in the vertices $\{c^{j,U}_5, \ldots, c^{j,U}_8\}$, which are the $z$-vertices with \textcolor{blue}{$\bJ'_{2j + m - t - 1}$} gadgets attached to them. We classify them into the following \emph{informal labels}:
    \begin{itemize}
        \item $c^{j,U}_5, c^{j,U}_6$ both get the informal label $T(x_{2k-2j+1})$ (intentionally named, to evoke that playing on this vertex corresponds to setting the variable $x_{2k-2j+1}$ to TRUE),
        \item $c^{j,U}_7, c^{j,U}_8$ both get the informal label $F(x_{2k-2j+1})$ (to evoke that this corresponds to setting the variable $x_{2k-2j+1}$ to FALSE).
    \end{itemize}
    The $j$-th floor also consists of a number of lower middle vertices, forming the out-neighbors of the $c^{j, U}_i$'s and the $d^{j, U}_i$'s. We index those vertices by the superscript $j, L, c_i$ or $j, L, d_i$, respectively (to designate that they are the lower middle vertices in the $j$-th floor). We are especially interested in the vertices $\{c^{j,L, c_t}_1, \ldots, c^{j,L, c_t}_4, d^{j,L, c_t}_1, \ldots, d^{j,L, c_t}_4\}$ for $t \in \{5, 6, 7, 8\}$, which are the vertices in the middle of the \textcolor{red}{$\bJ_{2j + m - t - 1}$} gadgets attached to $c_5, \ldots, c_8$. We classify them into the following informal labels, for all $t \in \{5, 6, 7, 8\}$:
    \begin{itemize}
        \item $c^{j,L,c_t}_1$, $c^{j,L,c_t}_2$, $d^{j,L,c_t}_1$, and $d^{j,L,c_t}_2$ get the informal label $T(x_{2k-2j+2})$,
        \item $c^{j,L,c_t}_3$, $c^{j,L,c_t}_4$, $d^{j,L,c_t}_3$, and $d^{j,L,c_t}_4$ get the informal label $F(x_{2k-2j+2})$.
    \end{itemize}
    No other vertex gets an informal label.
    \item Next, we replace $q_1$ and $q'_1$ (in the lowest $\bI_{2 + m - t}$ gadget, i.e., the $1$st floor) with independent sets $S_{q_1}$ and $S_{q'_1}$ of size $m$ and $2m$ respectively. (Recall that $m$ is the number of clauses in the quantifier-free part $\psi$ of the input formula.)
    As before, replacing a vertex preserves the set of incoming and outgoing edges to and from every vertex in the replacing set. Here, $S_{q_1}$ consists of $m$ vertices $\{v_{C_1}, \ldots, v_{C_m}\}$, and $S_{q'_1}$ consists of $2m$ vertices $\{v'_{C_1}, \ldots, v'_{C_m}, \mathsf{null}_1, \ldots, \mathsf{null}_m\}$.
    \item Finally,
    for every clause $C_i$ in $\psi$, and for every literal $x_j$ or $\lnot x_j$ appearing in it, we add the edges $(v_{C_i}, v')$ and $(v'_{C_i}, v')$, where $v'$ is a vertex with informal label $T(x_i)$ (if the literal is $x_i$), or with informal label $F(x_i)$ (if the literal is $\lnot x_i$). Note that even after this process, the vertices $\mathsf{null}_1, \ldots, \mathsf{null}_m$ do not have out-neighbors.
\end{itemize}

\begin{figure*}
\begin{center}
\tikzset{Blank/.style={circle,fill=none,draw=none, inner sep=1pt, minimum size=1cc, line width=1pt}}
\tikzset{Black/.style={circle,fill=white,draw=black, inner sep=1pt, minimum size=1cc, line width=1pt}}
\tikzset{BlackS/.style={star,fill=white,draw=black, inner sep=1pt, minimum size=0.5cc, line width=1pt}}
\tikzset{BlackD/.style={diamond,fill=white,draw=black, inner sep=1pt, minimum size=0.5cc, line width=1pt}}
\tikzset{Big/.style={circle,fill=white,draw=black, inner sep=1pt, minimum size=5cc, line width=1pt}}

\begin{tikzpicture}[scale=.15]

\foreach \x in {1, ..., 16}
\node [Blank] (i\x) at (-2.8*\x, 0) {};
\foreach \x in {1, ..., 16}
\node [Blank] (j\x) at (2.8*\x, 0) {};
\node [Blank] (l) at(-8,-20)  {{\scriptsize $q$}};
\node [Blank] (r) at(8,-20)  {{\scriptsize  $q'$}};

\foreach \x in {1,2,3,4,9,10,11,12}
\filldraw[draw=none, fill=red!50, opacity=0.3] (i\x.center) -- (l.center) -- (r.center) -- (i\x.center) -- cycle;
\foreach \x in {5,6,7,8,13,14,15,16}
\filldraw[draw=none, fill=blue!50, opacity=0.3] (i\x.center) -- (l.center) -- (r.center) -- (i\x.center) -- cycle;
\foreach \x in {1,2,3,4,9,10,11,12}
\filldraw[draw=none, fill=blue!50, opacity=0.3] (j\x.center) -- (l.center) -- (r.center) -- (j\x.center) -- cycle;
\foreach \x in {5,6,7,8,13,14,15,16}
\filldraw[draw=none, fill=red!50, opacity=0.3] (j\x.center) -- (l.center) -- (r.center) -- (j\x.center) -- cycle;

\node [Black] (p) at(-8,20)  [label=left:{\scriptsize  $p_k$}]{};
\node [Black] (p') at(8,20)  [label=right:{\scriptsize  $p'_k$}]{};
\node [Black] (q) at(-8,-20)  {{\scriptsize $q$}};
\node [Black] (q') at(8,-20)  {{\scriptsize  $q'$}};
\foreach \x in {1, ..., 8}
\node [BlackD] (i\x) at (-2.8*\x, 0) {};
\foreach \x in {9, ..., 16}
\node [BlackS] (i\x) at (-2.8*\x, 0) {};
\foreach \x in {1, ..., 4}
\node [BlackS] (j\x) at (2.8*\x, 0) {};
\foreach \x in {5, ..., 16}
\node [BlackD] (j\x) at (2.8*\x, 0) {};

\foreach \x in {1, ..., 16}
\draw[line width=1pt,color=black,->] (p) -- (i\x);
\foreach \x in {1, ..., 16}
\draw[line width=1pt,color=black,->] (p') -- (j\x);

\node [Blank] (label) at (0, 10) {$\bI_{2k + m - t}$};
\node [Blank] (label) at (-30, -8) {\scriptsize $\bJ'_{2k + m - t - 1}$};
\node [Blank] (label) at (30, -8) {\scriptsize $\bJ_{2k + m - t - 1}$};
\node [Blank] (label) at (-5, -5) {\scriptsize $\bJ_{2k + m - t - 1}$};
\node [Blank] (label) at (5, -5) {\scriptsize $\bJ'_{2k + m - t - 1}$};

\def \off {40};

\foreach \x in {1, ..., 16}
\node [Blank] (i\x) at (-2.8*\x, 0-\off) {};
\foreach \x in {1, ..., 16}
\node [Blank] (j\x) at (2.8*\x, 0-\off) {};
\node [Blank] (l) at(-8,-20-\off)  {{\scriptsize $q$}};
\node [Blank] (r) at(8,-20-\off)  {{\scriptsize  $q'$}};

\foreach \x in {1,2,3,4,9,10,11,12}
\filldraw[draw=none, fill=red!50, opacity=0.3] (i\x.center) -- (l.center) -- (r.center) -- (i\x.center) -- cycle;
\foreach \x in {5,6,7,8,13,14,15,16}
\filldraw[draw=none, fill=blue!50, opacity=0.3] (i\x.center) -- (l.center) -- (r.center) -- (i\x.center) -- cycle;
\foreach \x in {1,2,3,4,9,10,11,12}
\filldraw[draw=none, fill=blue!50, opacity=0.3] (j\x.center) -- (l.center) -- (r.center) -- (j\x.center) -- cycle;
\foreach \x in {5,6,7,8,13,14,15,16}
\filldraw[draw=none, fill=red!50, opacity=0.3] (j\x.center) -- (l.center) -- (r.center) -- (j\x.center) -- cycle;

\node [Black] (p) at(-8,20-\off)  [label=left:{\scriptsize  $q_k = p_{k-1}$}]{};
\node [Black] (p') at(8,20-\off)  [label=right:{\scriptsize  $q'_k = p'_{k-1}$}]{};
\node [Black] (q) at(-8,-20-\off)  [label=left:{\scriptsize  $q_{k-1} = p_{k-2}$}]{};
\node [Black] (q') at(8,-20-\off)  [label=right:{\scriptsize  $q'_{k-1} = p'_{k-2}$}]{};
\foreach \x in {1, ..., 8}
\node [BlackD] (i\x) at (-2.8*\x, 0-\off) {};
\foreach \x in {9, ..., 16}
\node [BlackS] (i\x) at (-2.8*\x, 0-\off) {};
\foreach \x in {1, ..., 4}
\node [BlackS] (j\x) at (2.8*\x, 0-\off) {};
\foreach \x in {5, ..., 16}
\node [BlackD] (j\x) at (2.8*\x, 0-\off) {};

\foreach \x in {1, ..., 16}
\draw[line width=1pt,color=black,->] (p) -- (i\x);
\foreach \x in {1, ..., 16}
\draw[line width=1pt,color=black,->] (p') -- (j\x);

\node [Blank] (label) at (0, 10-\off) {$\bI_{2k + m - t - 2}$};
\node [Blank] (label) at (-30, -8-\off) {\scriptsize $\bJ'_{2k + m - t - 3}$};
\node [Blank] (label) at (30, -8-\off) {\scriptsize $\bJ_{2k + m - t - 3}$};
\node [Blank] (label) at (-5, -5-\off) {\scriptsize $\bJ_{2k + m - t - 3}$};
\node [Blank] (label) at (5, -5-\off) {\scriptsize $\bJ'_{2k + m - t - 3}$};

\def \off {90};

\foreach \x in {1, ..., 16}
\node [Blank] (i\x) at (-2.8*\x, 0-\off) {};
\foreach \x in {1, ..., 16}
\node [Blank] (j\x) at (2.8*\x, 0-\off) {};
\node [Blank] (l) at(-8,-20-\off)  {{\scriptsize $q$}};
\node [Blank] (r) at(8,-20-\off)  {{\scriptsize  $q'$}};

\foreach \x in {1,2,3,4,9,10,11,12}
\filldraw[draw=none, fill=red!50, opacity=0.3] (i\x.center) -- (l.center) -- (r.center) -- (i\x.center) -- cycle;
\foreach \x in {5,6,7,8,13,14,15,16}
\filldraw[draw=none, fill=blue!50, opacity=0.3] (i\x.center) -- (l.center) -- (r.center) -- (i\x.center) -- cycle;
\foreach \x in {1,2,3,4,9,10,11,12}
\filldraw[draw=none, fill=blue!50, opacity=0.3] (j\x.center) -- (l.center) -- (r.center) -- (j\x.center) -- cycle;
\foreach \x in {5,6,7,8,13,14,15,16}
\filldraw[draw=none, fill=red!50, opacity=0.3] (j\x.center) -- (l.center) -- (r.center) -- (j\x.center) -- cycle;

\node [Black] (p) at(-8,20-\off)  [label=left:{\scriptsize  $q_2 = p_1$}]{};
\node [Black] (p') at(8,20-\off)  [label=right:{\scriptsize  $q'_2 = p'_1$}]{};
\node [Big] (q) at(-8,-20-\off)  {};
\node [Big] (q') at(8,-20-\off)  {};
\foreach \x in {1, ..., 8}
\node [BlackD] (i\x) at (-2.8*\x, 0-\off) {};
\foreach \x in {9, ..., 16}
\node [BlackS] (i\x) at (-2.8*\x, 0-\off) {};
\foreach \x in {1, ..., 4}
\node [BlackS] (j\x) at (2.8*\x, 0-\off) {};
\foreach \x in {5, ..., 16}
\node [BlackD] (j\x) at (2.8*\x, 0-\off) {};

\node [Black] (C1l) at(-8-4,-20-\off)  {{\scriptsize $v_{C_1}$}};
\node [Black] (Cml) at(-8+4,-20-\off)  {{\scriptsize $v_{C_m}$}};

\node [Black] (C1r) at(8-3.5,-20+3-\off)  {{\scriptsize $v'_{C_1}$}};
\node [Black] (Cmr) at(8+3.5,-20+3-\off)  {{\scriptsize $v'_{C_m}$}};
\node [Black] (null1) at(8-3.5,-20-3-\off)  {{\scriptsize $\mathsf{null}_1$}};
\node [Black] (nullm) at(8+3.5,-20-3-\off)  {{\scriptsize $\mathsf{null}_m$}};

\node [Blank] (dots1) at(-8,-20-\off)  {{\scriptsize $\cdots$}};
\node [Blank] (dots2) at(8,-20-3-\off)  {{\scriptsize $\cdots$}};
\node [Blank] (dots3) at(8,-20+3-\off)  {{\scriptsize $\cdots$}};
\node [Blank] (dots4) at(-8,25-\off)  {$\vdots$};
\node [Blank] (dots5) at(8,25-\off)  {$\vdots$};

\foreach \x in {1, ..., 16}
\draw[line width=1pt,color=black,->] (p) -- (i\x);
\foreach \x in {1, ..., 16}
\draw[line width=1pt,color=black,->] (p') -- (j\x);

\node [Blank] (label) at (0, 10-\off) {$\bI_{2 + m - t}$};
\node [Blank] (label) at (-30, -8-\off) {\scriptsize $\bJ'_{1 + m - t}$};
\node [Blank] (label) at (30, -8-\off) {\scriptsize $\bJ_{1 + m - t}$};
\node [Blank] (label) at (-5, -5-\off) {\scriptsize $\bJ_{1 + m - t}$};
\node [Blank] (label) at (5, -5-\off) {\scriptsize $\bJ'_{1 + m - t}$};

\draw[line width=1pt,color=black,->]  (C1l) to[out=150,in=-150] (-12,-100);
\draw[line width=1pt,color=black,->]  (C1l) to[out=150,in=-150] (i6);

\end{tikzpicture}
\end{center}
\caption{The structure $\bA$, built using gadgets $\bI_{2+m-t}, \bI_{4+m-t}, \ldots, \bI_{2k + m - t}$ stacked together as shown. Only a few outgoing edges from the ``clause'' vertex $v_{C_1}$ are shown, and only to the gadget $\bI_2$. In general, there will be edges from the clause vertices $v_{C_i}$ and $v'_{C_i}$ to every gadget that contains vertices which have ``informal labels'' corresponding to literals that appear in the clause $C_i$ of the quantifier-free part of the formula $\varphi$.}
\label{figAPSPACE}
\end{figure*}
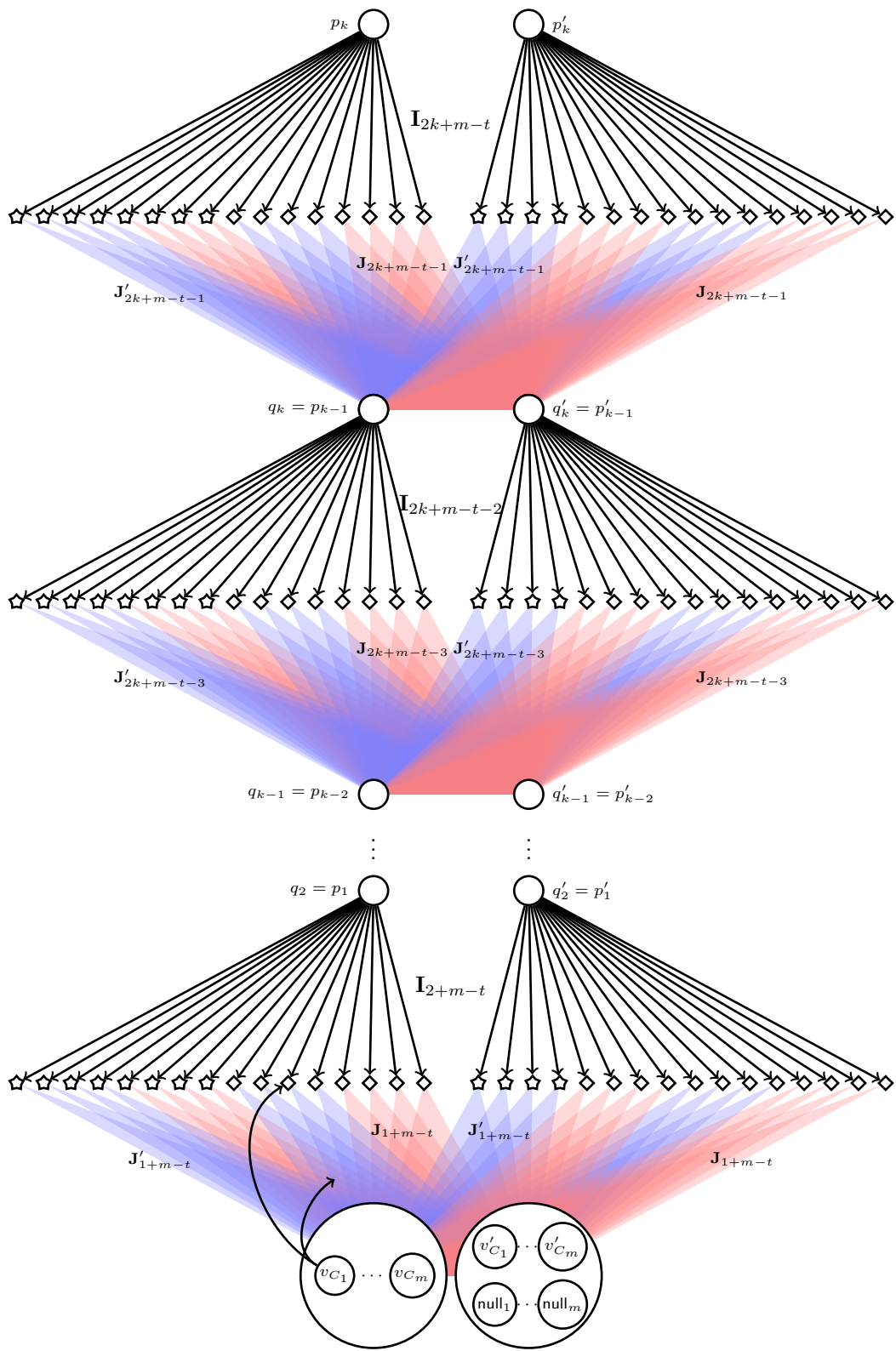

Now, let $a = p_k$, and $a' = p'_k$. We can now state and prove our main result.

\begin{restatable}{theorem}{winmslower}\label{thm:winmslower}
The problem $\winms$ is $\PSPACE$-hard.
\end{restatable}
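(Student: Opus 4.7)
The plan is to reduce $\maxqsat$ to $\winms$, following the template of the $\NP$-hardness proof of Section~\ref{sec:MSNP} but on the skyscraper $\bA$ and using the inapproximability in Theorem~\ref{qsatapprox}. Given an instance $\langle \exists x_1 \forall x_2 \cdots \exists x_{2k-1} \forall x_{2k}\, \psi,\, t\rangle$ with $\psi = C_1 \wedge \cdots \wedge C_m$, I will exhibit polynomial-time-computable round counts $R_{\mathrm{yes}}$ and $R_{\mathrm{no}}$, separated by a constant multiplicative factor (as functions of $m - t$), such that:
\begin{itemize}
\item if $\mathsf{Player}_\exists$ has a QBF strategy simultaneously satisfying at least $t$ clauses, then $\bS$ wins the $R_{\mathrm{yes}}$-round MS game on $(\{\langle \bA \mid a\rangle\}, \{\langle \bA \mid a'\rangle\})$;
\item if every QBF strategy of $\mathsf{Player}_\exists$ leaves strictly more than $m - t$ clauses unsatisfied, then $\bD$ wins the $R_{\mathrm{no}}$-round MS game on the same instance.
\end{itemize}
Scanning $t$ over polynomially many values and invoking a polynomial-time algorithm for $\winms$ would then approximate $\maxqsat$ within a constant factor, contradicting Theorem~\ref{qsatapprox}.

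For the $\bS$-wins direction, $\bS$ descends the skyscraper floor by floor, mirroring $\mathsf{Player}_\exists$'s QBF strategy. On floor $j$, which corresponds to the alternating pair $\exists x_{2k-2j+1}\, \forall x_{2k-2j+2}$, $\bS$ first plays on the upper middle vertex $c^{j,U}_t$ (some $t \in \{5,6,7,8\}$) on the left whose informal label encodes his $T/F$ choice for $x_{2k-2j+1}$; by Lemma~\ref{polarity2} he then spends $O(1)$ more rounds forcing $(q_j, q'_j) = (p_{j-1}, p'_{j-1})$. During this forcing, $\bD$'s oblivious responses in the red and blue sub-gadgets split her copies according to the labels $T(x_{2k-2j+2})$ and $F(x_{2k-2j+2})$ on the lower middle vertices, so the branching faithfully simulates the universal adversary's choice for $x_{2k-2j+2}$. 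By the Parallel Play Lemma (Lemma~\ref{lem:parallelplay}), $\bS$ coordinates the resulting exponentially many isomorphism-class sub-games with a single common sequence of sides, so all $k$ floors together consume only $O(k)$ rounds. The game then reduces to the $(S_{q_1}, S_{q'_1})$-layer as in Section~\ref{sec:MSNP}: each surviving branch corresponds to a complete variable assignment that, by assumption, leaves at most $m - t$ clauses unsatisfied, and $\bS$ eliminates every remaining matching pair by pebbling the $m - t$ relevant $\mathsf{null}_i$ vertices in $m - t$ further rounds, yielding $R_{\mathrm{yes}} = O(k) + (m - t) + O(1)$.

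For the $\bD$-wins direction, $\bD$ exploits the automorphisms of $\bI_j$, $\bJ_j$, and $\bJ'_j$, exactly as in the $j$-round lower bound of Lemma~\ref{polarity2}, to delay commitment on each floor until $\bS$ has invested enough moves. On each branch of her oblivious play, the sequence of informal labels pebbled across the floors encodes a complete assignment to $x_1, \ldots, x_{2k}$ in which $\bS$'s moves determine the existential values and $\bD$'s implicit choices determine the universal ones. By hypothesis, on at least one branch strictly more than $m - t$ clauses are unsatisfied, so strictly more than $m - t$ of the clause vertices $v_{C_i} \in S_{q_1}$ have no pebbled out-neighbor and still match with unused $\mathsf{null}_j$ vertices. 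Setting $R_{\mathrm{no}} = O(k) + (m - t)$, $\bS$ has at most $m - t$ rounds after the $O(k)$-round forcing phase to pebble null vertices, so at least one matching pair survives and $\bD$ wins.

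The hard part will be the bookkeeping in the parallel-play argument across the exponentially many QBF-induced isomorphism classes: $\bS$ must use one consistent sequence of sides for all sub-games on floor $j$ once the $T/F$-branching for $x_{2k-2j+2}$ kicks in, and the forcing inside each copy of $\bI_{2j+m-t}$ must be synchronized with that sequence. A secondary obstacle is tuning the gadget indices $2j + m - t$ and the exact values of $R_{\mathrm{yes}}, R_{\mathrm{no}}$ so that the resulting multiplicative gap lands inside the constant-factor inapproximability window of Theorem~\ref{qsatapprox}; if needed, this gap can be amplified by taking multiple disjoint copies of $\psi$ before building $\bA$.
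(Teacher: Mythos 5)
Your proposal follows essentially the same route as the paper's proof: the same skyscraper $\bA$ with floors $\bI_{2j+m-t}$, the same two implications (Spoiler wins the roughly $(2k+m-t+2)$-round game when $t$ clauses are simultaneously satisfiable, proved by descending the floors with parallel play and then spending the last $m-t+1$ moves on $\mathsf{null}$ vertices; Duplicator wins the roughly $(2k+m-t+1)$-round game otherwise), followed by scanning $t$ and invoking Theorem~\ref{qsatapprox}. The only slip is your framing that the two round thresholds must be ``separated by a constant multiplicative factor'': what the scan actually needs (and what your concrete values, like the paper's $2k+m-t+2$ versus $2k+m-t+1$, already give) is an additive $O(1)$ gap, which pins down $\opt(\varphi)$ to within an additive constant and hence yields ratio $1-O(1)/\opt(\varphi)$, beating the constant of Theorem~\ref{qsatapprox} with no need to amplify by taking disjoint copies of $\psi$.
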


\begin{proof}
We use the construction, $\bA$, and $\qsat$. In particular, given an instance $\langle \varphi, k\rangle$ of $\qsat$, we construct the directed graph $\bA$ and show that:
\begin{align}
    \langle \varphi, t\rangle \in \qsat &\implies \langle{2k + m - t + 2}, (\{\langle \bA ~|~ a\rangle\}, \{\langle \bA ~|~ a'\rangle\})\rangle \in \winms. \label{eq3}\\
    \langle \varphi, t\rangle \notin \qsat &\implies \langle{2k + m - t + 1}, (\{\langle \bA ~|~ a\rangle\}, \{\langle \bA ~|~ a'\rangle\})\rangle \notin \winms. \label{eq4}
\end{align}
We will then use the known inapproximability of $\maxqsat$ (Theorem \ref{qsatapprox}) to obtain our result.

\begin{itemize}
    \item \textbf{Satisfiability Implies Spoiler Wins the $(2k + m - t + 2)$-Round Game:} Suppose that $t$ clauses of $\varphi$ are simultaneously satisfiable. We wish to show that $\bS$ wins the $(2k + m - t + 2)$-round $\ms$ game on $(\{\langle \bA ~|~ a\rangle\}, \{\langle \bA ~|~ a'\rangle\})$. We will again use parallel play to partition the game into multiple parallel sub-games, each corresponding to a different isomorphism class, and then playing them in parallel. The main idea will be that in rounds $1$ through $2k$, $\bS$ and $\bD$ will set values for the variables $x_1, \ldots, x_{2k}$ in order (coincidentally, playing pebble $x_i$ in the round when variable $x_i$ is set), and the rest of the game will involve $\bS$ forcing $\bD$ to pick out clauses that are not satisfied. The round where $\bD$ is unable to do so, therefore, would (roughly) correspond to the point where clauses cannot be simultaneously satisfied any more.
    
    We focus once again first on the top-most floor of the skyscraper, the block $\bI_{2k + m - t}$. Suppose the strategy for $\mathsf{Player}_\exists$ is to set the variable $x_1$ to TRUE. $\bS$ now looks to play on vertices with informal label $T(x_1)$. Therefore, he makes his first round move (with pebble $x_1$) on the left, playing on the upper middle vertex $c^{k, U}_5$ or $c^{k, U}_6$ (since they both have informal label $T(x_1)$). $\bD$ has the ability to make copies for different responses. The only responses preserving isomorphism are when she plays on an outneighbor of $a'$, and all other copies can be discarded by Lemma \ref{lem:discard}. Partition the remaining pebbled structures on the right as:
    \begin{align*}
        \cB_1 &:= \{\langle \bA ~|~ a', c^{k, U}_i\rangle ~:~ i \in \{9, 10, 11, 12\}\} \\
        \cB_2 &:= \{\langle \bA ~|~ a', d^{k, U}_i\rangle ~:~ 9 \leq i \leq 20\}.
    \end{align*}

    $\bS$ now plays on the right (with pebble $x_2$), on these copies. On every pebbled structure in $\langle \bA ~|~ a', c^{k, U}_i\rangle$ in $\cB_1$, he plays on $c^{k, L, c^{k, U}_i}_2$. On every pebbled structure in $\cB_2$, he plays on a distinguishing neighbor of the element with pebble $x_1$. Redefine $\cB_1, \cB_2$ to these new sets (with the round-$2$ pebbles).
    
    Now, $\bD$ responds on the left in all possible ways, but the only responses to consider are when she plays on an out-neighbor of the element with pebble $x_1$ (or on a distinguishing neighbor). Consider each of her possible responses, and partition the left side accordingly. If she responds on a vertex of the form $c^{k, L, c^{k, U}_5}_i$ (which necessarily has an informal label $T(x_2)$ or $F(x_2)$), put the resulting pebbled structure in $\cA_1$. If she responds on a vertex that is a distinguishing neighbor of the element with pebble $x_1$, put the resulting pebbled structure in $\cA_2$. Finally, if she responds on a vertex of the form $d^{k, L, c^{k, U}_5}_i$, put the resulting pebbled structure in $\cA_3$.

    Now, $\bS$ responds on the left in round $3$, with pebble $x_3$. On every pebbled structure in $\cA_2$ (resp.~$\cA_3$), he plays on another distinguishing neighbor of the element with pebble $x_1$ (resp.~$x_2$). Finally, consider $\cA_1$, and note that for every pair of pebbled structures in $\cA_1\times\cB_1$, the pair $(q_k, q'_k)$ is forced. Now, $\bS$ considers each pebbled structure in $\cA_1$ in turn. On each such pebbled structure, he has played pebble $x_1$ on an element with informal label $T(x_1)$ or $F(x_1)$, and $\bD$ has responded with pebble $x_2$ on an element with informal label $T(x_2)$ or $F(x_2)$. These have effectively set the variables $x_1$ and $x_2$ on that structure. Consider the optimal $3$rd move of $\mathsf{Player}_\exists$ in the QBF game with the same setting of variables $x_1$ and $x_2$; whatever this move is, $\bS$ mimics that response by playing appropriately on the upper middle vertex $c_5^{k-1, U}$ or $c_6^{k-1, U}$ (if the optimal move for $\mathsf{Player}_\exists$ is to set $x_3$ to TRUE), or on the upper middle vertex $c_7^{k-1, U}$ or $c_8^{k-1, U}$ (if the optimal move for $\mathsf{Player}_\exists$ is to set $x_3$ to FALSE). $\bS$ does this for all structures in $\cA_1$. Now, consider the responses by $\bD$. The only responses maintaining an isomorphism with the structures in $\cA_2$ are responses of the same form in $\cB_2$. The only responses maintaining an isomorphism with $\cA_1$ or $\cA_3$ form disjoint subsets of $\cB_1$. Rename these subsets $\cB_1$ (maintaining an isomorphism with $\cA_1$) and $\cB_3$ (maintaining an isomorphism with $\cA_3$) respectively.
    
    Consider the sub-games $(\cA_i, \cB_i)$ for $i \in \{1, 2, 3\}$, and note that no pebbled structures from $\cA_i$ and $\cB_j$ form a matching pair, for $i \neq j$. Furthermore, $\bS$ can win the sub-games $(\cA_2, \cB_2)$ and $(\cA_3, \cB_3)$ in the next $2k + 1$ rounds just by counting, as described before in the proof of Theorem \ref{MS-NPhard}; regardless of which side he wants to play in, he can just play on another distinguishing neighbor of the element with pebble $x_1$ (or $x_2$), and run $\bD$ out of moves. (Indeed, note that he might need one additional round, since he eventually needs to play on the left to win the $(\cA_2, \cB_2)$-subgame and on the right to win the $(\cA_2, \cB_2)$-subgame.)
    
    Now, consider the sub-game $(\cA_1, \cB_1)$, where the pair $(q_k, q'_k)$ had been forced, and consider a pebbled structure in $\cB_1$. If $\bD$'s round-$3$ response has not been on an outneighbor of $q'_k$, then $\bS$ can just wait for an additional round (in order to play on the right side), and then play on the in-neighbor of the element with pebble $x_3$ in order to break all matching pairs; the remaining structures can once again be partitioned (as in $\cB_1 \sqcup \cB_2$) into the ones where the response is on an element $c^{k-1, U}_9, \ldots, c^{k-1, U}_{12}$, or an element of the form $d^{k-1, U}_i$. We can now keep going inductively.
    
    $\bS$ can follow this same strategy on these sub-games in parallel. Note that in the same round, on different pebbled structures, he can possibly play on a $T(x_3)$ or $F(x_3)$ vertex \emph{differently}, depending on what $\bD$'s response on that structure has been so far.
    
    In particular, inductively, $\bS$ plays rounds $2i + 1$ and $2i + 2$ on floor $k - i$ of the skyscraper, by playing left and right respectively. His round $(2i + 1)$ move consists of playing on a $c^{k-i, U}_j$-vertex with label $T(x_{k-2i+1})$ or $F(x_{k-2i+1})$, depending on the setting of the variables so far on that pebbled structure. His round $(2i + 2)$ move consists of playing on the right, in the appropriate $c^{k-i, L, c_r}_j$ vertex, and waiting for $\bD$'s response on the left side, which will necessarily be on a vertex with label $T(x_{k-2i+2})$ or $F(x_{k-2i+2})$ (and in some structures, playing on a distinguishing neighbor to run out $\bD$ in the other sub-game).
    
    Note that, after round $(2k + 1)$, $\bS$ is down to a large number of sub-games, but in each of them, he has either won by counting,
    or $(q_1, q'_1)$ is forced. Here, the additional $+1$ term comes from the extra round he might need in order to run $\bD$ out on the correct side, or in order to certify a response by $\bD$ on a different vertex even if there is a forced pair. At this point, each of the remaining pebbled structures on the left have effectively set the $2k$ variables $x_1$ through $x_{2k}$, with pebble $x_i$ being on an element with informal label either $T(x_i)$ or $F(x_i)$.
    
    Finally, in the $m - t + 1$ rounds $(2k + 2)$ through $(2k + m - t + 2)$, $\bS$ keeps playing on a different $\mathsf{null}_i$ vertex in $S_{q'_1}$ on all remaining pebbled structures on the right. Consider $\bD$'s responses on the left. She must respond on distinct vertices in $S_{q_1}$ on each pebbled structure on the left. Any such response, say $v_{C_i}$, picks out a $3$-CNF clause $C_i$ in $\psi$. Note that in order to have a chance to stay in the game, $\bD$ must make sure that $v_{C_i}$ does not have an out-neighbor with a pebble (since none of the $\mathsf{null}_i$ vertices have out-neighbors) on that structure. But in the structure $\bA$, the vertex $v_{C_i}$ has an out-neighbor to every literal appearing in it, and so the only way for $\bD$ to have a valid response is to find a vertex $v_{C_i}$, none of whose valid literals has been pebbled. This means that on that pebbled structure, the assignment of variables has missed every literal appearing in $C_i$, and so this assignment keeps the clause $C_i$ false. However, since $\bS$ played optimally on every copy, and since $\mathsf{Player}_\exists$ has a strategy to falsify at most $m - t$ clauses, this means that at most $m - t$ of $\bD$'s responses can maintain matching pairs with the other side. By round $(2k + m - t + 2)$,
    all matching pairs are broken, giving $\bS$ the win. 
    
    \item \textbf{Non-Satisfiability Implies Duplicator Wins the $(2k + m - t + 1)$-Round Game:} Consider the {\ef} game on the same instance, and suppose that $\varphi$ does not have $t$ clauses simultaneously satisfiable. We know then by the arguments from Theorem 6 in \cite{Pezzoli98} that $\bS$ cannot force the pair $(q_1, q'_1)$ in fewer than $2k + 1$ rounds. Therefore, he cannot exhibit the difference between the two sides in $2k + 1 + m - t$ rounds (since his only strategy is to keep playing on the $\mathsf{null}_i$ vertices until $\bD$ has no response). Note that since $t$ clauses are not simultaneously satisfiable, $\mathsf{Player}_\forall$ has a strategy that yields at msot $t - 1$ satisfied clauses, and hence at least $m - t + 1$ unsatisfied clauses. The structure at the end of the first $2k$ rounds, therefore, corresponds to this assignment, and so $\bD$ has the ability to respond for $m - t + 1$ additional rounds with vertices $v_{C_i}$ corresponding to unsatisfied clauses $C_i$. Therefore, $\bD$ wins the $(2k + m - t + 1)$-round {\ef} game. Therefore, our result is trivial, since if $\bD$ wins an $\ef$ game on an instance, she certainly wins an $\ms$ game on the same instance.  
\end{itemize}

\paragraph*{Hardness Using Inapproximability}

Now, consider the following algorithm (Algorithm\ \ref{protocol2}) that computes a $(1 - \varepsilon)$-approximation to $\maxqsat$ given a quantified $3$-CNF formula $\varphi$ (for arbitrarily small $\varepsilon > 0$), using successive calls to $\winms$.

\begin{lstlisting}[caption={Computing a $(1 - \varepsilon)$-approximation to $\maxqsat$, for arbitrarily small $\varepsilon > 0$.},label=protocol2,captionpos=t,float, escapeinside={(*}{*)}]
On input (*$\varphi$*): 
    For (*$t = m, \ldots, 1$*):
        Construct structure (*$\bA$*) as described;
        Run (*$\winms$*) on (*$\langle 2k + m - t + 1, (\{\langle\bA ~|~ a\rangle\}, \{\langle\bA ~|~ a'\rangle\}\rangle$*);
        if (*$\winms$*) outputs YES:
            output (*$t$*);
\end{lstlisting}

This algorithm outputs the minimum $t$ such that $\bS$ wins the $(2k + m  - t + 1)$-round game on $(\{\langle \bA ~|~ a\rangle\}, \{\langle \bA ~|~ a'\rangle\})$. To see that this works as claimed, note that, if $t'$ is the maximum number of simultaneously satisfiable clauses, then on the one hand, $\langle \varphi, t'+1\rangle \notin \qsat$, and so, by equation \eqref{eq4}, $\bD$ wins the game instance with $(2k + m - t')$ rounds or fewer; on the other hand, $\langle \varphi, t'\rangle \in \qsat$, and so, by equation \eqref{eq3}, $\bS$ wins the game instance with $(2k + m - t' + 2)$ rounds or more. Algorithm \ref{protocol2} therefore outputs a number between $t'-1$ and $t'$. The approximation ratio, therefore, is $1 - 1/t'$, which is better than any fixed constant when $t'$ is large enough. 

However, since by Theorem \ref{qsatapprox}, the problem $\maxqsat$ is $\PSPACE$-hard to approximate to some multiplicative constant, it follows that $\winms$ is $\PSPACE$-hard as well.
\end{proof}

We note that this same structure can be easily adapted for the $\PSPACE$-hardness for {\ef} games as well. This shows that the hardness result holds even on directed graphs, resolving (in the positive) a question from \cite{Pezzoli98}. We omit the straightforward details, but state this as a corollary.

\begin{corollary}
The problem $\winef$ is $\PSPACE$-hard, even on directed graphs.
\end{corollary}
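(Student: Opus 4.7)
The plan is to reuse the skyscraper construction $\bA$ from the proof of Theorem \ref{thm:winmslower} without modification, since $\bA$ is already a directed graph, and show that the bounds
\begin{align*}
    \langle \varphi, t\rangle \in \qsat &\implies \langle 2k+m-t+2, (\langle\bA ~|~ a\rangle, \langle\bA ~|~ a'\rangle)\rangle \in \winef, \\
    \langle \varphi, t\rangle \notin \qsat &\implies \langle 2k+m-t+1, (\langle\bA ~|~ a\rangle, \langle\bA ~|~ a'\rangle)\rangle \notin \winef,
\end{align*}
both hold. Once these are established, an essentially identical version of Algorithm \ref{protocol2} (with $\winef$ queries in place of $\winms$ queries) yields a $(1-1/t')$-approximation to $\maxqsat$, and the PSPACE-hardness follows from Theorem \ref{qsatapprox}.

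The Duplicator direction is already done: the Non-Satisfiability argument inside the proof of Theorem \ref{thm:winmslower} explicitly works in the \emph{EF} game (it invokes Theorem 6 of \cite{Pezzoli98} together with the ``run out the $\mathsf{null}_i$ vertices'' observation), and an EF win for Duplicator immediately gives an MS win only because of the inclusion EF $\Rightarrow$ MS, but the underlying statement is precisely the EF bound we need here. So nothing new is required for this direction.

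For the Spoiler direction, I would adapt the MS strategy from the proof of Theorem \ref{thm:winmslower} to the EF setting. The point is that in the EF game Spoiler can react to Duplicator's single committed move each round, so the parallel play bookkeeping collapses. Concretely, Spoiler plays rounds $2i+1$ and $2i+2$ inside floor $k-i$ of the skyscraper, exactly as in the MS proof: in round $2i+1$ he plays on the left on a $c^{k-i,U}_j$ with informal label $T(x_{2i+1})$ or $F(x_{2i+1})$, dictated by $\mathsf{Player}_\exists$'s optimal strategy applied to the partial assignment determined by Duplicator's prior responses; in round $2i+2$ he plays on the right on a lower middle vertex, which by Lemma \ref{polarity2} either forces the pair $(q_{k-i}, q'_{k-i})$ or else allows him to finish in the remaining rounds by exhibiting the difference between a $c$-vertex and a $d$-vertex using the distinguishing in-neighbors. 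After round $2k$ the pair $(q_1,q'_1)$ is forced and a full assignment to $x_1,\ldots,x_{2k}$ has been committed by the two players. In rounds $2k+1,\ldots,2k+m-t+2$, Spoiler then plays successively on $m-t+1$ distinct $\mathsf{null}_i$ vertices on the right; since $\mathsf{Player}_\exists$'s strategy falsifies at most $m-t$ clauses, Duplicator will be unable to produce a valid $v_{C_i}$ response for at least one of these moves, breaking the matching pair.

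The only real obstacle is handling Duplicator's ``wrong'' responses (playing on a $d$-vertex or on a vertex not matching Spoiler's informal label), and here Lemma \ref{polarity2} provides the necessary reserve: any such deviation can be punished within the remaining budget by Spoiler switching to play on the distinguishing in-neighbors of the relevant $c$- or $d$-vertex, exactly as in the EF analysis of the building-block gadget. Since this is the same local argument that already underwrites both Lemma \ref{polarity2} and the MS proof, no genuinely new combinatorics arises, and the $\PSPACE$-hardness of $\winef$ on directed graphs follows.
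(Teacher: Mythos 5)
Your proposal is correct and is essentially the adaptation the paper has in mind: the paper's own justification of this corollary is just the remark that the skyscraper (already a directed graph) adapts to {\ef} games with the details omitted, and your fleshing-out -- reusing the Duplicator bound, which the paper already argues directly in the {\ef} game, and replaying the Spoiler strategy without the parallel-play bookkeeping, then rerunning Algorithm \ref{protocol2} with $\winef$ queries against Theorem \ref{qsatapprox} -- is exactly that adaptation. One economy worth noting: the Spoiler direction needs no new strategy at all, since a separating formula with at most $2k+m-t+2$ quantifiers has quantifier rank at most $2k+m-t+2$, so the MS win in \eqref{eq3} already yields the {\ef} win via Theorems \ref{thm:MSfundamental} and \ref{thm:EFfundamental}, leaving the Duplicator bound (the {\ef}-game argument behind \eqref{eq4}) as the only genuinely {\ef}-specific ingredient.
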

\section{Concluding Remarks}
\label{sec:discussion}

\commentout{
\begin{table}[ht] 
\begin{center}
\begin{tabular}{|c|c|c|c|c|}\hline 
\textbf{Game} & \textbf{Capture} & \textbf{Upper} & \textbf{Lower} & \textbf{Reference} \\ \hline
  {\EF} & Quantifier Rank & $\PSPACE$ & $\PSPACE$-hard & Pezzoli \cite{Pezzoli98} \\ \hline
  ? & ? & ? & ? & ? \\ \hline
  ? & ? & ? & ? & ? \\ \hline
  ? & ? & ? & ? & ? \\ \hline
  Multi-Structural & Quantifier Number & $\NEXP$ & $\PSPACE$-hard & \textcolor{red}{this paper} \\ \hline
\end{tabular}
\caption{Landscape of two-player combinatorial games capturing first-order logical properties.} 
\label{landscape-table}
\end{center}
\end{table}
}

In this paper, we investigated the computational complexity of determining whether Spoiler wins a given instance of the {\ms} game. When the number of moves is fixed, this problem is solvable in logarithmic space. When the number of moves is part of the input, we showed that this problem is $\PSPACE$-hard and also solvable in non-deterministic exponential time. Table \ref{table:summary} summarizes the state of affairs concerning the complexity of determining the winner in the multi-structural game and the other three games discussed in the introduction.

\begin{table}[h!]
\begin{center}
\begin{tabular}{|c|c|c|c|} 
\hline
{\bf Game}  & {\bf Problem} & {\bf Complexity} & {\bf Reference} \\
\hline

\multirow{2}{9em}{\EF} & $\winef$ & $\PSPACE$-complete & \cite{Pezzoli98}  \\ 
& $\winef_m$, $m \geq 2$   & in $\LOGSPACE$  & \cite{Pezzoli98} \\ 
\hline

\multirow{2}{9em}{Pebble} & $\winpb$ & \textcolor{red}{open} &   \\ 
& $\winpb_k$, $k \geq 2$   & $\P$-complete  & \cite{GroheLk} \\ 
\hline

\multirow{2}{9em}{Existential Pebble} & $\winepb$ & $\EXP$-complete &  \cite{DBLP:conf/csl/KolaitisP03} \\ 
& $\winepb_k$, $k \geq 2$ & $\P$-complete  & \cite{DBLP:conf/csl/KolaitisP03} \\ 
\hline

\multirow{2}{9em}{Multi-Structural} & $\winms$ & $\PSPACE$-hard,   in $\NEXP$ & \textcolor{blue}{this work} \\ 
& $\winms_m$, $m \geq 2$   & in $\LOGSPACE$  & \textcolor{blue}{this work} \\
\hline
\end{tabular}
\caption{Summary of complexity results for determining the winner of combinatorial games.}
\label{table:summary}
\end{center}
\end{table}

The next step in this investigation is to close the gap between $\PSPACE$-hardness and membership in $\NEXP$ for $\winms$. We conjecture that $\winms$ is a $\NEXP$-complete problem, and hope that the work reported here will provide the impetus for further research on this conjecture.

We conclude by pointing out that all hardness and completeness results in Table \ref{table:summary} hold under the assumption that the input structures are \emph{unordered}, i.e., it is not assumed that one of the relations in each of the given structures is a total order on the universe of the structure. As is well known, many results in descriptive complexity (e.g., the Immerman-Vardi theorem \cite{DBLP:journals/iandc/Immerman86,DBLP:conf/stoc/Vardi82} that least fixed-point logic captures
polynomial time) depend crucially on the assumption that the structures are ordered. 
However, the reductions used to derive the complexity results in Table \ref{table:summary} do not work if the structures at hand are ordered.
Thus, it would be interesting to investigate the complexity of determining the winner in each of the games discussed here when the input structures are ordered.

\bibliography{bibliography}

\end{document}